\newtheorem{theorem}{Theorem}[section]
\newtheorem{lemma}[theorem]{Lemma}
\newtheorem{definition}[theorem]{Definition}	
\newtheorem{proposition}[theorem]{Proposition}
\newtheorem{corollary}[theorem]{Corollary}	
\newtheorem{remark}[theorem]{Remark}	
\newtheorem{problem}[theorem]{Problem}
\newtheorem{example}[theorem]{Example}
\newtheorem{assumption}[theorem]{Assumption}
\newcommand{\kar}[1]{{\color{black} #1}}
\newcommand{\essup}{${\rm ess sup}$}
\newcommand{\x}{\mathbf{x}}
\newcommand{\y}{\mathbf{y}}
\newcommand{\eL}{\mathbf{L}}
\newcommand{\ga}{\boldsymbol{\gamma}}
\begin{document}

\title{Density Stabilization Strategies for Nonholonomic Agents on Compact Manifolds}

\author{Karthik Elamvazhuthi and Spring Berman%
\thanks{
KE and SB were partially supported by ONR Young Investigator Award N000141612605. KE was also partially supported by AFOSR MURI FA9550-18-1-0502 and  ONR Grant No. N000142012093.}
\thanks{Karthik Elamvazhuthi is with the Department of Mechanical Engineering, University of California, Riverside, CA 92521, USA 
		{\tt\small karthike@ucr.edu}.}
\thanks{Spring Berman is with the School for Engineering of Matter, Transport and Energy, Arizona State University, Tempe, AZ, 85281 USA
		{\tt\small spring.berman@asu.edu}.}
}

\maketitle

\begin{abstract} 
 	In this article, we consider the problem of stabilizing a class of degenerate stochastic processes, which are constrained to a bounded Euclidean domain or a compact smooth manifold, to a given target probability density. This stabilization problem arises in the field of swarm robotics, for example in applications where a swarm of robots is required to cover an area according to a target probability density. Most existing works on modeling and control of robotic swarms that use partial differential equation (PDE) models assume that the robots' dynamics are holonomic, and hence, the associated stochastic processes have generators that are elliptic. We relax this assumption on the ellipticity of the generator of the stochastic processes, and consider the more practical case of the stabilization problem for a swarm of agents whose dynamics are given by a controllable driftless control-affine system. We construct state-feedback control laws that exponentially stabilize a swarm of nonholonomic agents to a target probability density that is sufficiently regular. State-feedback laws can stabilize a swarm only to  target probability densities that are positive everywhere. To stabilize the swarm to probability densities that possibly have disconnected supports, we introduce a semilinear PDE model of a collection of interacting agents governed by a hybrid switching diffusion process. The interaction between the agents is modeled using a (mean-field) feedback law that is a function of the local density of the swarm, with the switching parameters as the control inputs. We show that under the action of this feedback law,  the semilinear PDE system is globally asymptotically stable about the given target probability density. \kar{The stabilization strategies with and without agent interactions are verified numerically for agents that evolve according to the Brockett integrator;  the strategy with interactions is additionally verified for agents that evolve according to an underactuated system on the sphere $S^2$. }
\end{abstract}

\begin{IEEEkeywords}
 Mean-field control, Hypoelliptic operators, Nonholonomic systems, Multi-agent systems, Swarms, Semilinear PDE
\end{IEEEkeywords}

	\section{Introduction}

In recent years, there has been much work on the construction of decentralized control laws for multi-robot systems using {\it mean-field models} \cite{elamvazhuthi2019mean}, in which a large collective of agents is treated as a continuum. Mean-field based approaches for modeling and control have been used in the multi-agent control and swarm robotics literature for problems such as consensus \cite{valentini2017achieving}, flocking \cite{han2015styled}, task allocation  \cite{berman2009optimized,mayya2019closed}, and cooperative transport \cite{wilson2014design}. Similar problems have also been considered over the last two decades in the control and mathematics literature in the context of mean-field games \cite{lasry2007mean,huang2006large}, mean-field control \cite{fornasier2014mean,piccoli2015control,bonnet2019pontryagin}, and optimal transport \cite{benamou2000computational}.

One advantage of mean-field based approaches for decentralized control is that the constructed control laws are {\it identity-free}, that is, the control laws do not depend on the agents' identities. The identity-free nature of the control laws simplifies some aspects of their  implementation  on large swarms of homogeneous agents when compared to control laws that are identity-dependent. For instance,
suppose that a central supervisor observes the states of the agents via an overhead camera and uses these measurements to update  
state-feedback control laws, which it periodically broadcasts to the agents. If the control laws are identity-free, then the supervisor does not need to expend computational power to distinguish between individual agents. 
Another advantage of mean-field control approaches, from a theoretical point of view, is that as the number of agents tends to infinity, the mean-field behavior of the swarm is governed by a deterministic differential equation or difference equation model, 
even though each agent might exhibit stochasticity in its dynamics. Such models are more analytically tractable than models describing the dynamics of a large number of individual agents. 

In this article, we consider a mean-field stabilization problem motivated by coverage problems in multi-agent control. A classical approach to multi-agent coverage is described in \cite{cortes2004coverage}, which presents a distributed method for implementing Lloyd's algorithm for positioning  multiple agents in a domain according to a given probability density function. The stochastic task allocation problem considered in \cite{berman2009optimized}, where the goal is to stabilize a swarm of agents evolving according to a continuous-time Markov chain to a target distribution among a set of states (e.g., physical locations), can be viewed as a mean-field version of this coverage problem. Stochastic task allocation approaches have also been developed for swarms of agents that evolve according to discrete-time Markov chains \cite{accikmecse2015markov} and that follow control laws which depend on the local density of agents \cite{mather2011distributed,elamvazhuthi2018mean}. 
A drawback of Markov chain-based approaches to this problem 
is that the state space of the agents needs to be discretized beforehand.

In \cite{mesquita2008optimotaxis,mesquita2012jump}, the authors consider the problem of stabilizing a swarm of agents with general controllable dynamics to a target distribution. This approach has been extended to the case of holonomic agents on bounded domains \cite{elamvazhuthi2016coverage,elamvazhuthi2018bilinear} and compact manifolds without boundary \cite{elamvazhuthi2018nonlinear}. In these extensions, the control is either a diffusion coefficient or a velocity  field in the Fokker-Planck partial differential equation (PDE) \cite{risken1996fokker} that determines the spatio-temporal evolution of the probability density of the agents, each of which is governed by a reflected stochastic differential equation \cite{tanaka1979stochastic}. In \cite{biswal2019stabilization}, similar stochastic control laws are constructed for agents evolving according to a discrete-time deterministic nonlinear control system on a bounded Euclidean domain. 

An advantage of the stochastic coverage approaches in \cite{berman2009optimized,accikmecse2015markov,mesquita2008optimotaxis,mesquita2012jump,elamvazhuthi2016coverage,elamvazhuthi2018bilinear,elamvazhuthi2018nonlinear,biswal2019stabilization} over the classical coverage strategy in \cite{cortes2004coverage} is that they only require each agent's control action to depend on its own state. The stochasticity designed into the system ensures that the swarm reaches a target probability density. 
A disadvantage of these approaches is that agents do not stop switching between states even after the swarm has reached the target probability density, resulting in an unnecessary expenditure of energy. Moreover, a large number of agents is required for the swarm density to stabilize close to the target density. One way to resolve these issues is to design control laws that are functions of the local swarm density. Such density-dependent or {\it mean-field} feedback laws have been proposed for agents that evolve according to Markov chains on discrete state spaces  \cite{mather2011distributed,elamvazhuthi2018mean}, for agents that evolve according to ordinary differential equations on Euclidean domains \cite{eren2017velocity,krishnan2018distributed}, and for agents that evolve according to stochastic or ordinary differential equations on compact manifolds without boundary \cite{elamvazhuthi2018nonlinear}. The works \cite{eren2017velocity,krishnan2018distributed,elamvazhuthi2018nonlinear} assume that the agents are holonomic, and \kar{in order to achieve global asymptotic stability, the target distribution is required to be strictly positive everywhere on the domain.} 

In the context of these previous works, the \textbf{main contributions} of this article are the following:
	\begin{enumerate}
		\item  Extension of the stochastic multi-agent coverage approach developed by the authors in \cite{elamvazhuthi2016coverage,elamvazhuthi2018bilinear} to non-interacting agents with nonholonomic dynamics that evolve on a bounded subset of $\mathbb{R}^d$ or a compact manifold without boundary, given a target  probability density for the swarm that is bounded from below by a positive constant.
			\label{ct2}
		\item Development of a stochastic multi-agent coverage approach for interacting agents governed by the
		hybrid switching diffusion model introduced by the authors in \cite{elamvazhuthi2018optaut}. In this approach, a mean-field feedback law (i.e., a control law that depends on the local swarm density) is constructed to \kar{globally} asymptotically stabilize the swarm to any given target probability density, which  is not necessarily positive everywhere. 
			\label{ct3}
	\end{enumerate}
Contribution \ref{ct2} is partially motivated by the fact that extension of multi-agent control strategies designed for Euclidean state spaces to general manifolds \cite{sarlette2009consensus,bhattacharya2014multi} is important, given that many mechanical systems are naturally modeled on manifolds \cite{bullo2019geometric}. While the work \cite{mesquita2012jump}, as in Contribution \ref{ct2}, does consider agents with general nonholonomic dynamics, due to the assumption that the domain is unbounded, this work requires assumptions on the behavior of the target probability density at infinity. The extension of the coverage strategy presented in \cite{elamvazhuthi2016coverage,elamvazhuthi2018bilinear}  to the case of nonholonomic agents evolving on manifolds is complicated by the fact that the associated PDEs are not elliptic. Contribution \ref{ct3} improves over existing work on mean-field feedback laws \cite{eren2017velocity,krishnan2018distributed}, since our approach does not make strong assumptions on the regularity of solutions of the associated PDEs. Instead, we prove all the regularity required to enable the stability analysis, which makes the analysis much more technically involved. Moreover, we are able to stabilize a larger class of probability densities than those considered in \cite{elamvazhuthi2016coverage,elamvazhuthi2018bilinear,elamvazhuthi2018nonlinear,eren2017velocity,krishnan2018distributed}, which require that the target probability density is strictly bounded from below by a positive constant everywhere on the domain. A control law similar to the one described in Contribution \ref{ct3} was constructed by the authors in \cite{biswal2020stabilization} for a discrete-time control system. However, it is assumed in \cite{biswal2020stabilization} that the system is locally controllable within one time-step, and hence is fully actuated. Moreover, in contrast to the mean-field feedback laws constructed in \cite{eren2017velocity,krishnan2018distributed,elamvazhuthi2018nonlinear} in which the control input is the agents' velocity field, the control inputs that we design in Contribution \ref{ct3} are the transition rates of the hybrid switching diffusion process that describes the agents' dynamics. 

This article is organized as follows. In Section \ref{sec:not}, we establish
notation and provide some definitions that are used throughout
the article. In Section \ref{sec:subell}, we present and analyze the properties of the degenerate PDEs that describe the mean-field model in the case where the agents do not interact with one another. In Section \ref{sec:Disco}, we present a semilinear PDE mean-field model for stabilizing the density of a swarm of interacting agents and establish global asymptotic stability properties of the model. In Section \ref{sec:numsim}, we validate the control strategies presented in Sections \ref{sec:subell} and \ref{sec:Disco} with numerical simulations.

	\section{NOTATION}
	\label{sec:not}
	We denote the $n$-dimensional Euclidean space by $\mathbb{R}^n$. $\mathbb{R}^{n \times m}$ refers to the space of $n \times m$ matrices, and $\mathbb{R}_+$ refers to the set of non-negative real numbers. Given a vector $\mathbf{x}  \in \mathbb{R}^n$, $x_i$ denotes the $i^{th}$ coordinate value of $\mathbf{x}$. For a matrix $\mathbf{A} \in \mathbb{R}^{n \times m}$, $A_{ij} $ refers to the element in the $i^{th}$ row and $j^{th}$ column of $\mathbf{A}$. For a subset $B \subset \mathbb{R}^M$, ${\rm int}(B)$ refers to the interior of the set $B$. $\mathbb{C}$, $\mathbb{C}_-$, and $\bar{\mathbb{C}}_-$ denote the set of complex numbers, the set of complex numbers with negative real parts, and the set of complex numbers with non-positive real parts, respectively. $\mathbb{Z}_+$ refers to the set of positive integers.

	We denote by $\Omega$ an open, bounded, and connected subset of an $N$-dimensional smooth Riemannian manifold $M$ \cite{lee2001introduction,lee2018introduction} with a Riemannian volume measure $d\mathbf{x}$. The boundary of $\Omega$ is denoted by $\partial \Omega$. We denote by $\int_{\Omega}f(\mathbf{x})d \x$ the integral of a function $f:\Omega \rightarrow \mathbb{R}$ with respect to the Riemannian volume.
	
	For each $1\leq p< \infty$, we define $L^p(\Omega)$ as the Banach space of complex-valued measurable functions over the  set $\Omega$ whose absolute value raised to $p^{{\rm th}}$ power has finite integral. \kar{The space $L^1$ is equipped with the norm $\|f\|_1:= \int_{\Omega } |f(x)|dx$.} We define $L^{\infty}(\Omega)$ as the space of essentially bounded measurable functions on $\Omega$. The space $L^{\infty}(\Omega)$ is equipped with the norm
	$
	\|z\|_{\infty} =~ \essup_{\mathbf{x} \in \Omega} |z(\mathbf{x})|, 
	$
	where $ \essup_{\mathbf{x} \in \Omega}(\cdot)$ denotes the {\it essential supremum} attained by its argument over the domain $\Omega$.
	
	 For a given real-valued function $a \in L^{\infty}(\Omega)$, $L^2_a(\Omega)$ refers to the set of all functions $f$ such that the norm of $f$ is defined as
$
\|f\|_a: = (\int_\Omega|f(\mathbf{x})|^2a(\mathbf{x})d\mathbf{x} )^{{1/2}}< \infty.
$
We will always assume that the associated function $a$ is uniformly bounded from below by a positive constant, in which case the space $L^2_a(\Omega)$ is a Hilbert space with respect to the weighted inner product $\langle \cdot , \cdot \rangle_{a} :  L^2_a(\Omega) \times L^2_a(\Omega) \rightarrow \mathbb{R}$, given by
$
\langle f, g\rangle_{a} = \int_{\Omega} f(\mathbf{x})\bar{g}(\mathbf{x})a(\mathbf{x})d\mathbf{x}
$
for each $f,g \in L^2_a(\Omega)$, where $\bar{g}$ is the complex conjugate of the function $g.$ When $a =\mathbf{1}$, where $\mathbf{1}$ is the function that takes the value $1$ almost everywhere on $\Omega$, the space $L^2_a(\Omega)$ coincides with the space $L^2(\Omega)$. For a function $f \in L^2(\Omega)$ and a given constant $c$, we write $f \geq c$ to imply that $f$ is real-valued and $f(\mathbf{x})\geq c$ for almost every (a.e.) $\mathbf{x} \in \Omega$.

	Suppose $e^{Xt}$ is the flow generated by a vector field $X$. Then $X$ defines a differential operator on the set of smooth functions $C^{\infty}(M)$ through the following action,
	\begin{equation}
	\label{eq:diffop}
	(Xf)(\mathbf{x}) = \lim_{t \rightarrow 0}\frac{f(e^{tX}(\x)) - f(\mathbf{x})}{t}
	\end{equation}
	for all $\x \in \Omega$ and all $f \in C^{\infty}(\Omega)$.
	\kar{This is the differential geometric definition from \cite{lee2001introduction} of a vector field $X$ as an associated differential operator acting on the space of smooth functions. 
	
	Let $\mathcal{V} = \lbrace X_1,...,X_m \rbrace$, $m \leq N$, be a collection of smooth vector fields $X_i$, each defined as in \eqref{eq:diffop}.} Let $[X,Y]$ denote the Lie bracket of the vector fields $X$ and $Y$. \kar{We define $\mathcal{V}^0 =\mathcal{V}$. For each $i \in \mathbb{Z}_+$, we define in an iterative manner the set of vector fields $\mathcal{V}^i = \lbrace [X, Y]; ~X \in \mathcal{V}, ~Y \in \mathcal{V}^{j-1}, ~j=1,...,i \rbrace$.}   We will assume that the collection of vector fields $\mathcal{V}$ satisfies the {\it Chow-Rashevsky} condition \cite{agrachev2013control} (also known as {\it H\"{o}rmander's condition} \cite{bramanti2014invitation}), i.e., the Lie algebra generated by the vector fields $\mathcal{V}$, given by $\cup_{i =0}^r \mathcal{V}^i$, has rank $N$, for sufficiently large $r$.
	
A {\it horizontal curve} $\ga:[0,1] \rightarrow \Omega $ connecting two points $\x ,\y \in M$ is a Lipschitz curve in $\Omega$ for which there exist essentially bounded functions $a_i(t)$ such that 
	\begin{equation}
	\label{eq:ctraffhyp}
	\dot{\ga}(t) = \sum_{i = 1}^ma_i(t)X_i(\ga(t))
	\end{equation}
	for almost every $t \in [0,1]$, \kar{where $X_i \in \mathcal{V}$,} $\ga(0)=\x$, and $\ga(1)=\y$. Then $\mathcal{V}$  defines a distance $d :\Omega \rightarrow \mathbb{R}_{\geq 0}$ on $M$ as  
	\begin{eqnarray}
	d(\x,\y) = {\rm inf} ~ \lbrace \int_0^1 |\dot{\ga}(t)|dt;& \text{$\ga$ is a horizontal curve} \nonumber 
	\\ & \text{ connecting $\x$ and $\y$} \rbrace   \nonumber
	\end{eqnarray}
	
	\begin{definition}
	\label{def:epsdel}
	 The domain $\Omega \subset M$ is said to be $\epsilon-\delta$ if there exist $\delta >0$, $0<\epsilon \leq 1$ such that for any pair of points $\mathbf{p},\mathbf{q} \in \Omega$, if $d(\mathbf{p},\mathbf{q})  \geq \delta $, then there exists a continuous curve  $\ga:[0,T] \rightarrow \Omega$ such that $\ga(0) = \kar{\mathbf{p}}$, $\ga(T) = \kar{\mathbf{q}}$, and 
	 \begin{eqnarray}
	 & \int_0^1 |\dot{\ga}(t)|dt ~\geq~ \frac{1}{\epsilon} d(\kar{\mathbf{p},\mathbf{q}}) \nonumber \\
	& \hspace{-3mm} d(\mathbf{\kar{z}},\partial \Omega) ~\geq~ \epsilon ~{\rm min}(d(\kar{\mathbf{p},\mathbf{z}}),d(\kar{\mathbf{z},\mathbf{q}}))~~ \forall \kar{\mathbf{z}} \in \lbrace \ga(t):t\in [0,T] \rbrace  \nonumber 
	 \end{eqnarray}
    \end{definition}
	
	The metric $d$ on $\Omega$ is known as the {\it sub-Riemannian} or {\it Carnot-Caratheodory} metric 
\cite{agrachev2019comprehensive,bramanti2014invitation}. The topology induced by this metric on $d$ coincides with the usual Riemannian metric. We will assume that the radius $r (\Omega)$ of $\Omega$, given by $r (\Omega)= {\rm sup} \lbrace d(\x,\y); \x,\y \in M  \rbrace$, is finite. 
	
	 Given $a \in L^{\infty}(\Omega)$, with $a \geq c$ for a positive parameter $c>0$, we define the {\it weighted horizontal Sobolev space} $WH^1_a(\Omega) = \big \lbrace f \in L^2(\Omega): X_i(af) \in L^2(\Omega) \text{ for } 1 \leq i \leq m \big \rbrace$. We equip this space with the weighted horizontal Sobolev norm $\|\cdot\|_{WH^1_a}$, given by $\|f\|_{WH^1_a} = \Big( \|f\|^2_{2} + \sum_{i=1}^n\|  X_i(af)\|^2_{2}\Big)^{1/2} \nonumber$ for each $f \in WH^1_a(\Omega)$. Here, the derivative action of $X_i$ on a function $f$ is to be understood in the distributional sense. When $a=\mathbf{1}$, where $\mathbf{1}$ is the constant function that is equal to $1$ everywhere, we will denote $WH_a^1$ by $WH^1$.  
	
	Let $X$ be a Hilbert space with the norm $\|\cdot\|_X$. The space $C([0,T];X)$ consists of all continuous functions $u:[0,T] \rightarrow X$ for which  $ \|u\|_{C([0,T];X)} := 
 \max_{0 \leq t \leq T} \|u(t)\|_X ~<~ \infty$. If $Y$ is a Hilbert space, then $\mathcal{L}(X,Y)$ will denote the space of linear bounded operators from $X$ to $Y$. We will also use the multiplication operator $\mathcal{M}_a : L^2(\Omega) \rightarrow L^2(\Omega)$, defined as $(\mathcal{M}_a u)(\mathbf{x}) = a(\mathbf{x}) u(\mathbf{x})$ for a.e. $\mathbf{x} \in \Omega$ and each $u \in L^2(\Omega)$.
	
	We will need an appropriate notion of a solution of the PDEs considered in this paper. Toward this end, let $A$ be a closed linear operator that is densely defined on a subset $\mathcal{D}(A)$, the domain of the operator, of a Hilbert space $H$. We will define ${\rm spec}(A)$ as the set $\lbrace \lambda \in \mathbb{C}:\lambda\mathbb{I} -A $ is not invertible in $X \rbrace$, where $\mathbb{I}$ is the identity map on $X$. If $A$ is a bounded operator, then $\|A\|_{op}$ will denote the operator norm induced by the norm defined on $H$. From \cite{engel2000one}, we have the following definition.
	
	\begin{definition} 
		For a given time $T>0$, a \textbf{mild solution} of the ODE
		\begin{equation} 
		\label{eq:absode}
		\dot{u}(t) = Au(t); \hspace{2mm} u(0) = u_0 \in H
		\end{equation}
		is a function $u \in C([0,T];X)$ such that $u(t) = u_0 + A\int_0^tu(s)ds$ for each $t \in [0, T]$. 
	\end{definition}
	Under appropriate conditions satisfied by $A$, the mild solution of a PDE is given by a {\it strongly continuous semigroup} of linear operators, $(\mathcal{T}(t))_{t\geq 0}$, that are {\it generated} by the operator $A$ \cite{engel2000one}. 
	

	\begin{definition}
		A strongly continuous semigroup of linear operators $(\mathcal{T}(t))_{t \geq 0}$ on a Hilbert space $X$ is called \textbf{positive} if $u \in X$ such that $u \geq 0$ implies that $\mathcal{T}(t)u \geq 0$ for all $t \geq 0$.
	\end{definition}

\section{Stabilization without Agent Interactions}
\label{sec:subell}
Given the definitions in Section \ref{sec:not}, consider the following {\it reflected} stochastic differential equation (SDE) \cite{pilipenko2014introduction} constrained to a domain $\Omega \subseteq M$: 
\begin{eqnarray}
	\label{eq:SDEhol}
	d\mathbf{Z}(t) &=& \sum_{i=1}^m u_i(\mathbf{Z}(t))X_idt +\sqrt{2}\sum_{i=1}^m v_i(\mathbf{Z}(t))X_i\circ dW_i  \nonumber \\ &~& +	~\mathbf{n}(\mathbf{Z}(t))d\psi(t), \nonumber \\ 
	\mathbf{Z}(0) &=& ~ \mathbf{Z}_0,
\end{eqnarray}
where $\psi(t) \in \mathbb{R}$ is called the {\it reflecting function} or {\it local time} \cite{pilipenko2014introduction}, a stochastic process that constrains $\mathbf{Z}(t)$ to the domain $\Omega$; $\mathbf{n}(\mathbf{x})$ is the normal to the boundary at $\mathbf{x} \in \partial \Omega$; $W_i$ are  $m$ copies of the one-dimensional Wiener process; and $u_i$ and $v_i$ are $m$ feedback laws. 
In the above SDE \eqref{eq:SDEhol}, the notation $\circ$ is used to mean that the SDE should be interpreted in the {\it sense of Stratonovich} \cite{karatzas1998brownian}.
Let $y(\mathbf{x},t)$ denote the  probability density of the random variable $\mathbf{Z}(t)$, defined as $\mathbb{P}(\mathbf{Z}(t) \in A) = \int_{A} y(\mathbf{x},t)d\mathbf{x}$. In this section, we consider the following control problem:
\begin{problem}
	\label{prb:prb1}
Given a target probability density $f$ on $\Omega$, design control laws $u_i(\mathbf{x})$ and $v_i(\mathbf{x})$ in Eq. \eqref{eq:SDEhol}
such that the probability density $y(\mathbf{x},t)$of the 
stochastic process $\mathbf{Z}(t)$, which evolves according to Eq. \eqref{eq:SDEhol}, 
converges asymptotically to $f$. 
\end{problem}

The motivation for this problem comes from stochastic coverage applications in swarm robotics that are framed as follows. Let the random variable $\mathbf{Z}_j(t)$, $j \in \lbrace 1,...,N_p\rbrace$, denote the position of the $j^{th}$ robot in a swarm of $N_p$ robots at time $t$. This position evolves according to Eq. \eqref{eq:SDEhol}, in which $u_i$ and $v_i$ are control laws that govern each robot's motion. 
 \kar{Since each} robot follows the same control laws $u_i$ and $v_i$, 
 the random variables $\mathbf{Z}_i(t)$ are independent and identically distributed. Then, denoting by $\delta_\mathbf{x}$ the delta distribution at $\mathbf{x} \in M$, the {\it empirical distribution} $\frac{1}{N_p}\sum_{j=1}^{N_p}\delta_{\mathbf{Z}_{j}(t)}$, which represents the distribution of the robots in space, converges to the density $y(\mathbf{x},t)$ as $N_p \rightarrow \infty$ due to the {\it law of large numbers}. 

The stabilization problem \ref{prb:prb1} has been considered by the authors in \cite{elamvazhuthi2016coverage} for the case where the system  is {\it holonomic} and  the vector fields $X_i = \frac{\partial}{\partial x_i}$ are the standard coordinate vector fields. The goal in this section is to extend the results in \cite{elamvazhuthi2016coverage} to the general case where the number of vector fields $X_i$ is possibly less than the dimension $N$ of the state space $M$. Such density stabilization problems were first considered for the case where the domain $\Omega$ is the whole of the Euclidean space $\mathbb{R}^n$ in \cite{mesquita2008optimotaxis,mesquita2012jump}. When time is discrete, and the system is controllable in one time step, this problem has been considered in \cite{biswal2019stabilization}.

 The main difficulty in extending the results from \cite{elamvazhuthi2016coverage} is that when the number of control vector fields, $m$, is less than the dimension of the state space $M$, the {\it generator}
 $ \sum_{i = 1}^m (v_iX_i)^2 + u_iX_i$ of the stochastic process $\mathbf{Z}(t)$ is not {\it elliptic}, which makes standard results in the literature on parabolic PDEs 
 inapplicable. In particular, let $A =  \sum_{i = 1}^m (v_iX_i)^2$. The associated probability density $y(\mathbf{x},t)$ of the process $\mathbf{Z}(t)$ evolves according to the PDE 
\begin{eqnarray}
\label{eq:Mainsys1hol} 
&y_t = A^* y-  \nabla \cdot (\sum_{i=1}^m{u}_i(\mathbf{x})X_i y) & ~~in  ~~ \Omega \times [0,T] \nonumber \\ 
&y(\cdot,0) = y^0 & ~~in ~~ \Omega 
\end{eqnarray}
with {\it zero flux} boundary conditions, where $\nabla \cdot$ denotes the divergence operation with respect to the measure $d\mathbf{x}$, and $A^*$ is the adjoint of the operator $A$. The stabilization problem \ref{prb:prb1} is an open-loop control problem for the PDE \eqref{eq:Mainsys1hol}, in which the goal is to stabilize the solution $y(\mathbf{x},t)$ of \eqref{eq:Mainsys1hol} to a target function $f$. 

The operator $A$ is not elliptic in general, but only {\it hypoelliptic}. Particularly, if $f \in C^{\infty}_0(\Omega)$ has compact support $K$, then, due to the Chow-Rashevsky's Lie rank condition, if $u$ is a function on $\Omega$ such that $A u =f$, then $u$ is smooth on $K$  \cite{bramanti2014invitation}. Using this property of $A$, we will extend the stabilization results of \cite{elamvazhuthi2016coverage,elamvazhuthi2018bilinear} to the case where the agents have nonholonomic dynamics.

First, formally, we will provide a number of candidate control laws that are solutions to Problem \ref{prb:prb1}. Given these control laws, in Section \ref{sec:StAnalysis} we will present a stability analysis of a class of PDEs that coincide with \eqref{eq:Mainsys1hol} when the operators $X_i$ are formally skew-adjoint with respect to the volume form $d\mathbf{x}$, that is, $X_i^* = -X_i$. 

Suppose that the operators $X_i$ are formally skew-adjoint.  Let $f \in W^{1,\infty}(\Omega)$ be a positive function that is bounded from below by a positive number and for which $\int_{\Omega}f(\mathbf{x})d\mathbf{x}=1$. If we set $u_i(\cdot) = X_ig/g$ and $v_i(\cdot) = 1$ for each $i \in \lbrace 1,...,m \rbrace$ and all $t \geq 0$, then the PDE \eqref{eq:Mainsys1hol} becomes  
\begin{eqnarray}
	\label{eq:Mainsys1hol2} 
	&y_t = \sum_{i=1}^m X_i^2y -  \nabla \cdot (\sum_{i=1}^m \frac{X_if}{f}X_i y)~ & ~in  ~~ \Omega \times [0,T] \nonumber \\ 
	&y(\cdot,0) = y^0 & ~in ~~ \Omega 
\end{eqnarray}
Let $\nabla_H$ be the {\it horizontal gradient} operator, which maps functions to vector fields and is defined as $\nabla_H g = \sum_{i=1}^m (X_i g) X_i$
When the manifold $M$ is a {\it Lie group} $G$ and it is {\it unimodular}, i.e., the left- and right-Haar measures \cite{lee2001introduction} coincide, then we have that $\nabla \cdot \nabla_H (\cdot)= \sum_{i=1}^mX_i^2$ \cite{agrachev2009intrinsic}. Hence, if we set $y = f$, then 
\begin{equation}
\sum_{i=1}^mX_i^2  y -  \nabla \cdot \left(\sum_{i=1}^m \frac{X_if}{f}X_i y \right)=  \sum_{i=1}^mX_i^2 g - \nabla \cdot (\nabla_Hg)=0
\end{equation}

Thus, $f$ is an equilibrium solution of the PDE \eqref{eq:Mainsys1hol2}. Our goal in Section \ref{sec:StAnalysis} will be to show that $f$ is the globally exponentially stable equilibrium solution of PDE \eqref{eq:Mainsys1hol2} on the the set of square-integrable probability densities.  
 
Let $a = \frac{1}{f}$. Then the operator $\sum_{i=1}^m X_i^2 -  \nabla \cdot (\sum_{i=1}^m \frac{X_if}{f}X_i )$ can be alternatively expressed as  $ \nabla \cdot (\frac{1}{a(\mathbf{x})}\nabla_H(a(\mathbf{x}) \cdot))$.  Similarly, one can also consider the feedback laws $u_i = 0$ and $v_i = \frac{1}{f}$. In this case, the corresponding operator of interest is given by
\begin{equation}
\label{eq:diffgena}
A^* =  \nabla \cdot ({a(\mathbf{x})}\nabla_H(a(\mathbf{x}) \cdot))
\end{equation}  
This control law is similar to the one presented in \cite{elamvazhuthi2016coverage} for the case of holonomic agents, where instead of the Stratonovich integral, we considered the Ito integral, and the resulting generator was of the form  $\nabla \cdot (\nabla_H(a(\mathbf{x}) \cdot))$.

\subsection{Stability Analysis} \label{sec:StAnalysis}

The preceding discussion motivates us to study stability properties of PDEs associated with a class of hypoelliptic operators that have a given probability density as their equilibrium solution. In this section, we will provide a semigroup theoretic analysis of a class of such PDEs.  
There have been a number of works on semigroups generated by hypoelliptic operators on manifolds without boundary \cite{jerison1986estimates}, or manifolds with boundary under the Dirichlet boundary \cite{varopoulos2008analysis}. Due to the term $a(\mathbf{x})$, the operators that we consider are more general than those in   \cite{varopoulos2008analysis}. There has also been work on long-term behavior of hypoelliptic diffusions to uniform distributions for the special case of Carnot groups \cite{rossi2013large} and to more general equilibrium distributions, as well as on more general state spaces, for certain examples of hypoelliptic diffusions using log-Sobolev inequalities \cite{baudoin2012log}. In comparison, our results hold for a more general class of degenerate diffusions by establishing a spectral gap for the generator, which can be guaranteed to exist for equilibrium distributions that are bounded uniformly from above and below by positive numbers.

Before we present our stability analysis, we give some more preliminary definitions.
Given $a,b \in L^{\infty}(\Omega)$ such that $a \geq c$ and $b\geq c$ for some positive constant $c$, and $\mathcal{D}(\omega^b_a) = WH^1_a(\Omega)$, we define the sesquilinear form $\omega^b_a:\mathcal{D}(\omega^b_a) \times \mathcal{D}(\omega^b_a) \rightarrow \mathbb{C}$ as
\begin{equation}
	\label{eq:Clpgenform1hol}
	\omega^b_a(u,v) = \sum_{i= 1}^m\int_{\Omega}  b(\mathbf{x}) X_i ( a (\mathbf{x}) u(\mathbf{x})) \cdot X_i ( a (\mathbf{x})\bar{v}(\mathbf{x}))d\mathbf{x}
\end{equation} 
for each $u \in \mathcal{D}(\omega^b_a)$. 
We associate with the form $\omega^b_a$ an operator $A_a^b :\mathcal{D}(A_a^b)  \rightarrow  L^2_a(\Omega)$, defined as $ A_a^bu = v$, if $\omega_a^b(u,\phi) = \langle v , \phi \rangle_a $ for all $\phi \in \mathcal{D}(\omega^b_a)$ and for all $u \in \mathcal{D}(A_a^b) = \lbrace g \in \mathcal{D}(\omega_a^b): ~ \exists f \in L^2_a(\Omega) ~ \text{s.t.} ~  \omega_a^b(g,\phi)= \langle f, \phi \rangle_a ~ \forall \phi \in \mathcal{D}(\omega^b_a) \rbrace$. When the $X_i$ are formally skew-adjoint, the operator $A_a^b$ is a weak formulation of the the second-order partial differential operator $\sum_{i =1}^m X^*_i (b(\mathbf{x}) X_i (a(\mathbf{x} ) ~ \cdot ~))$.  An advantage of using this weak formulation of the operator, rather than  a strong formulation, is that one does not need to establish that the domain $\mathcal{D}(A_a^b)$ contains twice (weakly) differentiable functions, which might not be true in general, given the very weak regularity assumed on the boundary of the domain $\Omega$ \cite{jerison1989functional}. Formally, we will be studying properties of the following PDE, 
   \begin{eqnarray}
	\label{eq:Mainsysan} 
	&y_t = \sum_{i=1}^m X_i^*(b(\mathbf{x})X_i( a(\mathbf{x})y)) ~~ in  ~~ \Omega \times [0,T], 
\end{eqnarray}

Note that for $b=1/a$, $b=a$, and $b =\mathbf{1}$, we recover the operators introduced in the previous subsection.

The proofs of the results in this section follow closely, almost verbatim, to those for the elliptic case considered by the authors in \cite{elamvazhuthi2018bilinear}. Therefore, due to space limitations, we only include the proof of the stability result, which 
is stated in Theorem \ref{thm:stablin}. These results will be used extensively in Section \ref{sec:Disco}, where we consider the case in which the agents have local interactions with one another. The main technical difference between the proofs in \cite{elamvazhuthi2018bilinear} and the proofs of the results presented in this section is that here, we use the horizontal Sobolev spaces $WH^1({\Omega})$ to establish semigroup generation properties of the generator $A$, instead of the classical Sobolev space $H^1(\Omega)$. Due to the bracket generating property of the vector fields $\mathcal{V}$, it is known that the space $WH^1({\Omega})$ has many properties similar to the classical Sobolev space $H^1(\Omega)$ \cite{garofalo1996isoperimetric,nhieu2001neumann}.

We will need the following assumption for the results in  this section to hold true.
\begin{assumption}
\label{asmp1}
Only one of the following conditions holds:
	\begin{enumerate}
		\item The domain $\Omega$ is $\epsilon-\delta$  and  $M=\mathbb{R}^n$. \label{cond1}
		\item The manifold $M$ is compact and without a boundary. \label{cond2}
		\item The boundary $\partial \Omega $ of the domain $\Omega$ is $C^1$ and \kar{${\rm span} \{\mathcal{V} \} = T_\mathbf{x}M$} for all $\mathbf{x} \in \Omega$. \label{cond3}
		\end{enumerate}
\end{assumption}

\begin{lemma}
	\label{opprop1hol}
\begin{enumerate}
	\item 	The operator $A_a^b:\mathcal{D}(A_a^b) \rightarrow L^2_a(\Omega)$ is closed, densely defined, and self-adjoint. 
	\item 	The operator $A_a^b:\mathcal{D}(A_a^b) \rightarrow L^2_a(\Omega)$  has a purely discrete spectrum. 
	\item $-A^b_a$ generates a semigroup of operators $(\mathcal{T}^b_a(t))_{t \geq 0}$. The semigroup $(\mathcal{T}^{b}_a(t))_{t \geq 0}$ is positive and a contraction.
	\end{enumerate}

 \begin{proof}
 Consider the associated form $\omega_a^b$. This form is $\it{closed}$, i.e., the space $\mathcal{D} (\omega_a)$ equipped with the norm $\|\cdot\|_{\omega_a^b}$, given by 
$
\|u\|_{\omega_a^b} = (\|u\|^2_a+ \omega_a^b(u,u))^{1/2}
$
for each $u \in \mathcal{D}(\omega_a^b)$, is complete. This is true due to the fact that the multiplication map $u \mapsto  a \cdot u$ is an isomorphism from $WH^1_a(\Omega)$ to $WH^1(\Omega)$ and $WH^1(\Omega)$ is a Banach space. Moreover, the space $WH^1_a(\Omega)$ is dense in $L^2_a(\Omega)$. This follows from the inequality $\|au-av\|_2 \leq \|a\|_{\infty}\|u-v\|_2$ for each $u, v \in L^2(\Omega)$, the fact that the spaces $L^2_{\mathbf{1}}(\Omega)$ and $L^2_{a}(\Omega)$ are isomorphic, and the fact that $WH^1(\Omega)$ is dense in $L^2(\Omega)$. For the case when the domain is $\epsilon-\delta$ (Definition \ref{def:epsdel}) and $M= \mathbb{R}^n$, the density of the space $WH^1(\Omega)$ in $L^2(\Omega)$ has been established [Theorem 1.13]\cite{garofalo1996isoperimetric}. For the case when $M$ is compact, the density of the space follows trivially from the fact that the set of smooth functions on $M$ is dense in $L^2(\Omega)$. A similar result is also true for the case when the operator is elliptic, since $WH^1(\Omega)$ coincides with the usual Sobolev space [Theorem 2.9]\cite{aubin2013some}. In addition, it follows from the definition of the form $\omega_a^b$ that $\omega_a^b$ is \textit{symmetric}, meaning that $\omega_a^b(u,v) = \overline{\omega_a^b(v,u)}$ for each $u, v \in \mathcal{D}(\omega_a)$. The form $\omega_a^b$ is also \textit{semibounded}, i.e., there exists $m \in \mathbb{R}$ such that $\omega_a^b(u,u) \geq m \|u\|^2_a$ for each $u \in \mathcal{D}(\omega_a^b)$. In particular, this inequality is true for $m=0$ since $\omega_a^b(u,u)$ is non-negative for all $u \in \mathcal{D}(\omega_a^b)$. Hence, it follows from \cite{schmudgen2012unbounded}[Theorem 10.7] that the operator $A_a^b$ is self-adjoint.

To establish the discreteness of the spectrum of $A_a^b$, we need to establish that the space $WH^1_a(\Omega)$ is compactly embedded in $L^2_a(\Omega)$ whenever one of the conditions \ref{cond1}-\ref{cond3} in the statement of the proposition is true. This is true if  $WH^1(\Omega)$ is compactly embedded in the space $L^2(\Omega)$, which is known to be true when $\Omega$ is a subset of $\mathbb{R}^n$ and an $\epsilon$-$\delta$ domain [Theorem 1.27]\cite{garofalo1996isoperimetric} or when  ${\rm span} \{X_i(\mathbf{x}) \} = T_\mathbf{x}M$ [Theorem 2.33]\cite{aubin2013some}. For the case when $M$ is a compact manifold without a boundary, we note that $u \in \mathcal{D}(A^\mathbf{1}_\mathbf{1})$ implies that $u$ lies in some fractional Sobolev space $H^{r}(\Omega)$ with its fractional Sobolev norm uniformly bounded [Localization lemma]\cite{kohn1973pseudo}. Since the fractional Sobolev space $H^{r}(\Omega)$ is compactly embedded in $L^2(\Omega)$ whenever the exponent $r$ is positive, we can infer that $WH^1_a(\Omega)$ is compactly embedded in $L^2_a(\Omega)$. This implies that when $WH^1_a(\Omega) = \mathcal{D}(\omega_{a,b})$ is equipped with the norm $\|\cdot\|_{\omega_{a,b}}$, then it is  also compactly embedded in $L^2_a(\Omega)$. From \cite{schmudgen2012unbounded}[Proposition 10.6], it follows that $A_a^b$ has a purely discrete spectrum.
 \end{proof}
\end{lemma}
\kar{That the operator $A_a^b$ is densely defined follows from the fact that the domain of the form $\omega_a^b$, which is $WH_a^1(\Omega)$, is dense in $L^2_a(\Omega)$ when equipped with the norm $(\omega_a^b(u,u) + \|u\|^2_2)^{1/2}$.} The  result on discreteness of the spectrum of the operator $A^b_a$ follows from the compactness of the embedding $WH^1(\Omega)$ in $L^2(\Omega)$. The positivity properties of the semigroup $\mathcal{T}^b_{a}(t)$, we need the fact that if $f \in WH^{1}(\Omega)$, then $|f| \in WH^{1}(\Omega)$. This result is known for the case where $\Omega$ is a subset of $\mathbb{R}^n$. The proof for the general case where $\Omega$ is a manifold follows verbatim the results of \cite{garofalo1996isoperimetric,nhieu2001neumann}, since the proof only requires the density of the space $C^{\infty}(M)$ in $WH^1(\Omega)$, which can be verified. \kar{Using these facts, the proof of the previous lemma follows the proof of \cite{elamvazhuthi2018bilinear}[Lemma IV.1] very closely. The contractivity of the semigroup follows from \cite{ouhabaz2009analysis}[Proposition 1.51] due to the fact that the form $\omega_a^b$ is \textit{accretive}; that is, $ \omega_b^a(u,u) \geq 0$ for all $u \in \mathcal{D}(\omega_b^a)$.}

\begin{proposition}
\label{prop:garnh}
\cite{garofalo1996isoperimetric,nhieu2001neumann} Given $f \in WH^{1}(\Omega)$, we have that $|f| \in WH^{1}(\Omega)$.
\end{proposition}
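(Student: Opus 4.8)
The plan is to mimic the classical Euclidean proof that $W^{1,2}$ is closed under the operation $f \mapsto |f|$ (the standard truncation argument in Sobolev space theory), replacing the ordinary gradient by the horizontal gradient $\nabla_H = \sum_{i=1}^m (X_i\,\cdot)\,X_i$ and replacing the Meyers--Serrin density of $C^\infty$ in $W^{1,2}$ by the analogous density of $C^\infty(M)$ in $WH^1(\Omega)$. The target is the horizontal chain rule $X_i|f| = \operatorname{sgn}(f)\,X_i f$ a.e., from which $X_i|f| \in L^2(\Omega)$ is immediate and hence $|f| \in WH^1(\Omega)$, in fact with $\||f|\|_{WH^1} \le \|f\|_{WH^1}$. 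The two analytic facts I would lean on throughout are (i) that each $X_i$, understood as a distributional operator, is \emph{closed} on $L^2(\Omega)$ — so that $L^2$-convergence of a sequence together with $L^2$-convergence of its horizontal derivatives identifies the limit of the derivatives — and (ii) the density of $C^\infty(M)$ in $WH^1(\Omega)$ established in \cite{garofalo1996isoperimetric,nhieu2001neumann} under the hypotheses collected in Assumption \ref{asmp1}.

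First I would regularize the absolute value. Let $\phi_\epsilon(t) = \sqrt{t^2+\epsilon^2}-\epsilon$, which is smooth, nonnegative, globally Lipschitz with $|\phi_\epsilon'| \le 1$, satisfies $\phi_\epsilon(t) \le |t|$, and increases to $|t|$ as $\epsilon \downarrow 0$. The first step is the chain rule for a \emph{fixed} $\epsilon$: given $f \in WH^1(\Omega)$, choose $f_k \in C^\infty(M)$ with $f_k \to f$ in $WH^1(\Omega)$ using fact (ii). For smooth $f_k$ the composition $\phi_\epsilon \circ f_k$ is smooth and, since $X_i$ is a first-order derivation, the ordinary chain rule gives $X_i(\phi_\epsilon \circ f_k) = \phi_\epsilon'(f_k)\,X_i f_k$. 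Because $\phi_\epsilon$ is $1$-Lipschitz, $\phi_\epsilon \circ f_k \to \phi_\epsilon \circ f$ in $L^2(\Omega)$; and because $\phi_\epsilon'$ is bounded and continuous while $X_i f_k \to X_i f$ in $L^2$, a routine product-limit argument (split off $\phi_\epsilon'(f_k)(X_i f_k - X_i f)$, pass to a subsequence with $f_k \to f$ a.e., and apply dominated convergence to the remainder) yields $\phi_\epsilon'(f_k)\,X_i f_k \to \phi_\epsilon'(f)\,X_i f$ in $L^2(\Omega)$. Closedness of $X_i$ then promotes the chain rule to $f$ itself: $\phi_\epsilon \circ f \in WH^1(\Omega)$ with $X_i(\phi_\epsilon\circ f) = \phi_\epsilon'(f)\,X_i f$.

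The second step sends $\epsilon \downarrow 0$. Here $\phi_\epsilon(f) \to |f|$ in $L^2(\Omega)$ by dominated convergence, since $|\phi_\epsilon(f)| \le |f| \in L^2(\Omega)$. For the derivatives, $\phi_\epsilon'(f) = f/\sqrt{f^2+\epsilon^2} \to \operatorname{sgn}(f)$ pointwise a.e. with $|\phi_\epsilon'(f)| \le 1$, so $\phi_\epsilon'(f)\,X_i f \to \operatorname{sgn}(f)\,X_i f$ in $L^2(\Omega)$ by dominated convergence (dominated by $|X_i f|$). Invoking closedness of $X_i$ one last time identifies $|f| \in WH^1(\Omega)$ with $X_i|f| = \operatorname{sgn}(f)\,X_i f$, which completes the argument.

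I expect the only genuine obstacle to be input (ii) — the density of $C^\infty(M)$ in $WH^1(\Omega)$ — which is exactly the ingredient flagged as ``can be verified'' in the surrounding discussion. Its proof is not the routine Euclidean mollification: it requires mollification adapted to the sub-Riemannian structure and, under conditions \ref{cond1} and \ref{cond3} of Assumption \ref{asmp1}, care to approximate up to the (merely $\epsilon$--$\delta$ or $C^1$) boundary $\partial\Omega$; this is precisely what \cite{garofalo1996isoperimetric,nhieu2001neumann} supply. Once that density is granted, everything else is the standard closed-operator limiting scheme above, and the usual worry in such chain-rule arguments — the vanishing of $X_i f$ a.e. on $\{f=0\}$ — never needs to be treated separately, because the regularization $\phi_\epsilon$ keeps $\phi_\epsilon'$ bounded and continuous through both limits.
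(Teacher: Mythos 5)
Your proposal is correct and takes essentially the same route the paper intends: the paper offers no detailed proof of this proposition, instead noting that the standard Euclidean truncation argument ``follows verbatim'' the results of \cite{garofalo1996isoperimetric,nhieu2001neumann} once the density of $C^{\infty}(M)$ in $WH^1(\Omega)$ is granted. Your regularization $\phi_\epsilon(t)=\sqrt{t^2+\epsilon^2}-\epsilon$, combined with the smooth-approximation step and the closedness of each $X_i$ as a distributional operator on $L^2(\Omega)$, is exactly that standard scheme, with the key external ingredient (density of smooth functions) correctly identified and deferred to the same references the paper cites.
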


Using the above proposition and the established properties of the operator $A_a^b$, we can prove the following results on the semigroup generated by the operator $-A^b_a$ concerning its positivity and mass-conserving properties. These results will be used in the analysis presented in Section \ref{sec:Disco}.

\begin{corollary}
	\label{CorIIhol}
 \kar{The operator $-A^b_a$ generates a semigroup of operators $(\mathcal{T}^b_a(t))_{t \geq 0}$ acting on $L^2_a({\Omega})$. }The semigroup $(\mathcal{T}^{b}_a(t))_{t \geq 0}$ is positive.
Furthermore, if $a=b = \mathbf{1}$ is the constant function equal to $1$ almost everywhere, then $\|y^0  \|_{\infty} \leq 1$ implies that $\| \mathcal{T}^{b}_a(t)y^0  \|_{\infty} \leq 1$ for all $t\geq 0$. 
	
	Additionally, this semigroup has the following {\it mass conservation property}: if $y_0 \in L^2_a(\Omega)$, $y^0 \geq 0$ and $\int_{\Omega}y^0(\mathbf{x})d \mathbf{x} =1$, then $\int_{\Omega} (\mathcal{T}_a^b(t)y^0)(\mathbf{x})d \mathbf{x} = \int_{\Omega} (\mathcal{T}^b_a(t)y^0)(\mathbf{x})d \mathbf{x} = 1 $ for all $t \geq 0$. 
\end{corollary}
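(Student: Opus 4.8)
The plan is to observe that the generation of $(\mathcal{T}^b_a(t))_{t\geq 0}$ on $L^2_a(\Omega)$ and its positivity are already contained in Lemma \ref{opprop1hol}, so that only the two new assertions require proof: the $L^\infty$ bound in the special case $a=b=\mathbf 1$, and the mass conservation property. Throughout I would use that, since $A_a^b$ is the self-adjoint operator associated with the form $\omega_a^b$, one has $\langle A_a^b u,\phi\rangle_a = \omega_a^b(u,\phi)$ for every $u \in \mathcal{D}(A_a^b)$ and every $\phi \in \mathcal{D}(\omega_a^b)=WH^1_a(\Omega)$.

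For the $L^\infty$ bound I would invoke the invariance criterion for closed convex sets under the semigroup associated with a form, as in \cite{ouhabaz2009analysis}. With $a=b=\mathbf 1$ we have $L^2_a(\Omega)=L^2(\Omega)$ and $\omega_a^b(u,v)=\sum_{i}\int_\Omega X_i u\, X_i v\, d\mathbf{x}$ on real-valued functions; since the semigroup is positivity preserving it leaves the real functions invariant, so it suffices to treat real data. I would take $C=\{u\in L^2(\Omega):|u|\leq 1 \text{ a.e.}\}$, whose $L^2$-orthogonal projection is the truncation $Pu=(u\wedge 1)\vee(-1)$. The criterion reduces to two verifications: that $P$ maps $\mathcal{D}(\omega_a^b)=WH^1(\Omega)$ into itself, and that $\omega_a^b(Pu,u-Pu)\geq 0$ for all $u\in WH^1(\Omega)$. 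The first is a consequence of Proposition \ref{prop:garnh}: since $|f|\in WH^1(\Omega)$ whenever $f\in WH^1(\Omega)$, the space $WH^1(\Omega)$ is a lattice and $Pu$ is built from $u$ and constants by finitely many $\max$ and $\min$ operations. For the second I would use the horizontal analog of the fact that weak derivatives vanish almost everywhere on level sets: on $\{|u|\leq 1\}$ one has $u-Pu=0$ and hence $X_i(u-Pu)=0$ a.e., whereas on $\{|u|>1\}$ the function $Pu$ is constant and hence $X_iPu=0$ a.e. Thus the integrand $\sum_i X_i(Pu)\,X_i(u-Pu)$ vanishes a.e., giving $\omega_a^b(Pu,u-Pu)=0$, so $C$ is invariant and the $L^\infty$ contraction follows.

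For mass conservation I would test the evolution against $\mathbf 1/a$. Because $a\geq c>0$ and $\Omega$ has finite volume, $1/a$ is bounded, lies in $L^2_a(\Omega)$, and belongs to $WH^1_a(\Omega)=\mathcal{D}(\omega_a^b)$ since $X_i(a\cdot(1/a))=X_i(\mathbf 1)=0$. Setting $m(t)=\int_\Omega(\mathcal{T}_a^b(t)y^0)\,d\mathbf{x}=\langle \mathcal{T}_a^b(t)y^0,1/a\rangle_a$, I would first take $y^0\in\mathcal{D}(A_a^b)$, so that $t\mapsto\mathcal{T}_a^b(t)y^0$ is differentiable with derivative $-A_a^b\mathcal{T}_a^b(t)y^0$ and $m'(t)=-\omega_a^b(\mathcal{T}_a^b(t)y^0,1/a)$. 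Each term of this form carries the factor $X_i(a\cdot(1/a))=X_i(\mathbf 1)=0$, so $m'(t)=0$ and $m(t)=\int_\Omega y^0\,d\mathbf{x}=1$. The general case $y^0\in L^2_a(\Omega)$ follows by density of $\mathcal{D}(A_a^b)$ in $L^2_a(\Omega)$ together with the continuity on $L^2_a(\Omega)$ of $y^0\mapsto\mathcal{T}_a^b(t)y^0$ and of the functional $y\mapsto\langle y,1/a\rangle_a$.

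I expect the main obstacle to be the form inequality in the $L^\infty$ step, namely justifying rigorously that the horizontal vector fields $X_i$ annihilate a $WH^1$ function almost everywhere on its level sets; this is the point at which the sub-Riemannian structure, rather than the Euclidean chain rule, must be used, and it underlies the truncation argument borrowed from \cite{ouhabaz2009analysis}. By contrast, the mass conservation step collapses to the algebraic identity $X_i(\mathbf 1)=0$ once the test function $1/a$ is identified.
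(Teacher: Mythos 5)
Your proposal is correct and follows essentially the same route as the paper: the paper omits a detailed proof of this corollary, deferring to the elliptic case in \cite{elamvazhuthi2018bilinear}, but the ingredients it cites --- the lattice property of Proposition \ref{prop:garnh}, the generation, positivity and contractivity facts of Lemma \ref{opprop1hol}, and the invariance criteria of \cite{ouhabaz2009analysis} --- are exactly the ones you deploy. Your truncation argument via the invariance of the convex set $\lbrace |u|\leq 1\rbrace$ for the $L^{\infty}$ bound, and the test-function argument with $1/a$ (reducing to $X_i(\mathbf{1})=0$) for mass conservation, are the fleshed-out versions of that referenced proof.
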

\begin{proof}
  First, we note that the operator $-A_a^b$ is \textit{dissipative}, i.e., $\|(\lambda + A_a^b)u\|_a \geq \lambda \|u\|_a$ for all $\lambda>0$ and all $u \in \mathcal{D}(A_a^b)$, since $\omega_a(u,u) \geq 0$ for all  $u \in \mathcal{D}(\omega_a^b)$. Next, we note that $-A_a^b$ is self-adjoint, and hence the adjoint operator $-(A_a^b)^*$ is dissipative as well. It follows from a corollary of the {\it Lumer-Phillips theorem} \cite{engel2000one}[Corollary II.3.17] that $-A_a^b$ generates a semigroup of operators $(\mathcal{T}_a^b(t))_{t \geq 0}$ that solve the PDE \eqref{eq:Mainsysan} in the mild sense. 

 From Proposition \ref{prop:garnh}, we have that $v \in WH^1(\Omega)$ implies that $|v| \in WH^1(\Omega)$ whenever $v$ is only real-valued. This implies that if $u \in \mathcal{D}(\mathcal{\omega}_a)$, then $|{\rm Re}(u)| \in \mathcal{D}(\mathcal{\omega}_a^b)$, where ${\rm Re}(\cdot)$ denotes the real component of its argument. Then the  positivity of the semigroup follows from \cite{ouhabaz2009analysis}[Theorem 2.7]. 

To prove the last statement in the corollary, consider the closed convex set $C = \lbrace u \in L^2(\Omega); ~{\rm Re}( u) = u, ~ u(\mathbf{x}) \leq 1 ~ {a.e.} 	~\mathbf{x} \in \Omega \rbrace$. The projection of a function $u \in L^2_a(\Omega)$ onto the set $C$ can be represented by the (nonlinear) operator $P$, given by $P u = {\rm Re}(u) \wedge 1/a = \frac{1}{2}{\rm Re} (u) + \frac{1}{2}|{\rm Re} (u)-1|$. If $u \in \mathcal{D}(\omega_a)$, then it follows from the chain rule that $\nabla (Pu) = \frac{1}{2}{\rm sign}({\rm Re}(u) - 1 ) \nabla ({\rm Re}(u))+\frac{1}{2}\nabla ({\rm Re}(u))$. Hence, it follows that $\omega_\mathbf{1}^\mathbf{1}(Pu,Pu) \leq   \omega_\mathbf{1}^\mathbf{1}(u,u)$ for all $u \in \mathcal{D}(\omega_\mathbf{1}^\mathbf{1})$. According to \cite{ouhabaz2009analysis}[Theorem 2.3], this implies that the set $C$ is invariant under the positive semigroup  $(\mathcal{T}^\mathbf{1}_\mathbf{1}(t))_{t \geq 0}$;  therefore, we can conclude that if $\|y^0  \|_{\infty} \leq 1$, then $\|\mathcal{T}_\mathbf{1}^\mathbf{1}(t)y^0  \|_{\infty} \leq 1$ for all $t\geq 0$.

Lastly, we consider the mass conservation property. Let $\int_\Omega y^0(\mathbf{x}) d \mathbf{x} = 1 $ such that $y^0 \in L^2_a(\Omega)$. Then $\int_{\Omega}(y(\mathbf{x},t)-y^0( \mathbf{x}))d \mathbf{x} =  -\int_\Omega A^b_a(\int_0^t y(\mathbf{x},s) ds)d \mathbf{x}= -\omega^b_a(\int_0^t y(\mathbf{x},s) ds,1/a)=0$ for all $t \geq 0$. Hence, the integral preserving property of the semigroup holds. 
\end{proof}

Finally, we establish the following important result on the long-term stability properties of the semigroups associated with the operators in $\sum_{i =1}^m X^*_i (b(\mathbf{x}) X_i (a(\mathbf{x} ) ~ \cdot ~))$.

\begin{theorem}
\label{thm:stablin}
\textbf{(Exponential stability of semigroup)} \newline
    	The semigroup $(\mathcal{T}_a^b(t))_{t \geq 0}$ generated by the operator $-A_a^b$ is analytic.
Moreover, $0$ is a simple eigenvalue of 
$-A_a^b$ corresponding to the eigenvector $f=1/a$. 
	Hence, if $y^0 \geq 0$ and $\int_{\Omega}y^0(\mathbf{x})d \mathbf{x} = \int_{\Omega}f(\mathbf{x})d \mathbf{x} = 1$, then the following estimate holds  for some positive constants $M_0, \lambda$ and all $t \geq 0$:
	\begin{eqnarray}
	\|\mathcal{T}_a^b(t)y^0-f\|_{a} ~&\leq&~ M_0 e^{-\lambda t}\|y^0-f\|_a \label{eq:expcn11hol}   
	\end{eqnarray}
\end{theorem}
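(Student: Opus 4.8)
The plan is to prove the three assertions of Theorem \ref{thm:stablin}---analyticity, simplicity of the zero eigenvalue with eigenvector $f=1/a$, and the exponential decay estimate---in sequence, exploiting the self-adjointness and discreteness of the spectrum already furnished by Lemma \ref{opprop1hol}.

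First, for analyticity, I would note that Lemma \ref{opprop1hol} already shows $A_a^b$ is self-adjoint, and the accretivity of $\omega_a^b$ (recorded in the discussion following that lemma) gives $A_a^b \geq 0$, so ${\rm spec}(A_a^b) \subset [0,\infty)$. A non-negative self-adjoint operator on a Hilbert space is sectorial and generates a bounded analytic semigroup, which I would conclude by invoking the standard semigroup result \cite{engel2000one,ouhabaz2009analysis}. Next, I would verify that $f=1/a$ lies in the kernel by a direct computation in the form: $\omega_a^b(1/a,\phi) = \sum_{i=1}^m\int_{\Omega} b(\x)\, X_i\!\big(a(\x)\cdot\tfrac{1}{a(\x)}\big)\, X_i(a(\x)\bar\phi(\x))\, d\x = \sum_{i=1}^m\int_{\Omega} b\, X_i(\mathbf{1})\, X_i(a\bar\phi)\, d\x = 0$, since each $X_i$ annihilates the constant function $\mathbf{1}$. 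Hence $A_a^b f = 0$ and $0 \in {\rm spec}(A_a^b)$.

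The crux of the argument is simplicity of this eigenvalue, and I expect this to be the main obstacle, since it is exactly where the nonholonomic (non-elliptic) structure must substitute for classical elliptic regularity. Taking any $u \in \mathcal{D}(A_a^b)$ with $A_a^b u = 0$, I would compute $0 = \langle A_a^b u, u\rangle_a = \omega_a^b(u,u) = \sum_{i=1}^m \int_{\Omega} b(\x)\,|X_i(a(\x)u(\x))|^2\, d\x$, and since $b \geq c > 0$ this forces $X_i(au)=0$ distributionally for every $i=1,\dots,m$. The Chow-Rashevsky condition now enters decisively: by hypoellipticity $au$ is smooth, and from $X_i(au)=0$ for all $i$ every iterated Lie bracket in $\cup_i \mathcal{V}^i$ also annihilates $au$; as these brackets span $T_\x M$ at every point, $au$ has vanishing differential and is therefore constant on the connected set $\Omega$. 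Thus $u \in {\rm span}\{1/a\}$, the kernel is one-dimensional, and $0$ is a simple eigenvalue.

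Finally, for the decay estimate, discreteness of the spectrum together with the simplicity and isolation of the eigenvalue $0$ yields a spectral gap $\lambda := \min\big({\rm spec}(A_a^b)\setminus\{0\}\big) > 0$. Let $P_0$ denote the $\langle\cdot,\cdot\rangle_a$-orthogonal projection onto $\ker A_a^b = {\rm span}\{f\}$. The mass hypotheses give $\langle y^0, f\rangle_a = \int_{\Omega} y^0\, d\x = 1$ and $\langle f, f\rangle_a = \int_{\Omega} f\, d\x = 1$, so that $P_0 y^0 = f$ and hence $y^0 - f = (I-P_0)y^0$. Since $\mathcal{T}_a^b(t)f = f$, I obtain $\mathcal{T}_a^b(t)y^0 - f = \mathcal{T}_a^b(t)(y^0-f)$, and the spectral theorem for the self-adjoint operator $A_a^b$ gives $\|\mathcal{T}_a^b(t)(I-P_0)\|_{op} \leq e^{-\lambda t}$, which yields \eqref{eq:expcn11hol} with $M_0 = 1$.
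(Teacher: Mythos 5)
Your proposal is correct, and its skeleton matches the paper's: vanishing of the form forces $X_i(au)=0$, H\"{o}rmander hypoellipticity upgrades $au$ to a smooth function, the bracket-generating condition forces constancy, and the spectral gap of the self-adjoint operator with discrete spectrum yields the decay estimate. Two of your steps, however, take genuinely different routes. For the constancy step, the paper first treats the unweighted operator $A^{\mathbf{1}}_{\mathbf{1}}$ and argues \emph{globally}: by the Chow--Rashevsky theorem any two points of $\Omega$ are joined by a horizontal curve $\ga$, and since $\frac{d}{dt}u(\ga(t))=\sum_i a_i(t)X_iu(\ga(t))=0$, the function is constant along such curves, hence constant on $\Omega$; the weighted case is then obtained by the substitution $f=1/a$. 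You instead argue \emph{infinitesimally} and directly on $A_a^b$: once $au$ is smooth, $X_i(au)=0$ for all $i$ propagates to all iterated Lie brackets, and since these span $T_{\x}M$ by the rank condition, $d(au)=0$, so $au$ is constant on the connected set $\Omega$. Your version is arguably more elementary in that it uses only the rank hypothesis itself rather than the (deeper) sub-Riemannian connectivity theorem, though both hinge on the same hypoelliptic regularity to make the pointwise computations meaningful. For the decay estimate, the paper appeals to the theory of compact analytic semigroups (compactness of the embedding $\mathcal{D}(A_a^b)\subset WH^1(\Omega)\hookrightarrow L^2(\Omega)$), reducing the stability claim to identifying the kernel; you instead invoke the spectral theorem for the self-adjoint operator in $L^2_a(\Omega)$, check via the mass normalization that $P_0y^0=f$, and obtain the estimate with the explicit (and sharp) constant $M_0=1$, which the paper's softer semigroup argument does not provide. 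Both routes are sound; yours is more self-contained on this point, while the paper's compactness argument is the one reused later in Section IV.
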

\begin{proof}
	 The operator $A_a^b$ is self-adjoint. Hence, its spectrum lies in $[0,\infty)$. From this, it follows that the corresponding semigroup generated by $-A_a^b$ is  an analytic semigroup \cite{lunardi2012analytic}[Chapter II].  
  
  Next, we prove the stability property. \kar{The semigroup is compact due to the compactness of 
 the embedding $\mathcal{D}(A_a^b) \subset WH^1(\Omega)$ in $L^2(\Omega)$. Since the semigroup is analytic and compact,} in order to establish \kar{its  stability properties,   it is sufficient to identify the eigenvectors associated with the eigenvalue $0$.} In the proof of the corresponding result in \cite{elamvazhuthi2018bilinear},  we used the Poincar\'e inequality to establish the uniqueness of the eigenvector of constant functions, corresponding to the eigenvalue $0$ of the Laplacian $\Delta$. It is not clear whether the Poincar\'e inequality holds for the operator $-A^b_a$ for condition \ref{cond2} in Assumption \ref{asmp1}. Hence, instead of using a Poincar\'e inequality, we will prove that the kernel of the operator $ -A^b_a$ consists only of constant functions. Suppose $u \in \mathcal{D}(A)$ is such that $Au = \mathbf{0}$, where $A:=A^\mathbf{1}_\mathbf{1}$. This implies that $<Au,u>$ $= \int_{\Omega}\sum_{i=1}^m(X_i u)^2d\x= 0$. Since the operator $A$ satisfies the Lie rank condition, from regularity results due to  H\"{o}rmander  \cite{bramanti2014invitation}, we can infer that $u$ is locally smooth everywhere in $\Omega$.
Then we know that, for a given horizontal curve $\ga:[0,1] \rightarrow \Omega$, 	\kar{$u(\ga(1)) = u(\ga(0)) +\int_{0}^1\sum_{i = 1}^ma_i(t)X_i u({\ga(t)})dt$ since $u(\ga(t))$ satisfies the differential equation $\dot{u}(\ga(t)) = \sum_{i = 1}^ma_i(t)X_iu(\ga(t))$, where $a_i(t)$ are the essentially bounded functions associated with the curve $\ga(t)$ according to  \eqref{eq:ctraffhyp}. Hence, $u(\ga(1)) -u(\ga(0)) =\int_{0}^1\sum_{i = 1}^ma_i(t)X_i u({\ga(t)})dt=0$ because $\sum_{i=1}^m(X_i u)^2= 0$.
}
	Note that we require the local smoothness of $u$ in order to make sense of the term $\int_{0}^1\sum_{i = 1}^ma_i(t)X u({\ga(t)})dt$. Since $\mathcal{V}$ is bracket generating, we can choose $\ga(t)$ such that $\ga(0)$ and $\ga(1)$ are the given initial and final conditions in $\Omega$. Hence, we have that $u$ is constant everywhere on $\Omega$. This implies that $A\mathbf{1} = \mathbf{0}$, and hence $A_a^bf= \mathbf{0}$ due to the assumption that $a,b $ are uniformly bounded from below by a positive constant. 
\end{proof}

\section{Stabilization with Local Agent Interactions}
\label{sec:Disco}

In Section \ref{sec:subell}, the probability densities that we stabilized were assumed to be uniformly bounded from below by a positive number.  Without this assumption, the semigroups that were constructed would not be globally asymptotically stable. In this section, we will introduce a semilinear PDE model for stabilizing a swarm to probability densities that possibly have supports that are disconnected.

As in Section \ref{sec:subell}, $\Omega$ will denote an open bounded subset of a manifold, and we consider  a collection of vector fields $\mathcal{V} = \lbrace  X_1,...,X_m \rbrace$ satisfying the Chow-Rashevsky condition. Let $A:=A^\mathbf{1}_\mathbf{1} $ be the operator defined in Section \ref{sec:subell}, where $\mathbf{1}$ denotes the function that is equal to $1$ almost everywhere on $\Omega$. We will also need the spaces $\mathbf{L}^2(\Omega) = L^2(\Omega) \times L^2(\Omega)$ and $\mathbf{L}^\infty(\Omega) = L^\infty(\Omega) \times L^\infty(\Omega)$ with the standard norms inherited from the spaces $L^2(\Omega)$ and $L^{\infty}(\Omega)$. 

We will consider the following PDE model, 
\begin{eqnarray}
&(y_1)_t = -A y_1- q_1(\mathbf{x},t)y_1 +q_2(\x,t)y_2  &~ in  ~~ \Omega \times [0,T] \nonumber \\ 
&(y_2)_t =           q_1(\x,t)y_1 -q_2(\x,t)y_2         & ~ in  ~~ \Omega \times [0,T] \nonumber \\
&\y(\cdot,0) = \y^0 & \hspace{-2cm} ~ in ~~ \Omega \nonumber \\ 
&\mathbf{n} \cdot \nabla y_1 =0 & \hspace{-2cm} 
 ~ in ~~ \partial \Omega \times [0,T], 
\label{eq:vecpdeini}
\end{eqnarray}
where $y_1$ and $y_2$ are non-negative functions and $q_i$ are reaction parameters.
This PDE model is the forward equation of a hybrid switching diffusion process (HSDP) \cite{yin2010hybrid}. In addition to a continuous spatial state $\mathbf{Z}(t)$, each agent is associated with a discrete state $Y(t) \in \{0,1\}$ at each time $t$. The hybrid switching diffusion process $(\mathbf{Z}(t),Y(t))$ can be represented as a system of SDEs of the form  
\begin{eqnarray}
	d\mathbf{Z}(t) &=&  \sqrt{2}(1-Y(t)) \sum_{i=1}^mX_i\circ dW_i \ +	\mathbf{n}(\mathbf{Z}(t))d\psi(t), \nonumber \\ 
	\mathbf{Z}(0) &=& ~ \mathbf{Z}_0.
\label{eq:litSDE2}
\end{eqnarray} 
The PDE \eqref{eq:vecpdeini} is related to the SDE \eqref{eq:litSDE2}, for each $k \in \{0,1\}$, through the relation $\mathbb{P}(Y(t)=k,\mathbf{Z}(t) \in \Gamma) = \int_{\Gamma} y_{\kar{k+1}}(\mathbf{x},t)d\mathbf{x}$ for all $t \in [0,T]$ and all measurable $\Gamma \subset \Omega$. The transitions of the variable $Y(t)$ from one discrete state to another is determined by two functions $q_i:\Omega \rightarrow [0,\infty]$ in the following way, 
\begin{eqnarray} 
\label{eq:ch2marko}
\mathbb{P}(Y(t+h) = 1 | Y(t) = 0) = q_1(\mathbf{Z}(t),t)h + o(h) \\
\mathbb{P}(Y(t+h) = 0| Y(t) = 1) = q_2(\mathbf{Z}(t),t)h + o(h) 
\end{eqnarray}
The state $Y(t) =0$  corresponds to the state in which agents diffuse in space according to the reflected SDE, and the state $Y(t)=1$ corresponds to a state in which they are motionless.
Therefore, unlike the process considered in Section \ref{sec:subell}, each agent has two discrete states, between which it jumps according to the {\it transition rates} $q_i(\mathbf{x},t)$ (also called {\it reaction parameters}). We will treat the transition rates $q_i(\mathbf{x},t)$  as the control inputs, instead of the velocity and diffusion parameters $(u_i,v_i)$. Since we will allow the control inputs to be functions of the density of the random variables $(\mathbf{Z}(t),Y(t))$, \kar{i.e., the density of agents in each state at time $t$,} this reaction-based control mechanism \kar{depends on  {\it interactions} among 
agents that enable them to estimate these  densities, e.g., via local sensing, wireless communication, or physical encounters. Due to the density-dependent transition rates, 
the forward equation is a semilinear PDE.}

We will consider the following problem in this section.
\begin{problem}
	\label{prb:dissppde}
	Let $y^d \in L^{\infty}(\Omega)$ be a target probability density. Construct a mean-field feedback law $K_i: \eL^2(\Omega) \rightarrow L^{\infty}(\Omega)$ such that if $u_i(\cdot,t)=K_i(\y(t))$ for all $i \in \lbrace 1,2 \rbrace$ and all $t \geq 0$, then the system \eqref{eq:vecpdeini} is globally asymptotically stable about the equilibrium $\y^d = [\mathbf{0}~~ y^d]^T$.
\end{problem}

Before we address this problem, we make some additional assumptions on the domain $\Omega$ and the operator $A$. Toward this end, we present the following definitions.

	\begin{definition}
		\label{def:c11}
	We will say that $\Omega$ is a \textbf{$C^{1,1}$ domain} if each point $\mathbf{x} \in \partial \Omega$ has a neighborhood $\mathcal{N}$ such that $\Omega \cap \mathcal{N}$ is represented by the inequality $x_n < \gamma(x_1,...,x_{n-1})$ in some Cartesian coordinate system for some function $\gamma : \mathbb{R}^{n-1} \rightarrow \mathbb{R}$ that is at least once differentiable and has derivatives of order $1$ that are Lipschitz continuous. 
\end{definition}

\begin{definition}
	\label{def:chain}
	The domain $\Omega$ will be said to satisfy the \textbf{chain condition} if there exists a constant $C>0$ such that for every $\mathbf{x},\bar{\mathbf{x}}$ $\in$ $\Omega$ and every positive $\kar{j} \in \mathbb{Z}_+$, there exists a sequence of points $\mathbf{x}_i$ $\in$ $\Omega$, $0 \leq i \leq \kar{j}$, such that $\mathbf{x}_0 = \mathbf{x}$, $\mathbf{x}_\kar{j} =\bar{\mathbf{x}}$, and $|\mathbf{x}_i-\mathbf{x}_{i+1}| \leq \frac{C}{\kar{j}} |\mathbf{x} - \bar{\mathbf{x}}|$ for all $i = 0,...,\kar{j}-1$. Here $|\cdot |$ denotes the standard Euclidean norm.
\end{definition}

Note that every convex domain satisfies the chain condition. 

In this section, we will make some stronger assumptions on the generator and the domain $\Omega$ than those made in the previous section. 
\begin{assumption}
\label{asmp2}
Only one of the following conditions holds:
\begin{enumerate}
	\item If $\Omega \neq M$, then $\Omega$ is a bounded subset of $\kar{\mathbb{R}^n}$, 

 $-A = \sum_{i=1}^{\kar{n}}\partial_{x_i}^2=\Delta$ is the Laplacian, and $\Omega$ is a $C^{1,1}$ domain in the sense of Definition \ref{def:c11} and satisfies the chain condition in Definition \ref{def:chain}.
	\item The set $\Omega$ is a compact manifold $M$ without a boundary \kar{and $-A = \sum_{i=1}^m X_i^*X_i$.}
\end{enumerate}
\end{assumption}

Given these assumptions, we have the following result due to Gaussian estimates proved by \cite{choulli2015gaussian} for the Laplacian $\Delta$, and by \cite{jerison1986estimates} for sub-Laplacians. We will use this result in the subsequent analysis. 

\begin{theorem}
	\label{thm:origbnd}
	Let $(\mathcal{T}(t))_{t \geq 0}$ be the semigroup generated by the operator $-A$. Let $y^0\in L^2(\Omega)$ be non-negative. Then there exists a constant $C>0$ and time $T >0$, independent of $y^0$, such that $\mathcal{T}(t)y^0 \geq C \|y^0\|_1 $ for all $t \geq T$.
\end{theorem}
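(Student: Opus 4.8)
The plan is to realize $(\mathcal{T}(t))_{t\geq 0}$ through its heat kernel and then exploit the finiteness of the (sub-)Riemannian diameter $r(\Omega)$ together with the Gaussian lower bounds supplied by the cited references. In both cases of Assumption \ref{asmp2} the operator $-A$ is a self-adjoint (sub-)Laplacian generating a positive, mass-conserving semigroup, so $\mathcal{T}(t)$ admits a non-negative integral kernel $p(t,\mathbf{x},\mathbf{y})$ with $(\mathcal{T}(t)y^0)(\mathbf{x}) = \int_\Omega p(t,\mathbf{x},\mathbf{y})y^0(\mathbf{y})\,d\mathbf{y}$. First I would invoke the two-sided Gaussian estimates: for the Neumann Laplacian on a $C^{1,1}$ domain satisfying the chain condition this is precisely the setting of \cite{choulli2015gaussian}, and for the sub-Laplacian on a compact boundaryless manifold it is \cite{jerison1986estimates}. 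Both yield a lower bound of Gaussian type, $p(t,\mathbf{x},\mathbf{y}) \geq \frac{c_1}{|B(\mathbf{x},\sqrt{t})|}\exp\big(-c_2 d(\mathbf{x},\mathbf{y})^2/t\big)$, where $B(\mathbf{x},\sqrt{t})$ is the metric ball and $d$ is the Carnot--Carath\'eodory distance (in the Euclidean case $|B(\mathbf{x},\sqrt{t})|$ reduces to the usual $t^{n/2}$ scaling).

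The key observation is that boundedness (resp.\ compactness) converts this into a \emph{uniform} positive lower bound once $t$ is bounded away from $0$. Fix any $T>0$ for which the estimate applies. Since $B(\mathbf{x},\sqrt{T}) \subseteq \Omega$ we have $|B(\mathbf{x},\sqrt{T})| \leq |\Omega|$, and since $d(\mathbf{x},\mathbf{y}) \leq r(\Omega) < \infty$ for all $\mathbf{x},\mathbf{y}\in\Omega$, the Gaussian factor is bounded below, giving
\begin{equation*}
p(T,\mathbf{x},\mathbf{y}) \geq \frac{c_1}{|\Omega|}\exp\!\Big(-\frac{c_2\, r(\Omega)^2}{T}\Big) =: C > 0
\end{equation*}
uniformly in $\mathbf{x},\mathbf{y}\in\Omega$. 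Integrating against the non-negative datum then yields, for a.e.\ $\mathbf{x}$,
\begin{equation*}
(\mathcal{T}(T)y^0)(\mathbf{x}) = \int_\Omega p(T,\mathbf{x},\mathbf{y})y^0(\mathbf{y})\,d\mathbf{y} \geq C\int_\Omega y^0(\mathbf{y})\,d\mathbf{y} = C\|y^0\|_1,
\end{equation*}
which is the claimed inequality at the single time $T$, with $C$ independent of $y^0$. Note that no large-time behaviour of the kernel is needed: $|B|\leq|\Omega|$ and $d\leq r(\Omega)$ hold for every $t$, so any admissible $T$ works.

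To upgrade this to all $t \geq T$ I would use the semigroup property rather than re-estimating the kernel. Writing $\mathcal{T}(t) = \mathcal{T}(t-T)\,\mathcal{T}(T)$, the bound just obtained reads $\mathcal{T}(T)y^0 \geq C\|y^0\|_1\,\mathbf{1}$ pointwise. Since $(\mathcal{T}(t))_{t\geq 0}$ is positive by Corollary \ref{CorIIhol} and preserves constants (because $A\mathbf{1}=\mathbf{0}$, as established in the proof of Theorem \ref{thm:stablin}, whence $\mathcal{T}(s)\mathbf{1}=\mathbf{1}$ for all $s\geq 0$), applying $\mathcal{T}(t-T)$ to both sides gives $\mathcal{T}(t)y^0 \geq C\|y^0\|_1\,\mathcal{T}(t-T)\mathbf{1} = C\|y^0\|_1$ for every $t \geq T$, completing the argument.

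The main obstacle lies entirely in the first step: confirming that the geometric hypotheses of Assumption \ref{asmp2}---the $C^{1,1}$ regularity of $\partial\Omega$ and the chain condition in the Euclidean case, and compactness in the manifold case---are exactly those under which the cited Gaussian lower bounds hold up to the boundary (under Neumann conditions) or globally on the manifold. Everything afterward is elementary, relying only on the finiteness of $r(\Omega)$, the non-negativity of $y^0$, and the positivity and constant-preserving properties of the semigroup already proved in this section.
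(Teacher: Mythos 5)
Your proposal is correct and takes essentially the same route as the paper, which states Theorem \ref{thm:origbnd} with no written proof, justifying it purely by citation of the Gaussian heat-kernel estimates in \cite{choulli2015gaussian} (Neumann Laplacian on $C^{1,1}$ domains with the chain condition) and \cite{jerison1986estimates} (sub-Laplacians on compact manifolds). The details you supply — turning the Gaussian lower bound into a uniform kernel bound at one fixed time $T$ via $|B(\mathbf{x},\sqrt{T})|\leq |\Omega|$ and $d(\mathbf{x},\mathbf{y})\leq r(\Omega)<\infty$, then propagating to all $t\geq T$ using positivity of the semigroup and $\mathcal{T}(s)\mathbf{1}=\mathbf{1}$ — are exactly the bookkeeping the paper omits, and your observation that this sidesteps any need for large-time kernel estimates is a genuine (and necessary) point of care, since the cited two-sided bounds are only guaranteed on bounded time intervals.
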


In order to address Problem \ref{prb:dissppde}, we define the following maps $F_i: L^2(\Omega) \rightarrow \kar{L^\infty(\Omega)}$, $i \in \lbrace 1,2 \rbrace$, 
\begin{equation}
\label{eq:reacp}
(F_i(f))(\mathbf{x}) = r_i(f(\mathbf{x})-y^d(\mathbf{x}))
\end{equation}
for almost every $\mathbf{x} \in \Omega$ and all $f \in L^2(\Omega)$, where $\kar{r_i}:\mathbb{R} \rightarrow [0,\beta ]$ are globally Lipschitz functions for some positive number $\beta$, such that the functions $r_1$ and $r_2$ have supports equal to the intervals \kar{$(-\infty,0]$ and $[0,\infty)$}, respectively. Our candidate mean-field feedback law $K_i$ for addressing Problem \ref{prb:dissppde} will be $K_i(\y) = F_i(\kar{y_2})$ for each $i \in \lbrace 1,2 \rbrace$.
Then the resulting {\it closed-loop} PDE is given by 
\begin{eqnarray}
\label{eq:clpPDEdsicon}
&(y_1)_t = -A y_1 - F_1(y_2)y_1 +F_2(y_2)y_2  &~~ in  ~~ \Omega \times [0,T] \nonumber \\ 
&(y_2)_t =           F_1(y_2)y_1 - F_2(y_2)y_2         &~~ in  ~~ \Omega \times [0,T] \nonumber \\
&\y(\cdot,0) = \y^0 & \hspace{-2cm} ~ in ~~ \Omega \nonumber \\ 
&\mathbf{n} \cdot \nabla y_1=0 & \hspace{-2cm} ~ in ~~ \partial \Omega \times [0,T],  
\label{eq:vecpde}
\end{eqnarray}
where the Neumann boundary condition in the last equation is specified only for the case where the boundary $\partial \Omega $ is nonempty. Since the transition rates are a functions of the distribution of the random variable, the relation between the system of SDEs \eqref{eq:litSDE2} and PDEs \eqref{eq:clpPDEdsicon} is no longer straightforward. For the choice of control law  $F_i$, the SDE becomes a stochastic process of {\it Mckean-Vlasov} type \cite{mckean1967propagation,kolokoltsov2010nonlinear}, and further analysis is required to establish a rigorous connection between the two systems \eqref{eq:vecpdeini} and \eqref{eq:clpPDEdsicon}. Such an analysis is beyond the scope of this article, and is left for  future work. 

Our main goal in this section will be to establish the asymptotic stability of the PDE \eqref{eq:clpPDEdsicon} given Assumption \ref{asmp2}. Before we begin the stability analysis of the above PDE model, we point out that standard approaches to stability analysis, such as linearization-based approaches or Lyapunov functional arguments, are not immediately applicable, as we demonstrate in the following two remarks.
\begin{remark} 
	\label{rmk:noexp}
		\textbf{(Lack of exponential stability)} 
Consider the linearization of the PDE \eqref{eq:clpPDEdsicon} about the target equilibrium density $\y^d = [\mathbf{0}~~ y^d]^T$. It can be verified that the (Fr\'echet) derivative of the nonlinear operators $F_i$ about 
$\y^d$ is the $\mathbf{0}$ operator. Therefore, the linearization of the PDE about the equilibrium  $\y^d$ is:
\begin{eqnarray}
\label{eq:lineapde}
&(\tilde{y}_1)_t = -A \tilde{y}_1   &~~ in  ~~ \Omega \times [0,T] \nonumber \\ 
&(\tilde{y}_2)_t =          \mathbf{0}      &~~ in  ~~ \Omega \times [0,T] \nonumber 
\end{eqnarray}
\end{remark}
Clearly, this PDE is not exponentially stable since the spectrum of its generator, 
$
\mathbf{A}\y = 
[
Ay_1~ \mathbf{0}
]^T
$,
has an infinite number of eigenvalues at $0$. Hence, the PDE \eqref{eq:clpPDEdsicon} cannot be locally exponentially stable about the equilibrium $\y^d$.
\begin{remark}
		\label{rmk:noLaSa}
	\textbf{(Difficulty in using LaSalle's principle)} 
	Another standard approach to establish asymptotic stability of dynamical systems is LaSalle's invariance principle \cite{khalil2002nonlinear}. However, the application of LaSalle's invariance principle for stability analysis of infinite-dimensional dynamical systems, such as the PDE \eqref{eq:clpPDEdsicon}, requires that the trajectories of the system remain in a compact set for all time. The compactness of trajectories for solutions of parabolic PDEs is usually inferred from the regularizing  effect of the diffusion component of the dynamics. This is not straightforward to establish for solutions $\mathbf{y}$ of the PDE \eqref{eq:clpPDEdsicon} due to the fact that the diffusion operator $A$ acts only on the first state $y_1$, and therefore it cannot be  guaranteed that the state $y_2$ lies in a Sobolev space.
\end{remark}
	
Due to the technical issues pointed out in Remarks \ref{rmk:noexp} and \ref{rmk:noLaSa}, we will use an alternative approach to establish asymptotic stability of the PDE \eqref{eq:clpPDEdsicon} based on the monotonicity properties of the PDE. In order to perform stability analysis of the PDE \eqref{eq:clpPDEdsicon}, we will need a suitable notion of a solution. Toward this end, we use the following definition.

\begin{definition}
	Let $(\mathcal{T}(t))_{t \geq 0}$ be the semigroup generated by the operator $-A$. We will say that the PDE \eqref{eq:clpPDEdsicon} has a \textbf{local mild solution} if there exist $T >0$ and $\y \in C([0,T];\eL^2({\Omega}))$ such that 	\begin{eqnarray}
	y_1(\cdot,t) &=& \mathcal{T}(t)y^0_1  - \int_0^t \mathcal{T}(t-s)\Big (F_1(y_2(\cdot,s)) y_1(\cdot,s) \Big) ds \nonumber \\ 
	&~&  + \int_0^t \mathcal{T}(t-s)\Big (F_2(y_2(\cdot,s))y_2(\cdot,s) \Big) ds, \nonumber \\
	y_2(\cdot,t) &=&  y^0_2+\int_0^t F_1 (y_2(\cdot,s)) y_1(\cdot,s)  ds \nonumber \\
	&~&- \int_0^t F_2 (y_2(\cdot,s))y_2(\cdot,s) ds 
	\end{eqnarray}
	for all $t \in [0,T]$. 
	We will say that the PDE \eqref{eq:clpPDEdsicon} has a unique  \textbf{global solution} if it has a unique local mild solution for every $T>0$.
\end{definition}

To establish the existence of solutions of the PDE \eqref{eq:clpPDEdsicon}, we will need the operator  $\mathbf{A}:\mathcal{D}(\mathbf{A}):\rightarrow \eL^2(\Omega)$, defined as 
\begin{equation}
\mathbf{A}\y = 
\begin{bmatrix}
Ay_1 \\ \mathbf{0}
\end{bmatrix} 
\nonumber
\end{equation}
for all $\y \in \mathcal{D}(\mathbf{A}) = \mathcal{D}(A) \times L^2(\Omega)$.

Our next goal will be to construct global solutions of the PDE \eqref{eq:clpPDEdsicon}. First, we will show that the solutions of the PDE \eqref{eq:clpPDEdsicon} remain essentially bounded if the initial condition is essentially bounded. Toward this end, we first establish this property for a related autonomous linear PDE.
\begin{lemma}
	\label{lem:Linffbnd}
	Suppose $\y \in \eL^{\infty}(\Omega)$. Let $\mathbf{a} \in \mathbf{L}^{\infty}(\Omega)$ be non-negative. Consider the linear bounded operator $\mathbf{B}:\mathbf{L}^2(\Omega)  \rightarrow \mathbf{L}^2(\Omega)$ defined by
	\begin{eqnarray}
	(\kar{\mathbf{B}}\mathbf{y})(\mathbf{x}) =
	\begin{bmatrix}
	-a_1(\mathbf{x})y_1(\mathbf{x})+a_2(\mathbf{x})y_2(\mathbf{x})  \nonumber \\ 
	a_1(\mathbf{x})y_1(\mathbf{x})-a_2(\mathbf{x})y_2(\mathbf{x})
	\end{bmatrix} \nonumber
	\end{eqnarray}
	for almost every $\mathbf{x} \in \Omega$ and all $\y \in \eL^2(\Omega)$. Let $(\mathcal{T}^\mathbf{C}(t))_{t \geq 0}$ be the semigroup generated by the operator $\mathbf{C} =-\mathbf{A}+\mathbf{B}$. Then $\|\mathcal{T}^\mathbf{C}(t)\mathbf{y}^0\|_{\infty} \leq e^{\|\mathbf{a}\|_{\infty}t}\|\mathbf{y}^0\|_{\infty}$ for all $t \geq 0$.
\end{lemma}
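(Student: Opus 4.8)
The plan is to combine three ingredients already available: the positivity and $L^\infty$-contractivity of the semigroup $(\mathcal{T}(t))_{t\geq 0}$ generated by $-A$ (Corollary \ref{CorIIhol}), the fact that constant functions lie in the kernel of $A$ with $A\mathbf{1}=\mathbf{0}$ (Theorem \ref{thm:stablin}), and a comparison argument against an explicit spatially constant supersolution. First I would record that $-\mathbf{A}$ generates the positive $C_0$-semigroup $\mathcal{T}^{\mathbf{A}}(t)=\mathrm{diag}(\mathcal{T}(t),\mathbb{I})$ on $\eL^2(\Omega)$, and that since $\mathbf{B}$ is bounded, $\mathbf{C}=-\mathbf{A}+\mathbf{B}$ generates a $C_0$-semigroup $(\mathcal{T}^{\mathbf{C}}(t))_{t\geq 0}$ by the bounded perturbation theorem \cite{engel2000one}.

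The first substantive step is to show that $\mathcal{T}^{\mathbf{C}}$ is \emph{positive}, which is not immediate because $\mathbf{B}$ is not a positive operator (its diagonal entries $-a_1,-a_2$ are non-positive). The key observation is that the shifted operator $\mathbf{B}+\|\mathbf{a}\|_{\infty}\mathbb{I}$ \emph{is} positive: acting on a non-negative pair $\mathbf{y}\geq 0$, its components $(\|\mathbf{a}\|_{\infty}-a_1)y_1+a_2y_2$ and $a_1y_1+(\|\mathbf{a}\|_{\infty}-a_2)y_2$ are both non-negative, since $0\leq a_i\leq\|\mathbf{a}\|_{\infty}$ a.e. Hence $\mathbf{C}+\|\mathbf{a}\|_{\infty}\mathbb{I}=-\mathbf{A}+(\mathbf{B}+\|\mathbf{a}\|_{\infty}\mathbb{I})$ is a positive bounded perturbation of the generator of a positive semigroup, so it generates a positive semigroup (every term of its Dyson--Phillips series is positive); multiplying by the scalar factor $e^{-\|\mathbf{a}\|_{\infty}t}$ shows that $\mathcal{T}^{\mathbf{C}}(t)$ is positive.

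Next I would construct the supersolution. Set $c=\|\mathbf{y}^0\|_{\infty}$ and $\mathbf{w}(t)=c\,e^{\|\mathbf{a}\|_{\infty}t}\,\mathbf{1}$, where $\mathbf{1}=[1~~1]^T$ is the pair of constant functions. Since $A\mathbf{1}=\mathbf{0}$, the constants lie in $\mathcal{D}(A)$, so $\mathbf{w}(t)\in\mathcal{D}(\mathbf{C})=\mathcal{D}(\mathbf{A})$ and $\mathbf{w}\in C^1([0,T];\eL^2(\Omega))$. Using $\mathbf{A}\mathbf{w}=\mathbf{0}$ a direct computation gives
\[
\mathbf{g}(t):=\dot{\mathbf{w}}(t)-\mathbf{C}\mathbf{w}(t)=c\,e^{\|\mathbf{a}\|_{\infty}t}\begin{bmatrix}\|\mathbf{a}\|_{\infty}+a_1-a_2\\ \|\mathbf{a}\|_{\infty}-a_1+a_2\end{bmatrix},
\]
and each component is non-negative because $a_1,a_2\geq 0$ and $a_1,a_2\leq\|\mathbf{a}\|_{\infty}$ a.e.; thus $\mathbf{g}(t)\geq 0$, i.e.\ $\mathbf{w}$ is a classical supersolution.

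Finally I would close the argument by comparison. For a non-negative datum $\mathbf{z}$ with $0\leq\mathbf{z}\leq c\mathbf{1}$, the variation-of-constants representation of the classical solution $\mathbf{w}$ yields, for $\mathbf{v}(t)=\mathbf{w}(t)-\mathcal{T}^{\mathbf{C}}(t)\mathbf{z}$,
\[
\mathbf{v}(t)=\mathcal{T}^{\mathbf{C}}(t)(c\mathbf{1}-\mathbf{z})+\int_0^t\mathcal{T}^{\mathbf{C}}(t-s)\mathbf{g}(s)\,ds\;\geq\;0,
\]
the non-negativity following from $c\mathbf{1}-\mathbf{z}\geq 0$, $\mathbf{g}\geq 0$, and positivity of $\mathcal{T}^{\mathbf{C}}$; together with $\mathcal{T}^{\mathbf{C}}(t)\mathbf{z}\geq 0$ this gives $0\leq\mathcal{T}^{\mathbf{C}}(t)\mathbf{z}\leq c\,e^{\|\mathbf{a}\|_{\infty}t}\mathbf{1}$. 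For general $\mathbf{y}^0\in\eL^{\infty}(\Omega)$ I would apply the lattice inequality $|\mathcal{T}^{\mathbf{C}}(t)\mathbf{y}^0|\leq\mathcal{T}^{\mathbf{C}}(t)|\mathbf{y}^0|$ valid for positive semigroups \cite{ouhabaz2009analysis}, taking $\mathbf{z}=|\mathbf{y}^0|\leq\|\mathbf{y}^0\|_{\infty}\mathbf{1}$, to obtain $|\mathcal{T}^{\mathbf{C}}(t)\mathbf{y}^0|\leq\|\mathbf{y}^0\|_{\infty}e^{\|\mathbf{a}\|_{\infty}t}\mathbf{1}$ a.e., which is exactly the claimed bound. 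The main obstacle is the positivity step: since $\mathbf{B}$ fails to be positive, one must first perform the spectral shift by $\|\mathbf{a}\|_{\infty}\mathbb{I}$ before invoking the positivity-preserving perturbation theorem; once positivity is secured, the supersolution comparison is routine.
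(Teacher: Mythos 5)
Your proposal is correct, but it follows a genuinely different route from the paper's proof. The paper argues via the Lie--Trotter product formula \cite{engel2000one}[Corollary III.5.8]: writing $\mathcal{T}^{\mathbf{C}}(t)$ as the strong limit of $\big(\mathcal{T}^{\mathbf{A}}(t/N)\,\mathcal{T}^{\mathbf{B}}(t/N)\big)^{N}$, it combines the $L^\infty$-contractivity of $\mathcal{T}^{\mathbf{A}}$ (Corollary \ref{CorIIhol}) with the elementary bound $\|\mathcal{T}^{\mathbf{B}}(t)\mathbf{y}^0\|_\infty \leq e^{\|\mathbf{a}\|_\infty t}\|\mathbf{y}^0\|_\infty$ for the multiplication semigroup, so the desired estimate holds for each product and passes to the limit; the only technical point is verifying the range-density hypothesis of the Trotter theorem, which the paper handles via the explicit resolvent of the boundedly perturbed generator. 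You instead establish positivity of $\mathcal{T}^{\mathbf{C}}$ by shifting $\mathbf{B}$ by $\|\mathbf{a}\|_\infty\mathbb{I}$ (correctly noting that $\mathbf{B}$ itself is not positive, so the shift is essential before invoking the Dyson--Phillips series), then compare against the explicit spatially constant supersolution $c\,e^{\|\mathbf{a}\|_\infty t}\mathbf{1}$ using variation of constants and the lattice inequality $|\mathcal{T}^{\mathbf{C}}(t)\mathbf{y}^0| \leq \mathcal{T}^{\mathbf{C}}(t)|\mathbf{y}^0|$. Both arguments are sound. The paper's route is shorter and needs no structural information beyond the $L^\infty$ bounds on the two factor semigroups; your route requires the additional fact $A\mathbf{1}=\mathbf{0}$ (which does hold here, since constants lie in the kernel of the form $\omega^{\mathbf{1}}_{\mathbf{1}}$), but it avoids the Trotter machinery and its range condition, and it yields the positivity of $\mathcal{T}^{\mathbf{C}}$ as a byproduct---a property the paper has to obtain separately (in the spirit of Lemmas \ref{lem:possolvar} and \ref{lem:postimvar}) for the subsequent comparison arguments, so your approach consolidates some of that later work.
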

\begin{proof}
	We know that the operator $\mathbf{A}$ generates a semigroup $(\mathcal{T}^\mathbf{A}(t))_{t \geq 0}$ given by 
	\begin{equation}
	\mathcal{T}^\mathbf{A}(t) = 
	\begin{bmatrix}  
	\mathcal{T}(t)  & \mathbf{0} \\
	\mathbf{0}   & \mathbf{I}
	\end{bmatrix}
	\end{equation}
	for all $t \geq 0$. Moreover, the semigroup $(\mathcal{T}^\mathbf{A}(t))_{t \geq 0}$ satisfies $\|\mathcal{T}^\mathbf{A}(t) \y^0\|_{\infty} \leq \|\y^0\|_{\infty}$ for all $\y^0\in \eL^{\infty}(\Omega)$ and $t \geq 0$ (Corollary \ref{CorIIhol}). Additionally, we know that the semigroup  $(\mathcal{T}^\mathbf{B}(t))_{t \geq 0}$ generated by the bounded operator $\mathbf{B}$ satisfies the estimate $ \|\mathcal{T}^\mathbf{B}(t)\y^0\|_{\infty}   \leq e^{\|\mathbf{a}\|_{\infty}t}\|\mathbf{y}^0\|_{\infty}$. \kar{Since $\mathbf{B}$ is a bounded operator, and the resolvent of $\mathbf{C}$ has an explicit well-defined representation \cite{engel2000one}[p.160], we can conclude that $\lambda -\mathbf{C}$ has a dense range in $\mathbf{L}^2(\Omega)$ for all $\lambda$.} Then the result follows from the {\it Lie-Trotter product formula} \cite{engel2000one}[Corollary III.5.8] 
 by noting that $\mathcal{T}^\mathbf{C}(t) =   \lim_{N \rightarrow 0} (\mathcal{T}^\mathbf{A}(\frac{t}{N})\mathcal{T}^\mathbf{B}(\frac{t}{N}))^N $, where the limit holds in the strong operator topology, for all $t \geq 0$.
\end{proof}

Now we can show that the $\eL^\infty-$ estimate proved in Lemma \ref{lem:Linffbnd} can be extended to a class of non-autonomous linear systems that can be treated as autonomous linear systems over certain intervals of time.

\begin{lemma}
	Suppose $\y^0 \in \eL^{\infty}(\Omega)$, $c>0$ and $T>0$. \sloppy Let $a_1,a_2 \in L^{2}(0,T;L^2(\Omega))$ be non-negative and piecewise constant with respect to time, with $\|a_1(t)\|_{\infty} \leq c $ and $\|a_2(t)\|_{\infty} \leq c $ for all $t \in [0,T]$. Then suppose $\y \in C([0,T];\eL^2(\Omega))$ is given by 
	\begin{eqnarray}
	y_1(\cdot,t) &=& \mathcal{T}(t)y^0_1 - \int_0^t \mathcal{T}(t-s)\bigg ( a_1(\cdot,t) y_1(\cdot,s) \bigg)ds \nonumber  \\ 
	&~& + \int_0^t \mathcal{T}(t-s) \bigg ( a_2(\cdot,t) y_2(\cdot,s) \bigg) ds \nonumber \\
	y_2(\cdot,t) &=&  y^0_2+\int_0^t a_1(\cdot,s) y_1(\cdot,s) ds  - \int_0^t a_2(\cdot,s) y_2(\cdot,s) ds \nonumber
	\end{eqnarray}
	for all $t \in [0,T]$. Then
	\begin{equation}
	\kar{\|\y(\cdot,t)\|_{\infty}} \leq e^{ct}\|\mathbf{y}^0\|_{\infty}
	\label{eq:linfgro}
	\end{equation}
	for all $t \in [0,T]$.
\end{lemma}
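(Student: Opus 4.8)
The plan is to exploit the piecewise-constant-in-time structure of $a_1,a_2$ to decompose $[0,T]$ into finitely many subintervals on each of which the problem is \emph{autonomous}, to apply the autonomous $\eL^\infty$-bound from Lemma \ref{lem:Linffbnd} on each piece, and then to chain the resulting estimates together multiplicatively. The point is that although the data are time-varying, they are constant on each piece, so nothing new beyond Lemma \ref{lem:Linffbnd} and a semigroup restart argument is required.

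Concretely, I would first choose a partition $0 = t_0 < t_1 < \cdots < t_n = T$ adapted to the piecewise-constant data, so that on each interval $[t_{k-1},t_k)$ the coefficients equal fixed functions $a_1^{(k)},a_2^{(k)} \in L^\infty(\Omega)$ with $\|a_i^{(k)}\|_\infty \le c$. Setting $\mathbf{a}^{(k)} = (a_1^{(k)},a_2^{(k)})$ and letting $\mathbf{B}_k$ be the associated bounded multiplication operator as in Lemma \ref{lem:Linffbnd}, I would recast the two scalar integral identities in the statement as the single vector identity
\[
\y(\cdot,t) = \mathcal{T}^{\mathbf{A}}(t)\y^0 + \int_0^t \mathcal{T}^{\mathbf{A}}(t-s)\,\mathbf{B}(s)\,\y(\cdot,s)\,ds,
\]
where $\mathbf{B}(s) = \mathbf{B}_k$ for $s \in [t_{k-1},t_k)$ and $\mathcal{T}^{\mathbf{A}}(t) = \mathrm{diag}(\mathcal{T}(t),\mathbf{I})$ is the semigroup appearing in the proof of Lemma \ref{lem:Linffbnd}; a direct componentwise comparison shows this is exactly the pair of scalar equations given in the hypothesis.

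Next, on a single interval $[t_{k-1},t_k]$ I would establish the \emph{restart} property: splitting the Duhamel integral at $s = t_{k-1}$ and using the semigroup law $\mathcal{T}^{\mathbf{A}}(t-s) = \mathcal{T}^{\mathbf{A}}(t-t_{k-1})\mathcal{T}^{\mathbf{A}}(t_{k-1}-s)$ together with $\mathbf{B}(s) = \mathbf{B}_k$ for $s \in [t_{k-1},t_k)$ yields
\[
\y(\cdot,t) = \mathcal{T}^{\mathbf{A}}(t-t_{k-1})\y(\cdot,t_{k-1}) + \int_{t_{k-1}}^{t} \mathcal{T}^{\mathbf{A}}(t-s)\,\mathbf{B}_k\,\y(\cdot,s)\,ds
\]
for $t \in [t_{k-1},t_k]$. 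Since $\mathbf{B}_k$ is bounded, this is precisely the (unique) mild-solution fixed-point equation for the generator $\mathbf{C}_k = -\mathbf{A} + \mathbf{B}_k$, so by uniqueness of mild solutions we may identify $\y(\cdot,t) = \mathcal{T}^{\mathbf{C}_k}(t-t_{k-1})\y(\cdot,t_{k-1})$. Lemma \ref{lem:Linffbnd} then gives $\|\y(\cdot,t)\|_\infty \le e^{\|\mathbf{a}^{(k)}\|_\infty (t-t_{k-1})}\|\y(\cdot,t_{k-1})\|_\infty \le e^{c(t-t_{k-1})}\|\y(\cdot,t_{k-1})\|_\infty$ on this interval, using that $\|\mathbf{a}^{(k)}\|_\infty \le c$.

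Finally I would telescope: iterating the single-interval bound across $t_0,\dots,t_{k-1}$ and then inserting the current time $t$ in the last interval gives $\|\y(\cdot,t)\|_\infty \le e^{c(t-t_{k-1})}\big(\prod_{j=1}^{k-1} e^{c(t_j - t_{j-1})}\big)\|\y^0\|_\infty = e^{ct}\|\y^0\|_\infty$, which is the claimed estimate \eqref{eq:linfgro}. I expect the main obstacle to be the bookkeeping in the restart step, namely verifying rigorously that the globally defined mild solution, when restricted to a subinterval, solves the autonomous fixed-point equation with the correct initial datum $\y(\cdot,t_{k-1})$ and may therefore be identified with $\mathcal{T}^{\mathbf{C}_k}$ by uniqueness; the remaining steps are a routine splitting of integrals and multiplicative chaining of the per-interval bounds.
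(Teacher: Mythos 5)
Your proposal is correct and follows essentially the same route as the paper's proof: partition $[0,T]$ into intervals where the coefficients are constant, identify the solution on each interval with the semigroup $\mathcal{T}^{\mathbf{C}_k}$ generated by $-\mathbf{A}+\mathbf{B}_k$ applied to the value at the start of that interval, invoke Lemma \ref{lem:Linffbnd}, and chain the exponential bounds. The only difference is that you explicitly justify the interval-by-interval identification via the Duhamel splitting and uniqueness of mild solutions, a step the paper asserts without detail.
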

\begin{proof}
	Let $(t_i)_{i = 0}^{m}$ be a finite sequence of length $m+1 \in \mathbb{Z}_+$ of strictly increasing time instants, with $t_0=0$, such that the functions $a_1$ and $a_2$ are constant over the intervals $[t_{i-1},t_{i})$, $i \in \lbrace 1,...,m \rbrace$. Then, for each $i \in \lbrace 1,...,m \rbrace$, consider the bounded operators $\mathbf{B}_i: \eL^2(\Omega) \rightarrow \eL^2(\Omega)$ and $\mathbf{C}_i: \mathcal{D}(\mathbf{A}) \rightarrow \eL^2(\Omega)$ 
	\begin{eqnarray}
	(\mathbf{B}_i \y)(\mathbf{x}) =
	\begin{bmatrix} 
	-a_1(\mathbf{x},t_{i-1})y_1(\mathbf{x})+a_2(\mathbf{x},t_{i-1})y_2(\mathbf{x})  \\ 
	a_1(\mathbf{x},t_{i-1})y_1(\mathbf{x})-a_2(\mathbf{x},t_{i-1})y_2(\mathbf{x})
	\end{bmatrix} 
	\end{eqnarray}
	for almost every $\x \in \Omega$ and all $\y \in \eL^2(\Omega)$, and $\mathbf{C}_i = \mathbf{A}+\mathbf{B}_i$, respectively. Then for each $i \in \lbrace 1,...,m \rbrace$, $\y $ is given by
	\begin{equation}
	\y(\cdot,t) = \mathcal{T}^{\mathbf{C}_i}(t - t_i)\mathcal{T}^{\mathbf{C}_{i-1}}(t_{i}-t_{i-1})... \mathcal{T}^{\mathbf{C}_1}(t_1)
	\end{equation}
	for all $ t \in [t_{i-1}, t_{i}] $. Then the result follows from Lemma \ref{lem:Linffbnd}.
\end{proof}
\begin{lemma}
	\label{lem:linfbnd}
	Suppose $\y^0 \in \eL^{\infty}(\Omega)$, $c>0$, and $T>0$.  \sloppy Let the functions $a_1,a_2 \in L^{2}(0,T;L^2(\Omega))$ be non-negative with $\|a_1(t)\|_{\infty} \leq c $ and $\|a_2(t)\|_{\infty} \leq c $ for almost every $t \in [0,T]$,  Then \kar{there exists $\y \in C([0,T];\eL^2(\Omega))$ given by} 
	\begin{eqnarray}
	y_1(\cdot,t) &=& \mathcal{T}(t)y^0_1 - \int_0^t \mathcal{T}(t-s)\Big (a_1(\cdot,s) y_1(\cdot,s) \Big ) ds  \nonumber  \\ 
	&~& + \int_0^t \mathcal{T}(t-s) \Big ( a_2(\cdot,s) y_2(\cdot,s) \Big ) ds, \nonumber \\
	y_2(\cdot,t) &=&  y^0_2+\int_0^t a_1(\cdot,s) y_1(\cdot,s) ds  - \int_0^t a_2(\cdot,s) y_2(\cdot,s) ds \nonumber \\
	\label{eq:vareq1}
	\end{eqnarray}
	for all $t \geq 0$.  \kar{Moreover,}
	\begin{equation}
	\|\y(\cdot,t)\|_{\infty} \leq e^{ct}\|\mathbf{y}^0\|_{\infty}
	\label{eq:linfgrores}
	\end{equation}
	for all $t \in [0,T]$.
\end{lemma}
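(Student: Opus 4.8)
The plan is to reduce to the preceding lemma (the one treating coefficients that are piecewise constant in time) by approximating $a_1,a_2$ with such functions and then passing to the limit. This route delivers both the existence of $\mathbf{y}$ and the estimate \eqref{eq:linfgrores} at once, which is convenient because the convergence argument and the bound are intertwined.

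First I would construct the approximants. Fix a uniform partition $0=t^n_0<t^n_1<\cdots<t^n_n=T$ of $[0,T]$ into subintervals $I^n_k$, and set $a_i^n(\cdot,t)=\frac{1}{|I^n_k|}\int_{I^n_k} a_i(\cdot,s)\,ds$ for $t\in I^n_k$, i.e. the Bochner average over the subinterval containing $t$. Since the set $\{g\in L^\infty(\Omega):\,0\leq g,\ \|g\|_\infty\leq c\}$ is convex and closed in $L^2(\Omega)$, each average lies in it, so $a_i^n$ is non-negative, piecewise constant in time, and satisfies $\|a_i^n(t)\|_\infty\leq c$ for every $t$. Moreover $a_i^n\to a_i$ in $L^2(0,T;L^2(\Omega))$, by the Lebesgue differentiation theorem for Bochner integrals (equivalently, because step functions are dense in $L^2(0,T;L^2(\Omega))$ and the associated averaging operators are contractions converging strongly to the identity).

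Next, for each $n$ the preceding lemma supplies a function $\mathbf{y}_n\in C([0,T];\mathbf{L}^2(\Omega))$ solving the integral equations \eqref{eq:vareq1} with $(a_1,a_2)$ replaced by $(a_1^n,a_2^n)$, together with the uniform bound $\|\mathbf{y}_n(\cdot,t)\|_\infty\leq e^{ct}\|\mathbf{y}^0\|_\infty$. I would then show that $(\mathbf{y}_n)$ is Cauchy in $C([0,T];\mathbf{L}^2(\Omega))$. Subtracting the equations for $\mathbf{y}_n$ and $\mathbf{y}_m$ and writing each difference of products as $a_i^n(y_{j,n}-y_{j,m})+(a_i^n-a_i^m)y_{j,m}$, I use that $\mathcal{T}(t)$ is a contraction (Corollary \ref{CorIIhol}) and that multiplication by $a_i^n(s)$ is bounded on $L^2(\Omega)$ with norm at most $c$ to control the first group by $c\,\|\mathbf{y}_n(\cdot,s)-\mathbf{y}_m(\cdot,s)\|_{\mathbf{L}^2}$. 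The crucial second group is estimated by $\|(a_i^n-a_i^m)(s)\|_{L^2(\Omega)}\,\|y_{j,m}(\cdot,s)\|_\infty\leq e^{cs}\|\mathbf{y}^0\|_\infty\,\|(a_i^n-a_i^m)(s)\|_{L^2(\Omega)}$, where the uniform $L^\infty$ bound on the approximate solutions is exactly what makes this product square-integrable. With $d(t)=\|\mathbf{y}_n(\cdot,t)-\mathbf{y}_m(\cdot,t)\|_{\mathbf{L}^2}$, applying Cauchy--Schwarz in time to the source and invoking Gronwall's inequality gives $\sup_t d(t)\leq C_T\big(\|a_1^n-a_1^m\|_{L^2(0,T;L^2)}+\|a_2^n-a_2^m\|_{L^2(0,T;L^2)}\big)\to 0$. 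Hence $\mathbf{y}_n\to\mathbf{y}$ in $C([0,T];\mathbf{L}^2(\Omega))$, and by the same splitting one passes to the limit inside \eqref{eq:vareq1} to verify that $\mathbf{y}$ solves the integral equations with the original coefficients $(a_1,a_2)$.

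Finally I would transfer the bound to the limit: for fixed $t$, $\mathbf{y}_n(\cdot,t)\to\mathbf{y}(\cdot,t)$ in $\mathbf{L}^2(\Omega)$, so a subsequence converges a.e. on $\Omega$; since $|\mathbf{y}_n(\mathbf{x},t)|\leq e^{ct}\|\mathbf{y}^0\|_\infty$ for a.e. $\mathbf{x}$, the a.e. limit inherits the same pointwise bound, yielding \eqref{eq:linfgrores}. The main obstacle is the apparent circularity: controlling the $L^2$ difference of two approximants requires an $L^\infty$ bound on one of them to make the cross term $(a_i^n-a_i^m)y_{j,m}$ square-integrable, and this is available only because the preceding lemma already provides a uniform $L^\infty$ bound in the piecewise-constant case. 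Organizing the argument so that the $L^\infty$ bound is used solely on the $\mathbf{y}_n$ (where it holds) and is recovered for $\mathbf{y}$ only afterward, via a.e. convergence, is what closes the estimate.
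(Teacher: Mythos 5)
Your proposal is correct and takes essentially the same route as the paper's proof: approximate $a_1,a_2$ by piecewise-constant-in-time functions obeying the same non-negativity and $L^\infty$ bounds, invoke the preceding lemma for the approximate solutions, prove the sequence is Cauchy in $C([0,T];\mathbf{L}^2(\Omega))$ via the splitting $a^n y^n - a^m y^m = a^n(y^n-y^m) + (a^n-a^m)y^m$ combined with the uniform $L^\infty$ bound on the approximants and Gronwall's inequality, and pass to the limit in the integral equations. Your explicit Bochner-average construction of the approximants and your a.e.-subsequence argument for transferring the $L^\infty$ bound to the limit are slightly more detailed than the paper's treatment (which cites a density result and asserts the limiting bound directly), but the underlying argument is the same.
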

\begin{proof}
	Given that $a_1,a_2 \in L^{2}(0,T;L^2(\Omega))$, we know that there exists \sloppy a sequence of \kar{piecewise constant} (with respect to time) non-negative functions  $(a_1^i)_{i=1}^{\infty},(a_2^i)_{i=1}^{\infty}$ in $L^{2}(0,T;L^2(\Omega))$ such that $
	\lim_{i \rightarrow \infty}\|a_j^i-a_j\|_{L^{2}(0,T;L^2(\Omega))}=0$, for $j =1,2$ \cite{roubivcek2013nonlinear}[Proposition 1.36]. Moreover, for each $j \in \lbrace 1,2\rbrace$, we can assume that $\|a_j^i(t)\|_{\infty} \leq c$ for all $t \in [0,T]$ and all $i \in \mathbb{Z}_+$. Consider the corresponding sequence $(\y)_{i=1}^\infty$ in $C([0,T];\eL^2(\Omega))$ defined by 
	\begin{eqnarray}
 \label{eq:nonauto}
	y^i_1(\cdot,t) \hspace{-2mm} &=& \hspace{-1mm} \mathcal{T}(t)y^0_1 - \int_0^t \mathcal{T}(t-s)\Big ( a^i_1(\cdot,s) y^i_1(\cdot,s) \Big)ds \nonumber \\ &~& \hspace{-1mm} + \int_0^t \mathcal{T}(t-s)\Big ( a_2(\cdot,s) y^i_2(\cdot,s) \Big) ds, \nonumber \\
	y^i_2(\cdot,t) \hspace{-2mm} &=&  \hspace{-1mm} y^0_2+\int_0^t a^i_1(\cdot,s) y^i_1(\cdot,s) ds   - \int_0^t a^i_2(\cdot,s) y^i_2(\cdot,s) ds \nonumber \\
	\label{eq:vareqseq}
	\end{eqnarray}
	for each $i\in \mathbb{Z}_+$.  Let  $\kar{\mathbf{e}^{i,j}} \in C([0,T];\eL^2(\Omega))$ be given by \kar{$\mathbf{e}^{i,j}  =\y^i - \y^j$} for each \kar{$i,j \in \mathbb{Z}_+$}. Then, from equations \eqref{eq:vareq1} and \eqref{eq:vareqseq}, we know that \kar{$\mathbf{e}^{i,j}$} satisfies 
	\begin{eqnarray}
	\kar{e^{i,j}_1}(\cdot,t) &=&  - \int_0^t \mathcal{T}(t-s) \Big ( a^i_1(\cdot,s) y^i_1(s,\cdot) \Big )ds \nonumber \\
	&~&+ \int_0^t \mathcal{T}(t-s)\Big ( a^i_2(\cdot,s) y^i_2(s,\cdot) \Big) ds \nonumber \\
	&~& + \int_0^t \mathcal{T}(t-s)\Big ( a^{\kar{j}}_1(\cdot,s) y^{\kar{j}}_1(\cdot,s) \Big)ds \nonumber \\
	&~& - \int_0^t \mathcal{T}(t-s)\Big (a^{\kar{j}}_2(\cdot,s) y^{\kar{j}}_2(\cdot,s) \Big ) ds  \nonumber \\
	&=&  - \int_0^t \mathcal{T}(t-s)\Big ( (a^i_1(\cdot,s) -a^{\kar{j}}_1(\cdot,s)) y^i_1(\cdot,s)  \Big )ds  \nonumber \\
	&~ & + \int_0^t \mathcal{T}(t-s)\Big ( a^{\kar{j}}_1(\cdot,s) (y^{\kar{j}}_1(\cdot,s)-y^i_1(s,\cdot))  \Big )ds \nonumber \\
	& ~& + \int_0^t \mathcal{T}(t-s)\Big ( (a^i_2(\cdot,s)-a^{\kar{j}}_2(\cdot,t)) y^i_2(\cdot,s)  \Big )ds \nonumber \\
	&~ & - \int_0^t \mathcal{T}(t-s)\Big ( a^{\kar{j}}_2(\cdot,s) (y^{\kar{j}}_2(\cdot,s)-y^i_2(\cdot,s))  \Big )ds  \nonumber
	\end{eqnarray}
	for all $t \in [0,T]$.  Considering the fact that the semigroup $\mathcal{T}(t)$ is contractive (Proposition  \ref{opprop1hol}), and that $a_{\kar{k}}^i$ and $y^i_{\kar{k}}$ are uniformly bounded in $L^{\infty}((0,T) \times \Omega)$ \kar{for $k=1,2$,} we can conclude that there exists a constant $\alpha>0$ such that \kar{
	\begin{eqnarray}
	\|e^{i,j}_1(\cdot,t)\|_2 &\leq &  \alpha\|a_1^i-a^j_1\|_{\infty}  \|y^0_1\|_{\infty}  \nonumber \\
	&~& + ~\alpha\|a^j_1\|_{\infty} \int_0^t \|e^{i,j}_1(s)\|_2ds\nonumber \\
	&~& +  ~\alpha\|a_2^i-a^j_2\|_{\infty}  \|y^0_2\|_{\infty}    \nonumber \\
	&~& + ~\alpha \|a^j_2\|_{\infty} \int_0^t\|e^{i,j}_2(s)\|_{2} ds
	\label{eq:ineq1rcon}
	\end{eqnarray}
	for all $t \in [0,T]$. Similarly, we can obtain  the estimate
	\begin{eqnarray}
	\|e^{i,j}_2(\cdot,t)\|_2 &\leq& \alpha\|a_1^i-a^j_1\|_{\infty}  \|y^0_1\|_{\infty}  \nonumber \\
	&~& +~ \alpha\|a^j_1\|_{\infty}  \int_0^t\|e^{i,j}_1(s)\|_{2}ds  \nonumber \\
	&~& +~ \alpha \|a_2^i-a^j_2\|_{\infty}  \|y^0_2\|_{\infty}  \nonumber \\
	&~& +~ \alpha \|a^j_2\|_{\infty} \int_0^t\|e^{i,j}_2(s)\|_{2} ds
	\label{eq:ineq2rcon}
	\end{eqnarray}
	for all $t \in [0,T]$. 
	
	Then, by considering the sum $\|e^{i,j}_1(\cdot,t)\|_2+\|e^{i,j}_2(\cdot,t)\|_2 $, combining the two inequalities \eqref{eq:ineq1rcon} and \eqref{eq:ineq2rcon}, and applying the integral form of Gronwall's inequality \cite{evans1998partial}, we have that 
	\begin{equation}
\|e^{i,j}_1(\cdot,t)\|_2+\|e^{i,j}_2(\cdot,t)\|_2 ~\leq~  C_1e^{C_2t}
	\label{eq:gronres1}
	\end{equation}
	for all $t \in [0,T]$, where $C_1$ and $C_2$ are constants depending only on $\beta$ (the upper bound on the functions $r_i$) and $\|\mathbf{y}^0\|_{\infty}$. From the inequality \eqref{eq:gronres1}, we can infer that 
	\begin{equation*}
	\lim_{i,j \rightarrow \infty}\|\mathbf{e}^{i,j}\|_{C([0,T];\eL^2(\Omega))} = 0
	\end{equation*}
 
	This implies that the solution of 
 \eqref{eq:vareq1} is continuous with respect to $a_i$, and by continuity, one can construct a solution $\mathbf{y}$ corresponding to the coefficients $a_i$.} Considering the estimate \eqref{eq:linfgro}, we can conclude that  $\y$ satisfies the estimate \eqref{eq:linfgrores}.
\end{proof}

From the above lemma, we can conclude the following theorem on global existence of solutions of the PDE \eqref{eq:clpPDEdsicon}. 

\begin{theorem}
	Suppose $\y^0 \in \eL^{\infty}(\Omega)$. Then the PDE \eqref{eq:clpPDEdsicon} has a unique global mild solution. 
\end{theorem}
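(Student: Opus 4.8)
The plan is to obtain the global solution by a contraction-mapping (Banach fixed point) argument for local existence, followed by a continuation argument that exploits the global boundedness of the reaction maps $F_i$ to rule out finite-time blow-up. For a fixed horizon $T>0$, define the map $\Phi$ that sends $\y \in C([0,T];\eL^2(\Omega))$ to the pair given by the right-hand sides of the mild-solution integral equations, with the coefficients $a_i := F_i(y_2)$ frozen inside the integrals; a fixed point of $\Phi$ is exactly a local mild solution of \eqref{eq:clpPDEdsicon}. Because each $r_i$ takes values in $[0,c]$, the coefficients satisfy $\|a_i(t)\|_\infty \le c$ and are non-negative, so Lemma \ref{lem:linfbnd} applies directly: it guarantees that $\Phi(\y)$ is well defined, lies in $C([0,T];\eL^2(\Omega))$, and obeys the a priori bound $\|\Phi(\y)(t)\|_\infty \le e^{ct}\|\y^0\|_\infty$.

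First I would set up the correct complete metric space. Let $R := e^{cT}\|\y^0\|_\infty$ and take $\mathcal{X}_T := \{\y \in C([0,T];\eL^2(\Omega)) : \|\y(t)\|_\infty \le R \text{ for all } t \in [0,T]\}$, a closed (hence complete) subset of $C([0,T];\eL^2(\Omega))$; the previous paragraph shows $\Phi$ maps $\mathcal{X}_T$ into itself. To show $\Phi$ is a contraction for $T$ small, I would estimate $\Phi(\y) - \Phi(\tilde\y)$ in the $C([0,T];\eL^2(\Omega))$ norm, closing the estimate purely in $\eL^2$ since, as noted in Remark \ref{rmk:noLaSa}, the $y_2$-component enjoys no smoothing. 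Writing $a_i = F_i(y_2)$, $\tilde a_i = F_i(\tilde y_2)$ and splitting each product as $a_i w_i - \tilde a_i \tilde w_i = a_i(w_i-\tilde w_i) + (a_i-\tilde a_i)\tilde w_i$, the first piece is controlled by $\|a_i\|_\infty \le c$, while for the second I would use the Hölder bound $\|(a_i-\tilde a_i)\tilde w_i\|_2 \le \|F_i(y_2)-F_i(\tilde y_2)\|_2 \,\|\tilde w_i\|_\infty$, the global Lipschitz continuity of the Nemytskii map $\|F_i(f)-F_i(g)\|_2 \le L\|f-g\|_2$ (with $L$ the Lipschitz constant of $r_i$), and the uniform bound $\|\tilde w_i\|_\infty \le R$ valid on $\mathcal{X}_T$. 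Combined with the contractivity of $(\mathcal{T}(t))_{t\ge0}$ from Lemma \ref{opprop1hol} and Gronwall's inequality, this yields $\|\Phi(\y)-\Phi(\tilde\y)\|_{C([0,T];\eL^2)} \le \kappa(T)\,\|\y-\tilde\y\|_{C([0,T];\eL^2)}$ with $\kappa(T)\to 0$ as $T\to 0$, so choosing $T$ small gives a unique fixed point, i.e.\ a unique local mild solution on $[0,T]$.

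To pass from local to global existence I would continue the solution, the crucial observation being that the a priori estimate from Lemma \ref{lem:linfbnd} applies to the fixed point itself: since its own coefficients $F_i(y_2)$ are non-negative and bounded by $c$, the solution satisfies $\|\y(t)\|_\infty \le e^{ct}\|\y^0\|_\infty$ throughout its interval of existence, so the $\eL^\infty$ norm cannot blow up in finite time. Fixing an arbitrary $N>0$, on $[0,N]$ every solution obeys the uniform bound $R_N := e^{cN}\|\y^0\|_\infty$; with this fixed bound the contraction time produced above depends only on $c$, $L$, and $R_N$, hence is uniform across restarts. I would then solve successively on $[0,T]$, $[T,2T]$, $\dots$, each time taking the endpoint value (which lies in $\eL^\infty(\Omega)$ by the a priori bound) as the new initial datum, covering $[0,N]$ in finitely many steps; since $N$ is arbitrary, this produces a global mild solution. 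Global uniqueness follows from a direct Gronwall estimate on the $\eL^2$ difference of two global solutions, which share the common $\eL^\infty$ bound on every $[0,N]$.

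The main obstacle is the absence of any regularizing mechanism in the $y_2$-equation: all difference estimates must be closed in $\eL^2$, with $\eL^\infty$ used only to keep products bounded, rather than by bootstrapping into a Sobolev space, which is why the Hölder splitting that places the $L^\infty$ factor on the solution and the $L^2$ factor on the Lipschitz-controlled coefficient difference is the key technical device. A secondary point requiring care is verifying that the local existence time does not shrink to zero along the continuation; this is resolved precisely by the global (rather than merely local) boundedness of the maps $F_i$, which caps the coefficients uniformly by $c$ and keeps $R_N$ finite on every compact time interval.
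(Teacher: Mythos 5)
Your proposal is correct and follows essentially the same route as the paper's proof: the same fixed-point map (freeze the coefficients $a_i = F_i(v_2)$ and solve the resulting linear system via Lemma \ref{lem:linfbnd}), the same restriction to an $\eL^{\infty}$-bounded complete subset of $C([0,T];\eL^2(\Omega))$, the same contraction estimate built from the Lipschitz property of the $r_i$, the $L^\infty$ bound on the frozen solution factor, and the contractivity of $(\mathcal{T}(t))_{t\geq 0}$, and the same use of the a priori bound of Lemma \ref{lem:linfbnd} to extend the local solution globally. Your write-up is somewhat more explicit than the paper's on the continuation step (uniform restart time over $[0,N]$) and on global uniqueness, but these are elaborations of the same argument, not a different one.
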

\begin{proof}
We use a contraction mapping approach to construct the solution. Consider a  map  $\Gamma$ on $C([0,T];\mathbf{L}^2(\Omega))$ that is defined by $\mathbf{v} \mapsto \Gamma (\mathbf{v}) \equiv \tilde{\mathbf{v}}$, where $\tilde{\mathbf{v}}$ is constructed by setting $a_i = F_i(v_2)$, $i \in \{1,2\}$, in \eqref{eq:vareq1}: 
\begin{eqnarray}
	\tilde{v}_1(\cdot,t) &=& \mathcal{T}(t)y^0_1  - \int_0^t\mathcal{T}(t-s)\Big (F_1(v_2(\cdot,s)) \tilde{v}_1(\cdot,s) \Big) ds \nonumber \\ 
	&~&  + ~ \int_0^t \mathcal{T}(t-s)\Big (F_2( v_2(\cdot,s))\tilde{v}_2(\cdot,s) \Big) ds, \nonumber \\
	\tilde{v}_2(\cdot,t)  &=&  y^0_2+\int_0^t F_1(v_2(\cdot,s)) \tilde{v}_1(\cdot,s)  ds \nonumber \\
	&~&- \int_0^t F_2 (v_2(\cdot,s))\tilde{v}_2(\cdot,s) ds 
 \label{eq:eq12}
	\end{eqnarray}
for all $t \in [0,T]$. 
 Then $\mathbf{y}$ is a solution of the PDE \eqref{eq:clpPDEdsicon} if it is a fixed point of the map $\Gamma$. From Lemma \ref{lem:linfbnd}, any solution of the PDE \eqref{eq:clpPDEdsicon} must be attained in the set  $\mathcal{Y} = \{  \mathbf{v} \in C([0,T],\mathbf{L}^2(\Omega));   \|v(t)\|_{\infty} \leq \hat{C} ~\forall t \in [0,T] \}$  for some sufficiently large constant $\hat{C} >0$ that depends on  $\|\mathbf{y}^0\|_{\infty}$. Therefore, it suffices to check that the map $\Gamma$ is a contraction on $\mathcal{Y}$, which guarantees that it has a unique fixed point. 
 
 Let $\mathbf{v},\mathbf{u} \in \mathcal{Y}$. To show that $\Gamma$ is a contraction on $\mathcal{Y}$, we will derive an upper bound on 
 $\sup_{t \in [0,T]}\|\Gamma(\mathbf{v})(t)-\Gamma(\mathbf{u})(t)\|_2$.  In order to estimate $\|\tilde{v}_1(\cdot,t) -\tilde{u}_1(\cdot,t)\|_2$, we first compute an upper bound on the corresponding terms in the difference according to the computation in \eqref{eq:eq12}:
\begin{align*}
& \hspace{-1mm} \bigg \|- \int_0^t\mathcal{T}(t-s)\Big (F_1(v_2(\cdot,s)) \tilde{v}_1(\cdot,s) \Big) ds \\ 
 &  ~~~~~ +  \int_0^t\mathcal{T}(t-s)\Big (F_1(u_2(\cdot,s)) \tilde{u}_1(\cdot,s) \Big) ds  \bigg \|_2~~~ \leq \\
& \hspace{-1mm} \bigg \| \int_0^t\mathcal{T}(t-s)\big (-F_1(v_2(\cdot,s)) \tilde{v}_1(\cdot,s) \\ 
& ~~~~~ + ~ F_1(u_2(\cdot,s)) \tilde{v}_1(\cdot,s) \big) ds \bigg \|_2 ~~~ + \\
& \hspace{-1mm} \bigg \| \int_0^t\mathcal{T}(t-s)\big (F_1(u_2(\cdot,s)) \tilde{v}_1(\cdot,s)  -  F_1(u_2(\cdot,s)) \tilde{u}_1(\cdot,s) \big) ds \bigg \|_2
\end{align*}
Since $r_i$ in the definition \eqref{eq:reacp} of $F_i$ is a bounded Lipschitz function (with, say, Lipschitz constant $K>0$), and $\mathcal{T}(t)$ is a contraction, we can compute the following upper bounds on the left-hand side of the above inequality:
\begin{align*}
& \bigg |\int_0^t\|\big (-F_1(v_2(\cdot,s))   +F_1(u_2(\cdot,s)\Big)  \tilde{v}_1(\cdot,s) \|_2 ds\\
& ~~~~~+~ c\int_0^t\|\tilde{v}_1(\cdot,s) -  \tilde{u}_1(\cdot,s) \|_2 ds \bigg | \\
& \leq ~~ K\int_0^t\|\Big (-v_2(\cdot,s)  +u_2(\cdot,s)\Big)  \tilde{v}_1(\cdot,s)\|_2  ds\\
& ~~~~~+~  c\int_0^t\|\tilde{v}_1(\cdot,s) -  \tilde{u}_1(\cdot,s) \|_2  ds \\
& \leq ~~ \hat{C} K\int_0^t \|v_2(\cdot,s)  - u_2(\cdot,s)\|_2  ds\\
& ~~~~~+~ c\int_0^t\|\tilde{v}_1(\cdot,s) -  \tilde{u}_1(\cdot,s) \|_2  ds \\
& \leq ~~ \hat{C}T K \sup_{t \in [0,T]}\|v_2(\cdot,t)  - u_2(\cdot,t)\|_2  \\
& ~~~~~+~ T  c \sup_{t \in [0,T]}\|\tilde{v}_1(\cdot,t) -  \tilde{u}_1(\cdot,t) \|_2.
\end{align*}
Using a similar computation, we can estimate that 
\begin{align*}
&\sup_{t \in [0,T]}(\|\tilde{v}_1(\cdot,t) -\tilde{u}_1(\cdot,t)\|_2 +\|\tilde{v}_2(\cdot, t)-\tilde{u}_2(\cdot,t)\|_2 ) \\ 
& \hspace{1cm} \leq ~~~~~ \tilde{C}T \sup_{t \in [0,T]}\|v_1(\cdot,t)  - u_1(\cdot,t)\|_2 \\
&  \hspace{1.5cm} + ~\tilde{C}T \sup_{t \in [0,T]}\|v_2(\cdot,t)  - u_2(\cdot,t)\|_2 \\
&  \hspace{1.5cm} +~ \tilde{C}T \sup_{t \in [0,T]}\|\tilde{v}_1(\cdot,t) -  \tilde{u}_1(\cdot,t) \|_2 \\
&  \hspace{1.5cm} +~ \tilde{C}T \sup_{t \in [0,T]}\|\tilde{v}_2(\cdot,t) -  \tilde{u}_2(\cdot,t) \|_2
\end{align*}
for a sufficiently large 
constant $\tilde{C}>0$ that depends only on $K$, $c$, and $\hat{C} $. This implies that 
\begin{align*}
& \sup_{t \in [0,T]}(\|\tilde{v}_1(\cdot,t) -\tilde{u}_1(\cdot,t)\|_2 +\|\tilde{v}_2(\cdot, t)-\tilde{u}_2(\cdot,t)\|_2 ) \\ 
& \hspace{0.75cm} \leq ~~~~\frac{\tilde{C}  T}{1-\tilde{C}T} \sup_{t \in [0,T]}\|v_1(\cdot,t)  - u_1(\cdot,t)\|_2 \\
& \hspace{1cm} ~ + ~\frac{\tilde{C}T}{1-\tilde{C}T} \sup_{t \in [0,T]}\|v_2(\cdot,t)  - u_2(\cdot,t)\|_2. 
\end{align*}
Thus, if $T>0$ is small enough, $\Gamma$ is a contraction map on $\mathcal{Y}$. This implies that for small enough $T>0$, $\Gamma$ has a fixed point $\mathbf{y}$ that is a unique local mild solution to the PDE \eqref{eq:clpPDEdsicon}. Then, due to the uniform bound on the solution established in Lemma \ref{lem:linfbnd}, it follows that the solution  can, in fact, be extended to arbitrary $T>0$, and is therefore global.
\end{proof}

Next, our goal will be to prove that $\y^d$ is the globally asymptotically stable equilibrium of the system \eqref{eq:clpPDEdsicon}. Toward this end, we first prove the following preliminary results.  

\begin{lemma}
	\label{lem:possolvar}
 Suppose $y^0 \in L^{\infty}(\Omega)$ and $T>0$. Let $a \in L^{\infty}(\Omega)$ be non-negative. Consider the multiplication operator $B:L^2(\Omega)  \rightarrow L^2(\Omega)$ defined by
	\begin{eqnarray}
	(By)(\mathbf{x}) =  -a(\mathbf{x})y(\mathbf{x}) \nonumber 
	\end{eqnarray}
	for all $\mathbf{x} \in \Omega$ and all $y \in L^{2}(\Omega)$. Let $(\mathcal{T}^C(t))_{t \geq 0}$ be the semigroup generated by the operator $C =-A+B$. Then $\|\mathcal{T}^C(t)y^0\|_{\infty} \leq \|y^0\|_{\infty}$ for all $t \geq 0$.
\end{lemma}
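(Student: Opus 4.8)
The plan is to follow the strategy used in the proof of Lemma \ref{lem:Linffbnd}, exploiting the fact that here the bounded perturbation $B$ is itself the generator of a contraction semigroup on $L^{\infty}(\Omega)$. First I would recall from Corollary \ref{CorIIhol} (applied with $a=b=\mathbf{1}$) that the semigroup $(\mathcal{T}(t))_{t\geq 0}$ generated by $-A$ satisfies $\|\mathcal{T}(t)y^0\|_{\infty} \leq \|y^0\|_{\infty}$ for all $y^0 \in L^{\infty}(\Omega)$ and $t\geq 0$; that is, $(\mathcal{T}(t))_{t\geq 0}$ is an $L^{\infty}$-contraction. Next, since $B=-\mathcal{M}_a$ is a bounded multiplication operator, it generates the explicit semigroup $(\mathcal{T}^{B}(t)y)(\mathbf{x}) = e^{-a(\mathbf{x})t}y(\mathbf{x})$. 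Because $a\geq 0$, we have $0 \leq e^{-a(\mathbf{x})t}\leq 1$ for all $t\geq 0$ and a.e.\ $\mathbf{x}\in\Omega$, whence $\|\mathcal{T}^{B}(t)y^0\|_{\infty} \leq \|y^0\|_{\infty}$, so that $(\mathcal{T}^{B}(t))_{t\geq 0}$ is \emph{also} an $L^{\infty}$-contraction. This is precisely where the present lemma improves on Lemma \ref{lem:Linffbnd}: the off-diagonal production terms that produced the exponential growth factor $e^{\|\mathbf{a}\|_{\infty}t}$ there are absent, leaving a purely dissipative perturbation.

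Second, I would verify that the Lie-Trotter product formula is applicable to $C = -A + B$. Since $B$ is bounded, $C$ is a bounded perturbation of the generator $-A$ and hence generates a strongly continuous semigroup on $L^2(\Omega)$; moreover, exactly as in Lemma \ref{lem:Linffbnd}, the resolvent of $C$ admits an explicit Neumann-series representation, so $\lambda - C$ has dense range for all $\lambda$ in a suitable right half-plane, and the hypotheses of \cite{engel2000one}[Corollary III.5.8] are met. This yields
\begin{equation}
\mathcal{T}^{C}(t)y^0 = \lim_{N\to\infty}\left(\mathcal{T}\left(\tfrac{t}{N}\right)\mathcal{T}^{B}\left(\tfrac{t}{N}\right)\right)^N y^0
\end{equation}
with convergence in the strong operator topology of $L^2(\Omega)$, for each $t\geq 0$.

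Finally, I would transfer the $L^{\infty}$ estimate through this limit. For each $N$, the operator $\mathcal{T}(t/N)\mathcal{T}^{B}(t/N)$ is a composition of two $L^{\infty}$-contractions and is therefore itself an $L^{\infty}$-contraction, so its $N$-fold power satisfies $\|(\mathcal{T}(t/N)\mathcal{T}^{B}(t/N))^N y^0\|_{\infty} \leq \|y^0\|_{\infty}$ for every $N$. Since this approximating sequence converges to $\mathcal{T}^{C}(t)y^0$ in $L^2(\Omega)$, a subsequence converges a.e., and a uniform $L^{\infty}$ bound is preserved under a.e.\ limits; hence $\|\mathcal{T}^{C}(t)y^0\|_{\infty} \leq \|y^0\|_{\infty}$, as required. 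The one technical point to handle with care is precisely this last transfer: the Lie-Trotter convergence holds only in the $L^2$ topology, so one must pass to an a.e.-convergent subsequence (equivalently, invoke weak-$*$ lower semicontinuity of the $L^{\infty}$-norm) to propagate the uniform bound to the limit, in the same manner as in the corresponding step of Lemma \ref{lem:Linffbnd}.
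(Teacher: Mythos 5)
Your proposal is correct and follows essentially the same route as the paper's own proof: $L^{\infty}$-contractivity of $(\mathcal{T}(t))_{t\geq 0}$ from Corollary \ref{CorIIhol}, the explicit multiplication semigroup $(e^{-a(\cdot)t})_{t\geq 0}$ generated by $B$ (contractive since $a\geq 0$), and the Lie--Trotter product formula. The only difference is that you spell out the details the paper leaves implicit, namely the verification of the Trotter hypotheses and the passage of the uniform bound through the strong $L^2$-limit via an a.e.-convergent subsequence, both of which are handled correctly.
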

\begin{proof}
	We know that if $(\mathcal{T}(t))_{t \geq 0}$ is the semigroup generated by the operator $-A$, then from Corollary \ref{CorIIhol}, $\|\kar{\mathcal{T}(t)}\mathbf{y}^0\|_{\infty} \leq \|y^0\|_{\infty}$ for all $t \geq 0$. Moreover, $B$ generates the multiplication  semigroup $(e^{-a(\cdot)t})_{t\geq 0}$. Then the result follows from the Lie-Trotter formula \cite{engel2000one}.
\end{proof}

\begin{lemma}
	\label{lem:postimvar}
	Let $T>0$. Let $f,a \in L^2(0, T;L^{2}(\Omega))$ be non-negative functions defined such that $\|f(t)\|_{\infty}$ and $\|a(t)\|_{\infty}$ are bounded by a constant $C>0$ almost everywhere on $t \in [0,T]$. Suppose $e \in C([0,T];L^2(\Omega))$ is given by
	\begin{eqnarray*}
	e(\cdot,t) 	= &   - \int_0^t \mathcal{T}(t-s)\Big ( a(\cdot,s)e(\cdot,s) \Big) ds \\ 
 & + \int_0^t \mathcal{T}(t-s)f(\cdot,s) ds 
	\end{eqnarray*}
for all $t \in [0,T]$. Then $e(\cdot,t)$ is non-negative for all $t \in [0,T]$.
\end{lemma}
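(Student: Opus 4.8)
The plan is to recognize the given identity as the variation-of-constants (mild) formula for the non-autonomous Cauchy problem
\[
\dot e(t) = -A e(t) - a(\cdot,t)\, e(t) + f(\cdot,t), \qquad e(\cdot,0) = 0,
\]
in which $-A$ is treated as the generator and the term $-a e + f$ as the inhomogeneity. Since the initial datum vanishes and $f \geq 0$, the claim amounts to saying that the evolution associated with the perturbed operator $-A - a(\cdot,t)$ maps the non-negative cone of $L^2(\Omega)$ into itself. First I would reduce to the case in which $a$ is piecewise constant in time, exactly as in the approximation step of Lemma \ref{lem:linfbnd}: choose non-negative piecewise-constant functions $a^i \in L^2(0,T;L^2(\Omega))$ with $\|a^i(t)\|_\infty \leq C$ and $a^i \to a$ in $L^2(0,T;L^2(\Omega))$, and let $e^i \in C([0,T];L^2(\Omega))$ denote the solution of the corresponding linear Volterra equation (whose existence and uniqueness follow from the same contraction argument already used for the preceding lemmas).

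For fixed time the multiplication operator $y \mapsto -a(\cdot,t)\, y$ is bounded and generates the positive multiplication semigroup $y \mapsto e^{-a(\cdot,t)s}\, y$. Consequently, on each time slab where $a^i$ is constant and equal to some $a_k \geq 0$, the operator $C_k := -A - a_k$ generates a semigroup $(\mathcal{T}^{C_k}(s))_{s\geq 0}$ that is \emph{positive}: by the Lie--Trotter product formula, invoked exactly as in Lemma \ref{lem:Linffbnd} and Lemma \ref{lem:possolvar}, $\mathcal{T}^{C_k}(s)$ is the strong limit of products $\big(\mathcal{T}(s/N)\, e^{-a_k s/N}\big)^N$ of the positive operators $\mathcal{T}(s/N)$ (Corollary \ref{CorIIhol}) and the positive multiplication operators $e^{-a_k s/N}$, hence is itself positive. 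On the slab $[t_{k-1},t_k)$ the mild solution admits the equivalent representation
\[
e^i(\cdot,t) = \mathcal{T}^{C_k}(t - t_{k-1})\, e^i(\cdot,t_{k-1}) + \int_{t_{k-1}}^t \mathcal{T}^{C_k}(t-s)\, f(\cdot,s)\, ds,
\]
which coincides with the given formula on that slab by uniqueness of mild solutions. Starting from $e^i(\cdot,0)=0 \geq 0$ and arguing by induction over the slabs, each term on the right is non-negative (a positive semigroup applied to a non-negative function), so $e^i(\cdot,t) \geq 0$ for every $t \in [0,T]$.

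Finally I would pass to the limit $i \to \infty$. The difference $e^i - e$ satisfies a Volterra identity driven by $a^i - a$ (the $f$-terms cancel, and $a^i e^i - a e = a^i(e^i-e) + (a^i-a)e$), so the same contraction/Gronwall estimate used to establish continuous dependence in Lemma \ref{lem:linfbnd}, namely inequalities \eqref{eq:ineq1rcon}--\eqref{eq:gronres1}, yields $\|e^i - e\|_{C([0,T];L^2(\Omega))} \to 0$. Since the non-negative cone of $L^2(\Omega)$ is closed and each $e^i \geq 0$, the limit satisfies $e(\cdot,t) \geq 0$ for all $t$, which is the assertion. The main obstacle is the bookkeeping that the two mild formulations agree on each slab --- placing $-a e$ inside the source versus folding it into the generator $C_k$ --- together with the positivity of the perturbed semigroup $\mathcal{T}^{C_k}$; both are settled by uniqueness of mild solutions and the Lie--Trotter argument already in place, leaving only the routine piecewise induction and the Gronwall passage to the limit.
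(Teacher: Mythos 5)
Your proposal is correct and takes essentially the same route as the paper's own (sketched) proof: approximate $a$ by non-negative piecewise-constant-in-time functions, deduce non-negativity of each approximant from the positivity of the frozen-coefficient semigroup of Lemma \ref{lem:possolvar} (established via Lie--Trotter and Corollary \ref{CorIIhol}), and pass to the limit with the Gronwall/continuous-dependence estimate of Lemma \ref{lem:linfbnd}, positivity surviving because the non-negative cone of $L^2(\Omega)$ is closed. The only difference is one of detail: you make explicit the slab-by-slab induction and the equivalence (by uniqueness of mild solutions) of the two mild formulations, which the paper leaves implicit.
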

\begin{proof}
		Since the proof follows a similar line of argument as  the proof of Lemma \ref{lem:linfbnd}, we only sketch an outline of the proof here. As in the proof of Lemma \ref{lem:linfbnd}, for a given $a \in L^{2}(0,T;L^{2}(\Omega))$ we can construct a sequence $(a^i)_{i=1}^{\infty}$ in $L^{2}(0,T;L^{2}(\Omega))$ that is piecewise constant in time and converges in $L^{2}(0,T;L^{2}(\Omega))$, with $\|a^i(t)\|_\infty$ bounded almost everywhere on $[0,T]$ by $C>0$. Let the sequence $(e_i)_{i=1}^{\infty}$ in $C([0,T];L^2(\Omega))$ be given by
	\begin{eqnarray}
	e_i(\cdot,t) 	= & - \int_0^t \mathcal{T}(t-s)\Big ( a^i(\cdot,s)e_i(\cdot,s) \Big )ds \nonumber \\ & +~ \int_0^t \mathcal{T}(t-s)f(\cdot,s) ds \nonumber
	\end{eqnarray}
	for all $t \in [0,T]$. \kar{Since $e_i(t)$ is also the solution of the PDE $\dot{e}_i(t) = -A e_i(t) - a^i(\cdot,t) e_i(t) +f(\cdot,t)$ with initial condition equal to $0$,  and this solution can be constructed using the positive semigroup $\mathcal{T}^C(t)$ in Lemma \ref{lem:possolvar},} we can conclude that $(e_i)_{i=1}^{\infty}$ is non-negative for each $i \in \mathbb{Z}_+$. Then, using the fact that the sequences $(e_i)_{i=1}^{\infty}$ and $(a^i )_{i=1}^{\infty}$ are uniformly bounded in the spaces $C([0,T];L^2(\Omega))$ \kar{and} $L^{2}(0,T;L^{2}(\Omega))$, respectively, and applying Gronwall's lemma, the result follows.
\end{proof}

We can use Lemma \ref{lem:postimvar} prove the 
next result,  which will enable us to show 
later on that the rate of convergence of the solution $\y$ of the PDE \eqref{eq:clpPDEdsicon} toward $\mathbf{0}$ can be controlled by the rate of convergence of the solution of a related linear PDE.

\begin{theorem}
	\label{thm:cmppi}
	\textbf{(Comparison Principle)} \\
Let $T>0$.	Let $y^0 \in L^2(\Omega)$ and $f,g \in L^2(0, T;L^{2}(\Omega))$ be non-negative such that $\|f(t)\|_{\infty}$ and $\|g(t)\|_{\infty}$ are bounded by a constant $C_1>0$ almost everywhere on $t \in [0,T]$. Define the operator $C = -{A} - \|g\|_{\infty} \mathbf{I}$. Let $y(\cdot, t)$ be given by
\begin{eqnarray}
\label{eq:yEq}
y(\cdot,t) = &\mathcal{T}(t)y^0- \int_0^t \mathcal{T}(t-s)\Big ( g(\cdot,s)y(\cdot,s)  \Big)ds \nonumber \\
	~&+ \int_0^t \mathcal{T}(t-s)f(\cdot,s) ds 
\end{eqnarray}
for all $t \in [0,T]$. Then $y(\cdot,t) \geq  \mathcal{T}^C(t)y^0$ for all $t \in [0,T] $, where $(\mathcal{T}^C(t))_{t \geq 0}$ is the semigroup generated by $C$. 
\end{theorem}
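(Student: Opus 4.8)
The plan is to reduce the claim to the positivity result of Lemma~\ref{lem:postimvar} by studying the difference $e := y - \mathcal{T}^C(\cdot)y^0$ and showing that it is non-negative. Write $w(t) := \mathcal{T}^C(t)y^0$. Since $\|g\|_{\infty}$ is a finite constant (namely the essential supremum of $g$ over $(0,T)\times\Omega$, which is bounded by $C_1$) and $-\|g\|_{\infty}\mathbf{I}$ is a scalar multiple of the identity commuting with $-A$, the semigroup generated by $C=-A-\|g\|_{\infty}\mathbf{I}$ factors as $\mathcal{T}^C(t)=e^{-\|g\|_{\infty}t}\mathcal{T}(t)$. Consequently $w$ admits the mild (Duhamel) representation
\begin{equation*}
w(t) = \mathcal{T}(t)y^0 - \int_0^t \mathcal{T}(t-s)\,\|g\|_{\infty}\, w(s)\,ds.
\end{equation*}
Because $\mathcal{T}(t)$ is positive (Corollary~\ref{CorIIhol}) and $e^{-\|g\|_{\infty}t}>0$, the semigroup $\mathcal{T}^C(t)$ is itself positive, so $w(t)\geq 0$ whenever $y^0\geq 0$; this is precisely the ingredient that will make the forcing term below non-negative.

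Next I would subtract this representation of $w$ from the defining equation~\eqref{eq:yEq} of $y$. The leading terms $\mathcal{T}(t)y^0$ cancel, leaving
\begin{equation*}
e(t) = \int_0^t \mathcal{T}(t-s)\big[-g(\cdot,s)y(\cdot,s) + \|g\|_{\infty}\, w(\cdot,s) + f(\cdot,s)\big]\,ds.
\end{equation*}
The key algebraic step is to substitute $y = e + w$ inside the integrand, which converts $-g\,y + \|g\|_{\infty} w$ into $-g\,e + (\|g\|_{\infty} - g)\,w$. Thus $e$ satisfies
\begin{equation*}
e(t) = -\int_0^t \mathcal{T}(t-s)\big(g(\cdot,s)\,e(\cdot,s)\big)\,ds + \int_0^t \mathcal{T}(t-s)\,\tilde f(\cdot,s)\,ds,
\end{equation*}
with $\tilde f := (\|g\|_{\infty} - g)\,w + f$. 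This is exactly the integral equation of Lemma~\ref{lem:postimvar}, with coefficient $a = g$ and zero initial data (indeed $e(0)=y^0-y^0=\mathbf{0}$).

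It then remains to verify the hypotheses of Lemma~\ref{lem:postimvar}. The coefficient $a=g$ is non-negative and bounded by $C_1$. For $\tilde f$ I would use that $\|g\|_{\infty} - g(\cdot,s)\geq 0$ almost everywhere (since $\|g\|_{\infty}$ is the essential supremum of $g$), that $w(\cdot,s)\geq 0$ by the positivity established above, and that $f\geq 0$; hence $\tilde f\geq 0$. Boundedness of $\tilde f$ in $L^{\infty}$ and membership in $L^2(0,T;L^2(\Omega))$ follow from $0\le \|g\|_{\infty}-g\le C_1$ together with the $L^{\infty}$-contractivity of $\mathcal{T}(t)$ (Corollary~\ref{CorIIhol}), which bounds $\|w(s)\|_{\infty}\le \|y^0\|_{\infty}$. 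Continuity $e\in C([0,T];L^2(\Omega))$ follows from $y\in C([0,T];L^2(\Omega))$ and strong continuity of $\mathcal{T}^C$. Applying Lemma~\ref{lem:postimvar} yields $e(t)\geq 0$, that is, $y(\cdot,t)\geq \mathcal{T}^C(t)y^0$ for all $t\in[0,T]$.

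I expect the main obstacle to be the technical bookkeeping guaranteeing that $\tilde f$ genuinely satisfies the boundedness and integrability requirements of Lemma~\ref{lem:postimvar}, in particular controlling $\|w(s)\|_{\infty}$ where the essential boundedness of the data and the $L^{\infty}$-contraction of $\mathcal{T}$ enter. The positivity of $\tilde f$, by contrast, drops out immediately from the $y=e+w$ substitution once one observes that the constant $\|g\|_{\infty}$ dominates $g(\cdot,s)$ pointwise a.e.; this domination is the whole point of replacing the time- and space-dependent decay rate $g$ by the worst-case constant rate $\|g\|_{\infty}$ in the comparison operator $C$.
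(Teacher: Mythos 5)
Your proposal is correct and follows essentially the same route as the paper's own proof: both write $\mathcal{T}^C(t)y^0$ in its mild Duhamel form, subtract it from \eqref{eq:yEq}, rearrange so that the difference $e$ satisfies the integral equation of Lemma \ref{lem:postimvar} with coefficient $g$ and non-negative forcing $f + (\|g\|_{\infty}-g)\,\mathcal{T}^C(\cdot)y^0$, and conclude $e \geq 0$. Your substitution $y = e + w$ is algebraically identical to the paper's add-and-subtract of $\int_0^t \mathcal{T}(t-s)\bigl(g(\cdot,s)\tilde{y}(\cdot,s)\bigr)\,ds$, so the two arguments coincide step for step.
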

\begin{proof}
	Let $\tilde{y}(\cdot,t) = \mathcal{T}^C(t)y^0$ for all $t \geq 0$. Then, we know that $\tilde{y}(\cdot,t)$ satisfies the equation
	\begin{equation}
 \label{eq:ytildeEq}
	\tilde{y}(\cdot,t) = \mathcal{T}(t)y^0 - \int_0^t \mathcal{T}(t-s)\|g\|_{\infty}\tilde{y}(\cdot,s) ds
	\end{equation}
	for all $t \in [0,T]$.
	Let $e = y - \tilde{y} $. \kar{Substituting equations \eqref{eq:yEq} and \eqref{eq:ytildeEq} for $y$ and $\tilde{y}$, respectively, we have that
 \begin{eqnarray}
	e(\cdot,t) &=&  - \int_0^t \mathcal{T}(t-s) \Big ( g(\cdot,s)y(\cdot,s) \Big )ds
	 \nonumber \\ 
	 &~&+ \int_0^t \mathcal{T}(t-s)f(\cdot,s) ds   \nonumber \\
	&~&+ \int_0^t \mathcal{T}(t-s)\|g\|_{\infty}\tilde{y}(\cdot,s) ds \nonumber
 \end{eqnarray}
 for all $t \in [0,T]$. By adding and subtracting the term $ \int_0^t \mathcal{T}(t-s) \Big ( g(\cdot,s)\tilde{y}(\cdot,s) \Big )ds$ in this expression for $e$, we obtain
	\begin{eqnarray}
	e(\cdot,t) 
	&=&  - \int_0^t \mathcal{T}(t-s)\Big ((g(\cdot,s) )e(\cdot,s) \Big ) ds  \nonumber \\
	&~&+ \int_0^t \mathcal{T}(t-s)f(\cdot,s) ds \nonumber \\
		&~& + \int_0^t \mathcal{T}(t-s)\Big ( (\|g\|_{\infty}-g(\cdot,s))\tilde{y}(\cdot,s) \Big ) ds  \nonumber 
	\end{eqnarray}
	for all $t \in [0,T]$.} Then the result follows from the non-negativity of $e$, which is a consequence of Lemma \ref{lem:postimvar}.
\end{proof}

\begin{theorem}
	\label{thm:poslo}
	\textbf{(Positive Lower Bound on Solutions)}
	Let $T>0$. Let $y^0 \in L^2(\Omega)$ and $f,g \in L^2(0, T;L^{2}(\Omega))$ be non-negative such that $\|f(t)\|_{\infty}$ and $\|g(t)\|_{\infty}$ are bounded by a constant $C_1>0$ almost everywhere on $t \in [0,T]$. Let $y(\cdot, t)$ be given by 
	\begin{eqnarray}
	y(\cdot,t) &= & \mathcal{T}(t)y^0 - \int_0^t \mathcal{T}(t-s) \Big ( g(\cdot,s)y(\cdot,s) \Big ) ds \nonumber \\
	&~&+ \int_0^t \mathcal{T}(t-s)f(\cdot,s) ds
	\end{eqnarray}
	for all $t \in [0,T]$. Then there exist constants $\tau,\epsilon,\delta>0 $, independent of $y^0$ and $T>0$, such that if $\tau +\delta < T$, then $y(\cdot,t) \geq \epsilon \|y^0\|_1$ for all $t \geq [\tau,\tau+\delta]$.
\end{theorem}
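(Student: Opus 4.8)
The plan is to sandwich the solution $y$ from below by a simpler object whose positivity is already understood, namely the free semigroup $\mathcal{T}(t)$ modulated by a scalar exponential, and then to invoke the uniform lower bound of Theorem \ref{thm:origbnd}. Concretely, I would first apply the Comparison Principle (Theorem \ref{thm:cmppi}) with the same data $y^0, f, g$, which yields $y(\cdot,t) \geq \mathcal{T}^C(t)y^0$ for all $t \in [0,T]$, where $C = -A - \|g\|_\infty \mathbf{I}$. This reduces the problem to establishing a positive lower bound for $\mathcal{T}^C(t)y^0$, and it discards the inhomogeneous term $f$ (whose non-negativity only helps) as well as the precise profile of $g$, retaining only its essential supremum.

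The next step is to identify $\mathcal{T}^C$ explicitly. Since $\|g\|_\infty$ is a nonnegative scalar and the operator $-\|g\|_\infty \mathbf{I}$ commutes with $-A$, the semigroup generated by $C = -A - \|g\|_\infty \mathbf{I}$ factors as $\mathcal{T}^C(t) = e^{-\|g\|_\infty t}\mathcal{T}(t)$; this can be checked directly by verifying that $t \mapsto e^{-\|g\|_\infty t}\mathcal{T}(t)y^0$ solves the abstract ODE $\dot{u} = Cu$, $u(0) = y^0$. Because $y^0$ is non-negative, Theorem \ref{thm:origbnd} provides a constant $C_0 > 0$ and a time $T_0 > 0$, both independent of $y^0$, such that $\mathcal{T}(t)y^0 \geq C_0\|y^0\|_1$ for all $t \geq T_0$. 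Combining these two facts gives $y(\cdot,t) \geq e^{-\|g\|_\infty t}C_0\|y^0\|_1$ for all $t \geq T_0$, and using the hypothesis $\|g\|_\infty \leq C_1$ we obtain $y(\cdot,t) \geq C_0\, e^{-C_1 t}\|y^0\|_1$.

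It then remains to choose the constants. I would set $\tau = T_0$ and fix any $\delta > 0$ (e.g. $\delta = 1$); on the interval $[\tau, \tau+\delta]$ the factor $e^{-C_1 t}$ is bounded below by $e^{-C_1(\tau+\delta)}$, so the claim holds with $\epsilon = C_0\, e^{-C_1(\tau+\delta)}$. Since $T_0$ and $C_0$ come from Theorem \ref{thm:origbnd} and $C_1$ is the given bound, none of $\tau, \delta, \epsilon$ depend on $y^0$ or on $T$, and the estimate is valid for every $T > \tau + \delta$, as required.

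I do not anticipate a serious obstacle here: essentially all of the analytic difficulty is already packaged into Theorem \ref{thm:origbnd} (the Gaussian / sub-Laplacian heat-kernel lower bound) and into the Comparison Principle. The only points requiring care are the semigroup factorization $\mathcal{T}^C(t) = e^{-\|g\|_\infty t}\mathcal{T}(t)$ and the bookkeeping that guarantees $\tau, \delta, \epsilon$ are genuinely uniform in $y^0$ and $T$; the latter hinges precisely on the fact that the threshold time $T_0$ in Theorem \ref{thm:origbnd} is itself independent of $y^0$.
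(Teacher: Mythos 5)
Your proposal is correct and follows essentially the same route as the paper's proof: apply the Comparison Principle (Theorem \ref{thm:cmppi}) to get $y(\cdot,t) \geq \mathcal{T}^C(t)y^0$ with $C = -A - \|g\|_{\infty}\mathbf{I}$, identify $\mathcal{T}^C(t) = e^{-\|g\|_{\infty}t}\mathcal{T}(t)$, and invoke the uniform lower bound of Theorem \ref{thm:origbnd}. Your write-up is in fact somewhat more careful than the paper's terse version, since you make explicit the choice of $\tau$, $\delta$, $\epsilon$ and use $\|g\|_{\infty} \leq C_1$ to ensure these constants are uniform in $g$, $y^0$, and $T$ — details the paper leaves implicit.
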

\begin{proof}
	We know from Theorem \ref{thm:origbnd} that there exists a constant $k>0$ and time $T >0$, independent of $y^0$, such that $\mathcal{T}(t)y^0 \geq k \|y^0\|_1 $ for all $t \geq T$. Let $C = -A- \|g\|_{\infty} \mathbf{I}$. Then the semigroup $(\mathcal{T}^{C}(t))$ generated by the operator $C$ is given by $\mathcal{T}(t) = e^{-\|g\|_{\infty}t}\mathcal{T}(t)$ for all $t \geq 0$. The result then follows from Theorem \ref{thm:cmppi}.
\end{proof}

The following theorem states the fundamental result that the PDE \eqref{eq:clpPDEdsicon} conserves mass and maintains positivity.
\begin{theorem}
	\label{thm:solpo}
	Let $
 \kar{\mathbf{y}^0} \in \mathbf{L}^{\infty}(\Omega)$ be non-negative. Then the unique global mild solution of the PDE \eqref{eq:vecpde} is non-negative, and $ \|\mathbf{y}(\cdot,t)\|_1 = \| \mathbf{y}^0\|_1$ for all $t \geq 0$. 
\end{theorem}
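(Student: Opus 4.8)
The plan is to establish the two assertions separately: mass conservation by integrating the mild-solution identities over $\Omega$, and non-negativity by freezing the reaction coefficients along the solution and exploiting the positivity of the associated semigroups together with a limiting argument in the spirit of Lemma \ref{lem:linfbnd}. Throughout, I would use that the unique global solution $\mathbf{y}$ furnished by the preceding existence theorem is a fixed point of $\Gamma$, and hence satisfies the linear variation-of-constants system \eqref{eq:vareq1} with the particular coefficients $a_i := F_i(y_2(\cdot,\cdot))$, which by \eqref{eq:reacp} are non-negative and bounded in $L^\infty$ by the constant $c$.

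For the non-negativity, I would follow the approximation scheme of Lemma \ref{lem:linfbnd}: choose piecewise-constant-in-time approximants $a_i^k \to a_i$ and let $\mathbf{y}^k$ solve the corresponding linear system. On each time interval over which the coefficients are constant, $\mathbf{y}^k$ evolves under the semigroup $\mathcal{T}^{\mathbf{C}_j}$ generated by $\mathbf{C}_j = -\mathbf{A} + \mathbf{B}_j$, where $\mathbf{B}_j$ is the bounded reaction operator of Lemma \ref{lem:Linffbnd}. The key point is that each $\mathcal{T}^{\mathbf{C}_j}$ is a \emph{positive} semigroup: by the Lie-Trotter product formula $\mathcal{T}^{\mathbf{C}_j}(t) = \lim_{N} \big(\mathcal{T}^{\mathbf{A}}(t/N)\mathcal{T}^{\mathbf{B}_j}(t/N)\big)^N$, the factor $\mathcal{T}^{\mathbf{A}}$ is positive by Corollary \ref{CorIIhol}, while $\mathcal{T}^{\mathbf{B}_j}$ is positive because the operator $\mathbf{B}_j + \|\mathbf{a}\|_\infty \mathbf{I}$ maps the non-negative cone into itself (acting on $\mathbf{y} \geq 0$, its first component $(\|\mathbf{a}\|_\infty - a_1)y_1 + a_2 y_2$ and second component $a_1 y_1 + (\|\mathbf{a}\|_\infty - a_2)y_2$ are both non-negative), so that $e^{t\mathbf{B}_j} = e^{-\|\mathbf{a}\|_\infty t}\exp\!\big(t(\mathbf{B}_j + \|\mathbf{a}\|_\infty \mathbf{I})\big)$ is a norm-convergent series of positive operators. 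Composing positive operators across the subintervals and applying them to the non-negative data $\mathbf{y}^0$ shows $\mathbf{y}^k(\cdot,t) \geq 0$ for all $t$; since $\mathbf{y}^k \to \mathbf{y}$ in $C([0,T];\mathbf{L}^2(\Omega))$ and the non-negative cone of $\mathbf{L}^2(\Omega)$ is closed, $\mathbf{y}(\cdot,t) \geq 0$ follows.

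For the mass conservation, with non-negativity in hand the reaction products $F_1(y_2)y_1$ and $F_2(y_2)y_2$ are non-negative and bounded, so I would integrate the two mild identities in \eqref{eq:vecpde} over $\Omega$, apply Fubini, and invoke the mass-conservation property $\int_\Omega \mathcal{T}(\tau)g\, d\mathbf{x} = \int_\Omega g\, d\mathbf{x}$ of Corollary \ref{CorIIhol} for the diffusion semigroup $\mathcal{T}$ generated by $-A$ (extended by scaling to any non-negative integrable $g$). Adding the resulting two scalar identities, the integrals of the reaction terms cancel identically, leaving $\int_\Omega (y_1 + y_2)(\cdot,t)\, d\mathbf{x} = \int_\Omega (y_1^0 + y_2^0)\, d\mathbf{x}$; by non-negativity this is exactly $\|\mathbf{y}(\cdot,t)\|_1 = \|\mathbf{y}^0\|_1$ for all $t \geq 0$. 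I expect the main obstacle to lie in the non-negativity claim, specifically in verifying the positivity of $\mathcal{T}^{\mathbf{C}_j}$ and in controlling the passage to the limit $k \to \infty$; the mass conservation is then a direct bookkeeping computation.
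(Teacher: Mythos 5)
Your proposal is correct, but its positivity half takes a genuinely different route from the paper's. For mass conservation you and the paper do essentially the same thing: the paper takes the inner product of the solution of \eqref{eq:clpPDEdsicon} with a constant function, which is precisely your integration of the two mild identities over $\Omega$ together with the mass-conservation property of $(\mathcal{T}(t))_{t \geq 0}$ from Corollary \ref{CorIIhol}, the reaction integrals cancelling in the sum exactly as you describe. For non-negativity, however, the paper runs no approximation argument at all: it simply invokes the positivity criterion of \cite{duprez2017criterion}[Theorem 1] for semilinear systems, verifying its quasi-positivity hypothesis that $G(\y) + \lambda \y \geq 0$ for all non-negative $\y \in \eL^2(\Omega)$ once $\lambda$ exceeds the bound $c$ on the rates $r_i$, where $G$ denotes the reaction nonlinearity of \eqref{eq:clpPDEdsicon}. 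Your argument replaces this citation with a self-contained proof: freeze the coefficients $a_i = F_i(y_2)$ along the already-constructed solution, approximate them by piecewise-constant-in-time coefficients as in Lemma \ref{lem:linfbnd}, establish positivity of each frozen semigroup $\mathcal{T}^{\mathbf{C}_j}$ via the Lie--Trotter formula --- your verification that $\mathbf{B}_j + \|\mathbf{a}\|_\infty \mathbf{I}$ preserves the non-negative cone is exactly the frozen-coefficient form of the paper's quasi-positivity condition --- and pass to the limit using closedness of the cone. What the paper's route buys is brevity; what yours buys is self-containedness and consistency with the paper's own internal methodology, since Lemmas \ref{lem:possolvar} and \ref{lem:postimvar} prove positivity for the scalar analogue by precisely your freezing-plus-Trotter strategy, so your proof is the natural extension of those tools to the system setting. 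The two points that need care in a full write-up, both of which you flag, are (i) uniqueness of mild solutions of the frozen linear system (a Gronwall argument implicit in the estimates of Lemma \ref{lem:linfbnd}), which is what guarantees that the nonlinear solution coincides with the limit of your approximants $\mathbf{y}^k$, and (ii) that the piecewise-constant approximants can be chosen non-negative and bounded by $c$, which the paper already notes in the proof of Lemma \ref{lem:linfbnd}.
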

\begin{proof}
	The conservation of mass is a simple consequence of taking the inner product of the solution of \eqref{eq:clpPDEdsicon} with a constant function. The positivity property of solutions follows from \cite{duprez2017criterion}[Theorem 1] by noting that, if $\lambda >0$ is large enough, then $G(\y) + \lambda \y \geq  0$ for all $\y \in \eL^2(\Omega)$ that are non-negative. 
\end{proof}

We 
now require some additional notation. For a function $f \in L^2{(\Omega)}$, we define $f_+ := \frac{|f|+f}{2}$, the projection of $f$ onto the set of non-negative functions in $L^2(\Omega)$, and  $f_-:=-\frac{|f|-f}{2}$, the projection of $f$ onto the set of non-positive functions in $L^2(\Omega)$. Given these definitions, we have the following result on partial monotonicity of solutions of the PDE \eqref{eq:clpPDEdsicon}. 

\begin{proposition}
	\label{prop:monprop}
	\textbf{(Partial Monotonicity of Solutions)}
	Let $\kar{\y^0} \in \eL^{\infty}(\Omega)$ be positive. Then, for all $t \geq s \geq 0$, the unique global mild solution of the PDE \eqref{eq:vecpde} satisfies 
	\begin{eqnarray}
	\label{eq:moneq1}
	( y^d - y_2(\cdot,t))_+ ~\leq~  (y^d - y_2(\cdot,s))_+ \\
	( y^d - y_2(\cdot,t))_- ~\geq~  (  y^d - y_2(\cdot,s))_- \label{eq:moneq2} 
	\end{eqnarray}
\end{proposition}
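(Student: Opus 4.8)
The plan is to exploit the fact that the diffusion operator $A$ does not act on the second component, so that the evolution of $y_2$ is, at each spatial point, a scalar ordinary differential equation driven by the non-negative input $y_1$. First I would pass from the $\mathbf{L}^2$-valued integral equation for $y_2$ in the definition of the mild solution to a pointwise statement. Since $\y^0 \in \eL^{\infty}(\Omega)$, Lemma \ref{lem:linfbnd} guarantees that $\y(\cdot,s)$ stays in $\eL^{\infty}(\Omega)$ uniformly on $[0,T]$, so the integrand $F_1(y_2)y_1 - F_2(y_2)y_2$ is bounded in $L^{\infty}((0,T)\times\Omega)$. A Fubini argument then shows that for almost every $\mathbf{x}\in\Omega$ the map $t \mapsto y_2(\mathbf{x},t)$ is absolutely continuous and satisfies
\[
\partial_t y_2(\mathbf{x},t) = F_1(y_2)(\mathbf{x},t)\,y_1(\mathbf{x},t) - F_2(y_2)(\mathbf{x},t)\,y_2(\mathbf{x},t)
\]
for almost every $t$. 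By Theorem \ref{thm:solpo}, both $y_1$ and $y_2$ are non-negative.

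Next I would set $w := y^d - y_2$ and determine the sign of $\partial_t w = -\partial_t y_2$ according to the sign of $w$, using the support hypotheses on the reaction functions. By the definition \eqref{eq:reacp} of $F_i$, we have $F_1(y_2) = r_1(y_2 - y^d) = r_1(-w)$ and $F_2(y_2) = r_2(-w)$. Since $\mathrm{supp}\,r_2 = [0,\infty)$, on the set $\{w>0\}$ (equivalently $-w<0$) we have $F_2(y_2)=0$, whence $\partial_t w = -F_1(y_2)\,y_1 \leq 0$; since $\mathrm{supp}\,r_1 = (-\infty,0]$, on $\{w<0\}$ we have $F_1(y_2)=0$, whence $\partial_t w = F_2(y_2)\,y_2 \geq 0$. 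Thus, pointwise in $\mathbf{x}$, the scalar trajectory $w(\mathbf{x},\cdot)$ decreases wherever it is positive and increases wherever it is negative; the behaviour on $\{w=0\}$ is irrelevant for what follows.

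Finally I would apply the elementary chain rule for the Lipschitz maps $r \mapsto r_+$ and $r \mapsto r_-$ composed with the absolutely continuous function $t \mapsto w(\mathbf{x},t)$: for almost every $\mathbf{x}$ and almost every $t$,
\[
\tfrac{d}{dt}\,(w(\mathbf{x},t))_+ = \partial_t w\,\chi_{\{w>0\}} \leq 0, \qquad \tfrac{d}{dt}\,(w(\mathbf{x},t))_- = \partial_t w\,\chi_{\{w<0\}} \geq 0,
\]
where the level set $\{w=0\}$ contributes nothing because $\partial_t w = 0$ almost everywhere there. Integrating these inequalities from $s$ to $t$ yields $(w(\mathbf{x},t))_+ \leq (w(\mathbf{x},s))_+$ and $(w(\mathbf{x},t))_- \geq (w(\mathbf{x},s))_-$ for almost every $\mathbf{x}$, which are exactly \eqref{eq:moneq1} and \eqref{eq:moneq2}. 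The main obstacle I anticipate is the first step: rigorously descending from the Bochner integral equation to the family of scalar a.e.-in-$\mathbf{x}$ ODEs, and justifying the chain rule for the positive and negative parts in this almost-everywhere setting. Once this measure-theoretic bookkeeping is in place, the sign analysis and the concluding integration are routine.
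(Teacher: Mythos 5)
Your proof is correct, and it takes a genuinely different route from the paper's. Both arguments rest on the same preliminary step, which the paper performs implicitly: descending from the Bochner integral equation to a pointwise-in-$\x$ scalar equation for $y_2(\x,\cdot)$, justified by Fubini because the integrand $F_1(y_2)y_1 - F_2(y_2)y_2$ is bounded in $L^{\infty}((0,T)\times\Omega)$ via Lemma \ref{lem:linfbnd}. After that the proofs diverge. The paper proves only \eqref{eq:moneq1} (declaring \eqref{eq:moneq2} analogous) and argues by contradiction: on the set where $y^d - y^0_2 \geq 0$ it supposes an up-crossing of $y^d$ occurs by some time $t_2$, uses continuity of $t \mapsto y_2(\cdot,t)$ in $L^2(\Omega)$ to extract a crossing interval $[t_1,t_2]$ and a positive-measure set $\Omega_2$ on which $y_2 \geq y^d$, observes that $F_1(y_2)$ vanishes there so $y_2$ is non-increasing on that interval, and obtains a contradiction; monotone increase below $y^d$ then yields \eqref{eq:moneq1}. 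You instead apply the Stampacchia-type chain rule for the Lipschitz truncations $w \mapsto w_{\pm}$ along the absolutely continuous scalar trajectories $w(\x,\cdot) = y^d(\x) - y_2(\x,\cdot)$, combined with the same sign analysis ($F_2$ vanishes on $\{w>0\}$, $F_1$ vanishes on $\{w<0\}$, non-negativity of $y_1,y_2$ from Theorem \ref{thm:solpo}, and $\partial_t w = 0$ a.e.\ on the level set $\{w=0\}$). Your route buys three things: both inequalities follow from one symmetric computation; you avoid the somewhat delicate selection of the crossing time $t_1$ and set $\Omega_2$; and points where $y_2$ starts at or above $y^d$, which the paper's argument handles only implicitly, are covered automatically because the truncation formula is indifferent to whether trajectories touch $y^d$. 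What the paper's route buys is that it stays at the level of integral equations and sign comparisons, never invoking a.e.\ differentiability of $y_2(\x,\cdot)$ or the chain rule for truncations, which is precisely the measure-theoretic machinery you correctly flag as the point needing care (a jointly measurable representative, the Fubini swap for a countable dense set of times, then continuity in $t$). That bookkeeping is standard, and with it spelled out your proof is complete.
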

\begin{proof}
	We will only prove the first inequality \eqref{eq:moneq1}. Since $\y^0 \in \eL^{\infty}(\Omega)$, we know that $y_2 \in C([0,1];L^{2}(\Omega))$ and $\|y_2(t) \|_{\infty } $ is uniformly bounded over $[0,T]$. 
	Assume that $y^d-y^0_2$ is non-zero and non-negative on a set $\Omega_1 \subseteq \Omega $ of positive measure. For the sake of contradiction, suppose that there exists $t_2 \in (0,T]$ such that $y_2(\cdot,t_2)$ is greater than $y^d$ on a subset of $\Omega_1$ that has positive Lebesgue measure. Then, due to the fact that $y_2 \in C([0,T];L^2(\Omega))$, there must exist  $ t_1 \in (0,t_2)$ and a measurable set $\Omega_2 \subset \Omega_1$ of positive Lebesgue measure, such that for each $s \in [t_1,t_2]$, $y_2(\mathbf{x},s) \geq y^d(\mathbf{x})$ for almost every $\mathbf{x} \in \Omega_2$, with $y_2(\mathbf{x},t_2) \neq y_2(\mathbf{x},s)$ for almost every $\x \in \Omega_2$ and a subset of $[t_1,t_2]$ with positive Lebesgue measure. However, we know that 
	\begin{eqnarray}
	y_2(\cdot,t) &=& y_2(\cdot,t_1) + \int_{\kar{t_1}}^t F_1(y_2(\cdot,\tau)) y_1(\cdot,\tau)d\tau \nonumber \\
	&~& - \int_{\kar{t_1}}^t F_2(y_2(\cdot,\tau)) \kar{y_2   (\cdot,\tau)} d\tau
	\end{eqnarray}
	for all $ t \in [t_1,t_2]$. 
	This implies that 
	\begin{eqnarray}
	y_2(\x,t) & = & y_2(\x,t_1) + \int_{\kar{t_1}}^t F_1(y_2(\x,\tau)) y_1(\x,\tau)d\tau \nonumber \\
	&~& - \int_{\kar{t_1}}^t F_2(y_2(\x,\tau)) y_2 (\x,\tau)d\tau \nonumber \\
	&=& y_2(\x,t_1)  - \int_{\kar{t_1}}^t r_2(y_2(\x,\tau)-y^d(\x)) \kar{y_2(\x,\tau)} d\tau \nonumber 
	\end{eqnarray}
	for almost every $\x \in \Omega_2$ and for all $t \in [t_1,t_2]$. The function $y_2$  is non-negative due to Theorem \ref{thm:solpo}. Moreover,  $r_2$ is also non-negative    by definition. Thus, we arrive at the contradiction that $y_2(\x,t) \leq y_2(\x,t_1)$ for almost every $\x \in \Omega_1$ and for all $t \in [t_1,t_2]$. Hence, we must have that 
	\begin{equation}
	y_2(\x,t) = y_2(\x,t_1) + \int_{\kar{t_1}}^t  r_1(y_2(\x,\tau)-y^d(\x)) y_1(\x,\tau)d\tau  \nonumber
	\end{equation}
	for almost every $\x \in \Omega_1$ and for all $t \in [0,T]$. This implies that $y_2$ is non-decreasing with time, and is less than or equal to $y^d$ almost everywhere on $\Omega_1$. This proves the first inequality \eqref{eq:moneq1}. 
	Using a similar argument, based on the fact that $r_1$ and $r_2$ are non-negative bounded functions, we can arrive at the second inequality \eqref{eq:moneq2}.
\end{proof}

Using the above proposition, we will establish global asymptotic stability of the system \eqref{eq:clpPDEdsicon} in the $L^1$ norm. Towards this end, we first establish marginal stability of the system about the equilibrium distribution $\y^d$.
\begin{theorem}
	\textbf{($L^1$-Lyapunov Stability)}
	Let $\y^0 \in \eL^{\infty}(\Omega)$ be positive and $\int_{\Omega} \mathbf{y}^0(\x)d\x =1$. For every $ \epsilon>0$, if
	\begin{equation}
	\|\mathbf{y}^0-\mathbf{y}^d\|_1 \leq  \epsilon,
	\end{equation}
	then the solution $\y(\cdot,t)$ of the system \eqref{eq:clpPDEdsicon} satisfies 
	\begin{equation}
	\|\mathbf{y}(\cdot, t)-\mathbf{y}^d\|_1 \leq 2\epsilon
	\end{equation}
	for all $t \geq 0$.
	
\end{theorem}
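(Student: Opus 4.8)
The plan is to reduce the two-component $L^1$ distance $\|\mathbf{y}(\cdot,t)-\mathbf{y}^d\|_1 = \|y_1(\cdot,t)\|_1 + \|y_2(\cdot,t)-y^d\|_1$ to a single scalar quantity that is already known to be monotone. The two essential ingredients are the positivity and mass-conservation properties from Theorem~\ref{thm:solpo} and the partial monotonicity established in Proposition~\ref{prop:monprop}; beyond these I expect no new regularity or semigroup estimates to be required.

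First I would introduce the two non-negative quantities $P(t) := \|(y^d-y_2(\cdot,t))_+\|_1$ and $N(t):=\|(y^d-y_2(\cdot,t))_-\|_1$, so that $\|y_2(\cdot,t)-y^d\|_1 = P(t)+N(t)$ while $\int_\Omega (y^d - y_2(\cdot,t))\,d\x = P(t)-N(t)$. Since the solution is non-negative (Theorem~\ref{thm:solpo}), I have $\|y_1(\cdot,t)\|_1 = \int_\Omega y_1(\cdot,t)\,d\x$; and since total mass is conserved with $\int_\Omega \mathbf{y}^0\,d\x = 1 = \int_\Omega y^d\,d\x$, it follows that $\int_\Omega y_1(\cdot,t)\,d\x = \int_\Omega y^d\,d\x - \int_\Omega y_2(\cdot,t)\,d\x = P(t)-N(t)$. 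Adding the two contributions yields the key identity $\|\mathbf{y}(\cdot,t)-\mathbf{y}^d\|_1 = (P(t)-N(t)) + (P(t)+N(t)) = 2P(t)$, so the entire distance collapses to twice the mass of the positive part $(y^d-y_2(\cdot,t))_+$.

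The remaining step is monotonicity. Inequality~\eqref{eq:moneq1} of Proposition~\ref{prop:monprop} gives the pointwise bound $(y^d-y_2(\cdot,t))_+ \leq (y^d-y_2(\cdot,s))_+$ for $t\geq s\geq 0$; integrating over $\Omega$ shows that $P$ is non-increasing, hence $P(t)\leq P(0)$ for all $t\geq 0$. Finally I would bound the initial value, $P(0) = \|(y^d-y_2^0)_+\|_1 \leq \|y_2^0-y^d\|_1 \leq \|\mathbf{y}^0-\mathbf{y}^d\|_1 \leq \epsilon$, and conclude $\|\mathbf{y}(\cdot,t)-\mathbf{y}^d\|_1 = 2P(t)\leq 2\epsilon$ for all $t\geq 0$.

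The argument is almost entirely bookkeeping once the identity $\|\mathbf{y}(\cdot,t)-\mathbf{y}^d\|_1 = 2\|(y^d-y_2(\cdot,t))_+\|_1$ is in hand, and I expect the only place requiring genuine care to be the combined use of positivity and mass conservation to eliminate $\|y_1\|_1$. It is precisely the normalization $\int_\Omega \mathbf{y}^0\,d\x = \int_\Omega y^d\,d\x = 1$ that makes $\|y_1(\cdot,t)\|_1 = P(t)-N(t)$ hold; without it the reduction to the single monotone functional $P(t)$ would break down, which is why I would flag that hypothesis as the load-bearing assumption of the proof.
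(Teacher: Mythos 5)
Your proof is correct, and it rests on the same two pillars as the paper's own proof — positivity and mass conservation (Theorem~\ref{thm:solpo}) and partial monotonicity (Proposition~\ref{prop:monprop}) — but your bookkeeping is a genuine refinement. The paper bounds the two components separately: it first argues that $\|y_2(\cdot,t)-y^d\|_1$ is monotone in time (which uses both \eqref{eq:moneq1} and \eqref{eq:moneq2}), hence bounded by $\epsilon$, and then uses mass conservation to get $\|y_1(\cdot,t)\|_1 \leq \|y_2(\cdot,t)-y^d\|_1 \leq \epsilon$, so the $2\epsilon$ arises as the sum of two independent $\epsilon$-bounds. Your exact identity $\|\mathbf{y}(\cdot,t)-\mathbf{y}^d\|_1 = 2P(t)$ collapses the whole distance onto the single functional $P(t) = \|(y^d-y_2(\cdot,t))_+\|_1$ and needs only the first monotonicity inequality \eqref{eq:moneq1}. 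In fact, you left a factor of two on the table: applying your identity at $t=0$ (where positivity and normalization of $\mathbf{y}^0$ hold by hypothesis) gives $P(0) = \tfrac{1}{2}\|\mathbf{y}^0-\mathbf{y}^d\|_1 \leq \epsilon/2$, hence $\|\mathbf{y}(\cdot,t)-\mathbf{y}^d\|_1 = 2P(t) \leq 2P(0) = \|\mathbf{y}^0-\mathbf{y}^d\|_1 \leq \epsilon$; that is, the full $L^1$ distance is itself non-increasing, so it is a true Lyapunov functional and the constant $2$ in the statement is not actually needed. (A minor point your argument incidentally clarifies: the paper's proof says $\|y_2(\cdot,t)-y^d\|_1$ is ``non-decreasing,'' which is a typo for ``non-increasing.'')
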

\begin{proof}
	We know that the solution $\mathbf{y}$  satisfies \[\int_{\Omega}\y(\cdot,t) d\x = \int_{\Omega}y_1(\cdot,t)d\x+  \int_{\Omega}y_2(\cdot,t)d\x =1\]for all $t \in [0,T]$. From Proposition \ref{prop:monprop}, we know that $\|y_2(\cdot,t)-y^d\|_1$ is non-decreasing with time $t$. Hence, $\|y_2(\cdot,t)-y^d\|_1 \leq \epsilon $ for all $t \geq 0$.
	Then, we have that
	\begin{equation}
	\int_{\Omega}y_1(\x,t)d\x +\int_{\Omega}(y_2(\x,t)-y^d(\x))d\x = 1-\int_{\Omega}y^d(\x)d\x \nonumber
	\end{equation}
	for all $t \geq 0$. This implies that
	\begin{eqnarray}
	\int_{\Omega}y_1(\x,t)d\x ~&\leq&~ -\int_{\Omega}(y_2(\x,t)-y^d(\x))d\x \nonumber \\
~&\leq&~\|y_2(\cdot,t) -y^d \|_1 
	~\leq~ \epsilon \nonumber
	\end{eqnarray}
	for all $t \geq 0$. This concludes the proof.
\end{proof}

\begin{proposition}
	\label{prop:posytwo}
	Let $\y^0 \in \eL^{\infty}(\Omega)$ be non-negative and $\|\mathbf{y}^0\|_1 =1$. Then the solution $\y$ of the PDE \eqref{eq:clpPDEdsicon} satisfies $\lim_{t\rightarrow \infty}\|(\kar{y_2(\cdot,t)}-y^d)_{+}\|_{\infty} = 0$.
\end{proposition}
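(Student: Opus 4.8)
The plan is to combine the partial monotonicity from Proposition \ref{prop:monprop} with a pointwise differential inequality on the ``overshoot'' region to extract a \emph{uniform-in-space} rate of decay. Using the identity $(y^d-y_2)_- = -(y_2-y^d)_+$, the second inequality \eqref{eq:moneq2} of Proposition \ref{prop:monprop} reads $(y_2(\cdot,t)-y^d)_+ \le (y_2(\cdot,s)-y^d)_+$ for $t \ge s$, so for a.e.\ $\x$ the map $t \mapsto (y_2(\x,t)-y^d(\x))_+$ is non-increasing. Consequently $t \mapsto \|(y_2(\cdot,t)-y^d)_+\|_\infty$ is non-increasing and is bounded, for \emph{all} $t$, by the finite constant $C := \|(y_2^0-y^d)_+\|_\infty$. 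In particular $L := \lim_{t\to\infty}\|(y_2(\cdot,t)-y^d)_+\|_\infty$ exists, and the goal reduces to showing $L=0$.

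The crux is that monotone pointwise decay alone does \emph{not} force $L^\infty$ decay, so I must produce a decay rate uniform in $\x$. Fix $\alpha>0$ and consider, for a.e.\ $\x$, the absolutely continuous function $t \mapsto y_2(\x,t)$ coming from the integral equation defining the mild solution. On the set where $y_2(\x,t) > y^d(\x)$ we have $y_2-y^d>0$, hence $F_1(y_2) = r_1(y_2-y^d)=0$ since $r_1$ is supported on $(-\infty,0]$; the $y_2$-equation of \eqref{eq:clpPDEdsicon} thus reduces pointwise to
\[
\frac{d}{dt}y_2(\x,t) = -\,r_2\bigl(y_2(\x,t)-y^d(\x)\bigr)\,y_2(\x,t).
\]
Whenever $y_2(\x,t)-y^d(\x) \ge \alpha$, the argument of $r_2$ lies in the fixed compact interval $[\alpha,C] \subset (0,\infty)$, on which the continuous, strictly positive function $r_2$ attains a minimum $\rho_\alpha>0$; moreover $y_2 = (y_2-y^d)+y^d \ge \alpha$ since $y^d \ge 0$. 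Hence on this set $\tfrac{d}{dt}y_2(\x,t) \le -\mu_\alpha$ with $\mu_\alpha := \alpha\rho_\alpha>0$ \emph{independent of} $\x$.

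With this uniform rate, set $T_\alpha := C/\mu_\alpha$ and write $\phi_\x(t) := (y_2(\x,t)-y^d(\x))_+$. I claim $\phi_\x(T_\alpha) < \alpha$ for a.e.\ $\x$: if instead $\phi_\x(T_\alpha) \ge \alpha$, then monotonicity forces $\phi_\x(t) \ge \alpha$ on all of $[0,T_\alpha]$, so (where $\phi_\x>0$ it coincides with $y_2-y^d$) $\phi_\x' \le -\mu_\alpha$ a.e.\ there, and integrating the absolutely continuous $\phi_\x$ gives $\phi_\x(T_\alpha) \le \phi_\x(0) - \mu_\alpha T_\alpha \le C - C = 0$, a contradiction. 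Taking the essential supremum yields $\|(y_2(\cdot,T_\alpha)-y^d)_+\|_\infty \le \alpha$, and by monotonicity of this norm the same bound persists for all $t \ge T_\alpha$. Since $\alpha>0$ is arbitrary, $L=0$, which is the assertion. The non-negativity of $y_1,y_2$ from Theorem \ref{thm:solpo} is used both to drop the $F_1(y_2)y_1$ term and to guarantee $y_2 \ge \alpha$ on the overshoot set.

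The main obstacle is exactly the passage from pointwise to uniform decay: an argument yielding only a.e.\ pointwise (monotone) convergence of $(y_2-y^d)_+$ to $0$ is insufficient, as shrinking-indicator families $\mathbf{1}_{[0,e^{-t}]}$ illustrate. The technical content therefore lies in (i) extracting the uniform rate $\mu_\alpha$, which rests on the positivity of $r_2$ on $(0,\infty)$ together with the \emph{time-uniform} a priori bound $C$ supplied by Proposition \ref{prop:monprop} --- this is what replaces the time-growing bound $e^{ct}\|\y^0\|_\infty$ of Lemma \ref{lem:linfbnd} --- and (ii) justifying the pointwise-in-$\x$ ODE reduction, i.e.\ that for a.e.\ $\x$ the trajectory $t\mapsto y_2(\x,t)$ is absolutely continuous with derivative given by the reaction terms, which follows from the $\eL^\infty$ bounds and Fubini's theorem.
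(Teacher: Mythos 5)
Your proof is correct, and while it rests on the same dynamical mechanism as the paper's --- on the overshoot region $r_1(y_2-y^d)$ vanishes, so the source term $F_1(y_2)y_1$ drops out, and positivity of $r_2$ forces $y_2$ to decrease at a definite rate --- it is organized differently, and in one respect more carefully. The paper argues by contradiction at the level of sets: it asserts that failure of $L^\infty$-convergence, combined with the monotonicity of Proposition \ref{prop:monprop}, produces a \emph{single} set $\Omega_1$ of positive measure and an $\epsilon>0$ with $y_2(\x,t)-y^d(\x)\geq\epsilon$ on $\Omega_1$ for \emph{all} $t\geq 0$; it then integrates $\frac{d}{dt}y_2 = -r_2(y_2-y^d)\,y_2 \leq -\delta\,(y^d+\epsilon)$ on $\Omega_1$ to drive $y_2$ below zero, contradicting the non-negativity of solutions (Theorem \ref{thm:solpo}). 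Your version replaces that set-theoretic step by the uniform rate $\mu_\alpha$ and the finite horizon $T_\alpha = C/\mu_\alpha$, concluding that the overshoot above any level $\alpha$ is extinguished by time $T_\alpha$ uniformly in $\x$. This is a genuine gain in rigor: exactly as your shrinking-indicator example $\mathbf{1}_{[0,e^{-t}]}$ shows, pointwise monotone decay together with non-convergence of the essential supremum does \emph{not} by itself yield a fixed positive-measure set on which the overshoot persists for all time --- the sets $\lbrace \x : (y_2(\x,t)-y^d(\x))_+ \geq \epsilon \rbrace$ are nested decreasing, but their measures may tend to zero, so the paper's first deduction is a gap, which your quantitative estimate (run on intervals of length $T_\alpha$) is precisely what closes. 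Two minor points: the strict positivity of $r_2$ on $(0,\infty)$, which you need for $\rho_\alpha>0$, is an interpretation of the paper's hypothesis that the support of $r_2$ equals $[0,\infty)$ (support alone would permit interior zeros), but the paper's own proof makes the identical reading; and dropping the $F_1(y_2)y_1$ term on the overshoot set follows from the support condition on $r_1$ alone, not from the non-negativity of $y_1$, so your attribution of that step to Theorem \ref{thm:solpo} is slightly off, though harmless.
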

\begin{proof}
	Suppose that, for the sake of contradiction, this is not true. Then, due to the partial monotonicity property of the solution $\y$ stated in Proposition \ref{prop:monprop}, there exists a subset $\Omega_1 \subseteq \Omega$ of positive measure, and a parameter $\epsilon>0$, such that $\kar{y_2(\mathbf{x},t)}-y^d(\x) \geq \epsilon$ for almost every $\kar{\mathbf{x}} \in \Omega_1$ and all $t \geq 0$. However, we know that  
	\begin{eqnarray}
	y_2(\x,t) &=& y_2(\x,\kar{0})  - \int_{\kar{0}}^t \kar{F_2}(y_2(\x,\tau)) y_2(\x,\tau)d\tau \nonumber \\
	&=& y_2(\x,\kar{0})  - \int_{\kar{0}}^t r_2(y_2(\x,\tau)-y^d(\x)) y_2(\x,\tau)d\tau \nonumber
	\end{eqnarray}
	for almost every $\x \in \Omega_1$ and for all $t \geq 0$. We know that the function $r_2$ is non-zero and continuous on the open interval $t \in (0,\infty)$. Hence, there must exist $\delta >0$ such that 
	\begin{eqnarray}
	y_2(\x,t)  ~ & \leq&~  y_2(\x,0)  - \int_0^t \delta y_2(\x,\tau)d\tau \nonumber \\
	 ~&\leq&~ y_2(\x,0) 
	- \delta  \int_0^t(y^d(\x)+\epsilon)d\tau
	\end{eqnarray}
	for almost every $\x \in \Omega_1$ and for all $t \geq 0$. This leads to a contradiction.
\end{proof}


Finally, we can establish attractivity of the equilibrium point $\y^d \in L^{\infty}(\Omega)$. Towards this end, we first prove in the lemma below 
that the density of the agents in the state $Y(t)=0$ (i.e. the state of motion, given by $y_1$, 
 must converge to $0$ eventually.

\begin{lemma}
	Let $\y^0 \in \eL^{\infty}(\Omega)$ be non-negative and $\|\mathbf{y}^0\|_1 =1$. Then $\lim_{t \rightarrow \infty}\|y_1(\cdot,t)\|_1 = 0.$ Hence, $\lim_{t \rightarrow \infty}\|y_2(\cdot,t)\|_1 = 1.$
	\label{lem:linfconvpos}
\end{lemma}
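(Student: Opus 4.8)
The plan is to reduce the claim to a statement about the $L^{1}$-deficit of $y_2$ below the target. First I would use the conservation of mass and non-negativity from Theorem \ref{thm:solpo}, together with $\int_\Omega y^d\,d\mathbf{x}=1$, to write
$\|y_1(\cdot,t)\|_1 = \int_\Omega y_1 = \int_\Omega (y^d - y_2) = \|(y^d - y_2(\cdot,t))_+\|_1 - \|(y_2(\cdot,t)-y^d)_+\|_1$.
Proposition \ref{prop:posytwo} gives $\|(y_2-y^d)_+\|_\infty \to 0$, hence (on the bounded domain) $\|(y_2 - y^d)_+\|_1 \to 0$. By the pointwise monotonicity \eqref{eq:moneq1} in Proposition \ref{prop:monprop}, the non-negative functions $(y^d - y_2(\cdot,t))_+$ are non-increasing in $t$ for a.e.\ $\mathbf{x}$, so they converge pointwise to a limit $g \geq 0$, and by dominated convergence $\|(y^d - y_2(\cdot,t))_+\|_1 \to \|g\|_1 =: L$. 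Thus $\|y_1(\cdot,t)\|_1 \to L$, and it suffices to prove $L = 0$; the final claim $\|y_2\|_1 \to 1$ then follows from $\|y_1\|_1 + \|y_2\|_1 = 1$.

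Next I would argue by contradiction, assuming $L>0$. Since $\|g\|_1 = L > 0$, there is a set $\Omega_{\ast\ast}\subseteq\Omega$ of positive measure and a constant $\eta>0$ with $g \geq \eta$ on $\Omega_{\ast\ast}$; because $(y^d - y_2)_+$ decreases to $g$, this forces $y_2(\mathbf{x},t) \leq y^d(\mathbf{x}) - \eta$ for all $t \geq 0$ and a.e.\ $\mathbf{x}\in\Omega_{\ast\ast}$. On $\Omega_{\ast\ast}$ one therefore has $-\|y^d\|_\infty \leq y_2 - y^d \leq -\eta$ for all time (using $y_2 \geq 0$), so the argument of $r_1$ stays in the compact interval $[-\|y^d\|_\infty,-\eta]\subset(-\infty,0)$, on which the continuous function $r_1$ is strictly positive; hence $F_1(y_2) = r_1(y_2-y^d) \geq c_0 > 0$ on $\Omega_{\ast\ast}$, while $F_2(y_2)=0$ there since $y_2<y^d$.

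Finally I would supply a uniform positive lower bound on $y_1$. Writing the $y_1$-component of the mild solution as a variation-of-constants formula with sink $F_1(y_2) \geq 0$ and non-negative bounded source $F_2(y_2)y_2$, and restarting the equation at time $t-\tau$, Theorem \ref{thm:poslo} (whose constants $\tau,\epsilon,\delta$ are independent of the data and of the initial time) yields $y_1(\cdot,t) \geq \epsilon \|y_1(\cdot,t-\tau)\|_1$ for all large $t$; since $\|y_1(\cdot,t)\|_1 \to L > 0$, this gives $y_1(\cdot,t) \geq \epsilon L/2$ uniformly on $\Omega$ for all sufficiently large $t$. Combining with the previous step, the scalar equation satisfied by $y_2(\mathbf{x},\cdot)$ on $\Omega_{\ast\ast}$ reads $\partial_t y_2 = F_1(y_2)y_1 \geq c_0\,\epsilon L/2 =: c_1 > 0$ for all large $t$, so $y_2(\mathbf{x},t) \to \infty$ on $\Omega_{\ast\ast}$, contradicting the bound $y_2 \leq y^d - \eta$. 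Therefore $L = 0$, which proves $\|y_1(\cdot,t)\|_1 \to 0$ and hence $\|y_2(\cdot,t)\|_1 \to 1$. The crux, and the step I expect to be the main obstacle, is precisely the degeneracy of the feedback $F_1 = r_1(\cdot - y^d)$ near the target: an $L^1$ deficit alone does not bound $F_1$ from below, and it is the pointwise-in-time monotonicity of Proposition \ref{prop:monprop} that upgrades the deficit to a uniform pointwise gap on $\Omega_{\ast\ast}$ and thereby restores a usable lower bound on $F_1$.
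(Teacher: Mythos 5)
Your proposal is correct and rests on the same pillars as the paper's proof -- mass conservation (Theorem \ref{thm:solpo}), partial monotonicity (Proposition \ref{prop:monprop}), decay of the excess $\|(y_2-y^d)_+\|_\infty$ (Proposition \ref{prop:posytwo}), and the pointwise lower bound on $y_1$ obtained by restarting Theorem \ref{thm:poslo} -- but you organize the contradiction differently. The paper assumes $\|y_1(\cdot,t_i)\|_1 \geq \epsilon_1$ only along a subsequence, so Theorem \ref{thm:poslo} yields a pointwise bound on $y_1$ only on windows $[t_i,t_i+\delta]$; it then sums the resulting increments of $y_2$ over infinitely many disjoint windows on the deficit set $\{y_2^0 \leq y^d\}$ to conclude that the deficit $\|(y_2(\cdot,t_i)-y^d)_-\|_\infty$ must vanish, and finally contradicts the assumption via mass conservation. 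You instead first prove that $\|y_1(\cdot,t)\|_1$ actually converges to a limit $L$, via the identity $\|y_1\|_1 = \|(y^d-y_2)_+\|_1 - \|(y_2-y^d)_+\|_1$ together with monotone/dominated convergence -- an observation the paper does not make. This upgrade is what lets you get a lower bound $y_1(\cdot,t) \geq \epsilon L/2$ valid for \emph{all} large $t$ rather than on windows, after which the contradiction is elementary: on a set where the deficit stays $\geq \eta$, $F_2$ vanishes and $F_1 \geq c_0 > 0$, so $y_2$ grows linearly and exceeds its bound $y^d$. Your route avoids the subsequence/window bookkeeping and the summation-of-increments step, which the paper handles rather loosely, at the cost of the extra preliminary step; the paper's route produces, along the way, the statement that the deficit vanishes in the limit, which more directly prefigures the global asymptotic stability theorem that follows. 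Two shared caveats, neither of which is a gap relative to the paper's own standard of rigor: both arguments read the condition ``$\mathrm{supp}\, r_1 = (-\infty,0]$'' as meaning $r_1 > 0$ on the open negative half-line (the paper does exactly this in Proposition \ref{prop:posytwo}), and both need the constants $\tau,\epsilon,\delta$ of Theorem \ref{thm:poslo} to be uniform over restarts, which holds because $\|F_1(y_2(\cdot,t))\|_\infty \leq c$ uniformly in $t$.
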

\begin{proof}
	Suppose that, for the sake of contradiction, this is not true. Then there exists $\epsilon_1 >0$ and a sequence of increasing time instants $(t_i)_{i=1}^{\infty}$ such that $\lim_{i \rightarrow \infty}t_i = \infty$ and $\|\kar{y_1(\cdot,t_i)}\|_1 \geq \epsilon_1$ for all $i \in \mathbb{Z}_{+}$. From Theorem \ref{thm:poslo}, we know that this implies that there exist constants $\tau,\epsilon_2,\delta>0 $ such that \kar{$y_1(\cdot,t) \geq \epsilon_2 \|y_1(\cdot,t_i)\|_1  \geq \epsilon_1\epsilon_2 $} for all $t \kar{~\in~} [t_i,t_i+\delta]$, for all $i \in \mathbb{Z}_+$. Without loss of generality, we can assume that $t_{i+1}-t_{i} > \delta$ for all $i \in \mathbb{Z}_+$. Let $\Omega_1 \subseteq \Omega$ be the subset of largest measure such that $y^0_2(\mathbf{x}) \geq y^d(\mathbf{x})$ for all $\x \in \Omega_1$. Then, from the partial monotonicity property of the solution $\y$ (Proposition \ref{prop:monprop}), we have that, for each $i \in \mathbb{Z}_{+}$, 
	\begin{eqnarray}
	y_2(\x,t_{i} +\delta) \hspace{-2mm} &=& \hspace{-2mm} y_2(\x,0)  + \int_0^{t_{i} + \delta} F_1(y_2(\x,\tau)) y_1(\x,\tau)d\tau \nonumber \\
	&\geq & \hspace{-2mm} y_2(\x,0) \nonumber \\
	&~& \hspace{-2mm} + \sum_{j = 1}^{i}\int_{t_i}^{t_{i}+\delta} r_1(y_2(\x,\tau)-y^d(\x)) y_1(\x,\tau)d\tau  \nonumber 
	\end{eqnarray}
	for almost every $\x \in \Omega_1$. 
	This implies that $\lim_{i \rightarrow \infty}\|(\kar{y_2(\cdot,t_i)}-y^d)_{-}\|_{\infty} = 0$. However, we know that $\|\y(\cdot, t)\|_1 =1$ for all $t \geq 0$. This, along with the fact that $\lim_{t \rightarrow \infty}\|(\kar{y_2(\cdot,t)}-y^d)_{+}\|_{\infty} = 0$ (\kar{Proposition \ref{prop:posytwo}}) and the assumption that $\|\kar{y_1(\cdot,t_i)}\|_1 \geq \epsilon_1$ for all $i \in \mathbb{Z}_{+}$, leads to  a contradiction.  
\end{proof}

Using the partial monotonicity property of solutions established in Proposition \ref{prop:monprop} and the result in Lemma \ref{lem:linfconvpos}, we now obtain the following global asymptotic stability result. 

\begin{theorem}
	\textbf{($L^1$-Global Asymptotic Stability)}
	Let $\y^0 \in L^{\infty}(\Omega)$ be non-negative and $\|\mathbf{y}^0\|_1 =1$. Then $\lim_{t \rightarrow \infty}\|\mathbf{y}(\cdot,t)-\mathbf{y}^d\|_1=0$, and hence the system \eqref{eq:clpPDEdsicon} is globally asymptotically stable about the target equilibrium distribution $\mathbf{y}^d$.
\end{theorem}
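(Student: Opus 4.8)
The plan is to measure the distance componentwise. Since $\mathbf{y}^d = [\mathbf{0}~~y^d]^T$, the $L^1$ distance splits as $\|\mathbf{y}(\cdot,t)-\mathbf{y}^d\|_1 = \|y_1(\cdot,t)\|_1 + \|y_2(\cdot,t)-y^d\|_1$, so it suffices to drive both summands to zero. The first is immediate: Lemma \ref{lem:linfconvpos} already gives $\lim_{t\to\infty}\|y_1(\cdot,t)\|_1 = 0$. Thus the entire task reduces to showing $\|y_2(\cdot,t)-y^d\|_1 \to 0$, which I would attack by decomposing $y_2 - y^d$ into its positive and negative parts, writing $\|y_2-y^d\|_1 = \|(y_2-y^d)_+\|_1 + \|(y_2-y^d)_-\|_1$.

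First I would dispose of the positive part. Proposition \ref{prop:posytwo} asserts $\lim_{t\to\infty}\|(y_2(\cdot,t)-y^d)_+\|_\infty = 0$, and because $\Omega$ is bounded (equivalently, has finite Riemannian volume under either case of Assumption \ref{asmp2}), uniform convergence to zero upgrades to $L^1$ convergence, giving $\|(y_2(\cdot,t)-y^d)_+\|_1 \to 0$. The negative part I would then extract from conservation of mass. Since $y^d$ is a probability density, $\int_\Omega y^d\,d\mathbf{x}=1$, and by Theorem \ref{thm:solpo} the total mass $\|\mathbf{y}(\cdot,t)\|_1$ is conserved and equals $1$; using non-negativity of $y_1$, this yields $\int_\Omega (y_2(\cdot,t)-y^d)\,d\mathbf{x} = \int_\Omega y_2(\cdot,t)\,d\mathbf{x} - 1 = -\int_\Omega y_1(\cdot,t)\,d\mathbf{x} = -\|y_1(\cdot,t)\|_1$. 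Rewriting the left-hand integral as $\|(y_2-y^d)_+\|_1 - \|(y_2-y^d)_-\|_1$ and rearranging gives $\|(y_2(\cdot,t)-y^d)_-\|_1 = \|(y_2(\cdot,t)-y^d)_+\|_1 + \|y_1(\cdot,t)\|_1$, so the negative part is dominated by two quantities already shown to vanish. Combining the three limits then yields $\|\mathbf{y}(\cdot,t)-\mathbf{y}^d\|_1 \to 0$.

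The substantive difficulties of this result sit upstream, in Lemma \ref{lem:linfconvpos} and Proposition \ref{prop:posytwo}, which in turn rest on the partial monotonicity of Proposition \ref{prop:monprop} and the positive-lower-bound estimate of Theorem \ref{thm:poslo}. Given those, the argument above is essentially bookkeeping anchored by mass conservation, and I expect the only step requiring genuine care to be the passage from $L^\infty$ to $L^1$ decay of the positive part, which must invoke the finiteness of the volume of $\Omega$; without a finite-measure domain this implication fails and the decomposition would not close.
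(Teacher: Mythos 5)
Your proof is correct, and it rests on exactly the same three pillars as the paper's own proof --- Proposition \ref{prop:posytwo} for the positive part, Theorem \ref{thm:solpo} for non-negativity and mass conservation, and Lemma \ref{lem:linfconvpos} for $\|y_1(\cdot,t)\|_1 \to 0$ --- but it assembles them through a different decomposition. The paper partitions the \emph{domain} into $\Omega_1 = \{\x \in \Omega : y_2^0(\x) \geq y^d(\x)\}$ and its complement $\Omega_2$, uses the partial monotonicity of Proposition \ref{prop:monprop} to conclude that the sign of $y_2(\cdot,t)-y^d$ is frozen on each piece for all time, applies the $L^\infty$ decay on $\Omega_1$, and recovers the $\Omega_2$ contribution from the mass balance. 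You instead partition the \emph{function} $y_2 - y^d$ into its positive and negative parts; the identity $\|(y_2(\cdot,t)-y^d)_-\|_1 = \|(y_2(\cdot,t)-y^d)_+\|_1 + \|y_1(\cdot,t)\|_1$, which follows from conservation of mass, non-negativity, and $\int_\Omega y^d\,d\x = 1$, then kills the negative part with no appeal to sign invariance. The net effect is that your final bookkeeping does not invoke Proposition \ref{prop:monprop} at all (monotonicity still enters upstream, inside Proposition \ref{prop:posytwo} and Lemma \ref{lem:linfconvpos}), which makes the assembly slightly cleaner and sidesteps a point the paper leaves implicit: its claim that $\|y_2(\cdot,t)|_{\Omega_1} - y^d|_{\Omega_1}\|_\infty \to 0$ follows from Proposition \ref{prop:posytwo} is valid only because monotonicity guarantees $|y_2 - y^d| = (y_2 - y^d)_+$ on $\Omega_1$. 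You are also right that the finite volume of $\Omega$ is the ingredient that upgrades $L^\infty$ decay to $L^1$ decay; the paper uses the same implication on $\Omega_1$ without comment.
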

\begin{proof}
	Let $\Omega_1 = \lbrace \x \in \Omega ; ~y^0_2(\x) \geq y^d(\x)  \rbrace$. Let $\Omega_2 = \Omega - \Omega_1 $. From \kar{Proposition \ref{prop:posytwo}}, we know that $\lim_{t \rightarrow \infty}\|y(\cdot,t)|_{\Omega_1}-y^d|_{\Omega_1}\|_\infty =0$, where $\cdot|_{\Omega}$ denotes the restriction operation. This implies that $ \lim_{t \rightarrow \infty}\|y(\cdot,t)|_{\Omega_1}-y^d|_{\Omega_1}\|_1 =0$. We also know from \kar{Proposition \ref{prop:posytwo}} that 
	\begin{equation}
	\lim_{t \rightarrow \infty} \int_{\Omega_1} \big ( y_2(\x,t)-y^d(\x)\big )d\x+ \int_{\Omega_2} \big( y_2(\x,t)-y^d(\x) \big)d\x  = 0 
	\end{equation}
	This implies that 
	\begin{equation}
\lim_{t \rightarrow \infty} \int_{\Omega_2}\big (y_2(\x,t)-y^d(\x)\big )d\x =0 	\end{equation}
	From Proposition \ref{prop:monprop}, we know that $y_2(\cdot,t) \leq y^d$ almost everywhere on $\Omega_2$ and for all $t \geq 0$. Hence, we 
 conclude that 
	\begin{eqnarray}
	\lim_{t \rightarrow \infty} \|y_2(\cdot,t)-y^d\|_1 = 	\lim_{t \rightarrow \infty} \int\limits_{\Omega_1}|y_2(\x,t)-y^d(\x)|d\x \nonumber  \\ 
	  + \int\limits_{\Omega_2} |y_2(\x,t)-y^d(\x)|d\x    \nonumber
	\end{eqnarray}
	From this, along with the fact that $ \lim_{t \rightarrow \infty}\|y_1(\cdot,t)\|_1=0$, we arrive at our result.
\end{proof}

\section{SIMULATIONS}
\label{sec:numsim}

In this section, we validate the control laws presented in Sections \ref{sec:subell} and \ref{sec:Disco} with numerical simulations. The SDEs \eqref{eq:SDEhol} and \eqref{eq:litSDE2} were simulated using the method of Wong-Zakai approximations  \cite{twardowska1996wong}. The diffusion and reaction parameter values used in each simulation were chosen with the goal of shortening the duration of the simulation on a case-by-case basis. Hence, different parameter values were chosen for each of the examples below. In practice, these parameters would need to be chosen according to the physical constraints on the system and the objectives of the user, such as optimizing the rate of convergence to the target density or controlling the variance of the agent density around the target density.

\subsection{Density Control without Agent Interactions}

In this subsection, we simulate the control approach presented in Section \ref{sec:subell}. In the following example, we simulate  the SDE \eqref{eq:SDEhol} with the control laws $v_i =0$ and $u_i(\mathbf{x})= D/y_d(\mathbf{x})$, where $D$ is a diffusion coefficient. The generator of the process is  given by the operator in \eqref{eq:diffgena}.
\begin{example}
\label{eg:egbini}
\textbf{Brockett integrator}
\end{example}
In this example, we consider the case where each agent's motion evolves according to the Brockett integrator, which has been well-studied in the control theory literature  \cite{brockett1983asymptotic,astolfi1998discontinuous}. The control vector fields for this system are the following:
\begin{eqnarray}
\label{eq:brock}
X_1(\mathbf{x}) = \frac{\partial}{\partial x_1} - x_2\frac{\partial}{\partial x_3},  ~~~~
X_2(\mathbf{x}) = \frac{\partial}{\partial x_2} + x_1\frac{\partial}{\partial x_3} 
\end{eqnarray}
 The Lie bracket of the two vector fields is given by 
\begin{eqnarray}
[X_1,X_2](\mathbf{x}) =X_1X_2 -X_2X_1 = 2 \frac{\partial }{\partial x_3}
\end{eqnarray}
for all $\mathbf{x} \in \kar{\mathbb{R}^n}$. 
Hence, we have that
${\rm span~}\lbrace X_1(\mathbf{x}),X_2(\mathbf{x}),[X_1,X_2](\mathbf{x}) \rbrace = T_\mathbf{x}\mathbb{R}^3$
and therefore, the system is bracket generating. We define the domain $\Omega = [0,100]^3$.
 The target distribution 
 is given by 
$y^d = c~ [ \sum_{i=1}^8\mathbf{1}_{B_{\mathbf{x}_i}} +0.001],$
where $c > 0$ is a normalization constant that makes $y^d$ a probability density, $\mathbf{1}_S$ denotes the indicator function of a set $S$, and $B_{\mathbf{x}_i}$ denotes a ball of radius $12.5$ centered at $\mathbf{x}_i$, $i=1,...,8$, defined as  $\lbrace \mathbf{x}_1,...,\mathbf{x}_8\rbrace = \lbrace [ 25~25~25]^T, [25~25~75]^T, [25~75~75]^T,...,[75~75~75]^T\rbrace. $

The  positions  of $N_p=   10,000$ agents  are generated from a stochastic simulation of the SDE \eqref{eq:SDEhol} and  plotted at three times $t$ in Fig. \ref{fig:BI1}.  The figure shows that at  time $t= 100~s$,  the  distribution  of  the swarm over the sphere is close to the target density. \kar{In Fig. \ref{fig:L1nonin}, it can be seen that the $L_1$ norm of the difference between the current distribution and the target distribution decreases over time.}

\begin{figure*}
	\centering
	\begin{subfigure}[t]{0.3\textwidth}
		\centering
		\includegraphics[width=\textwidth]{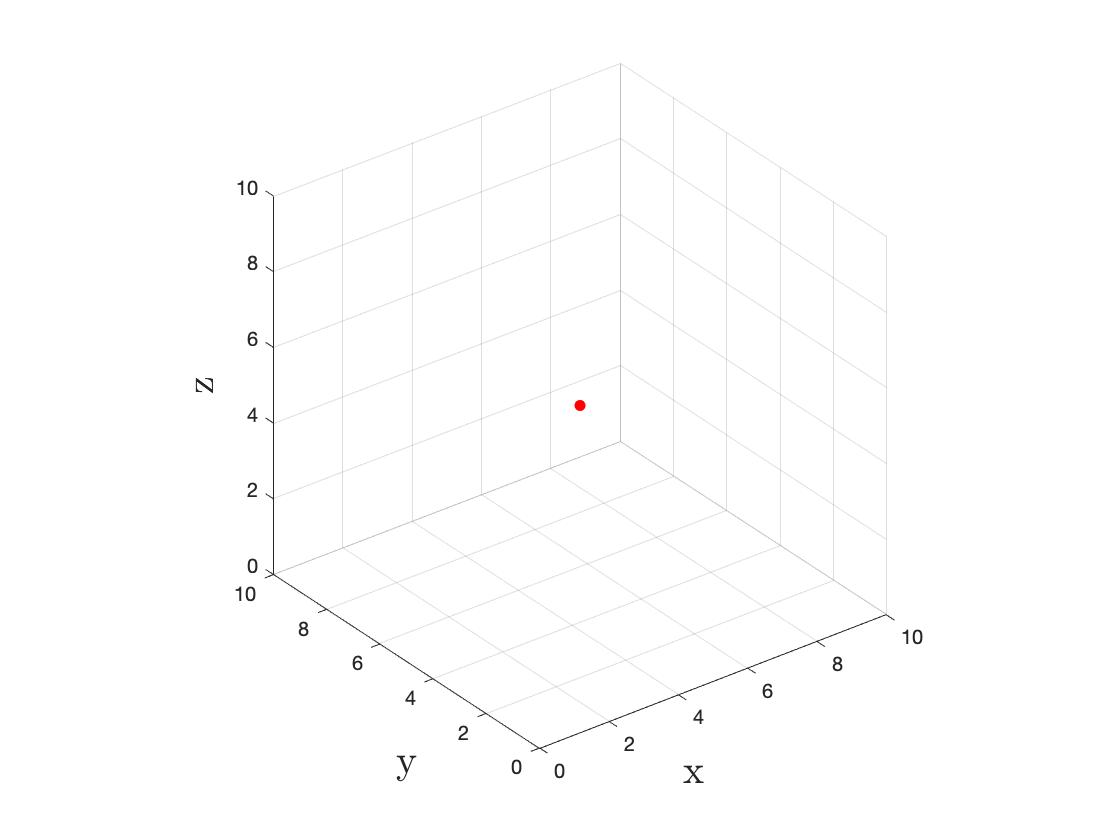}	
		\caption{$t=0$ s}
	\label{subfig:hni1}
	\end{subfigure}%
	~
	\begin{subfigure}[t]{0.3\textwidth}
		\centering
		\includegraphics[width=\textwidth]{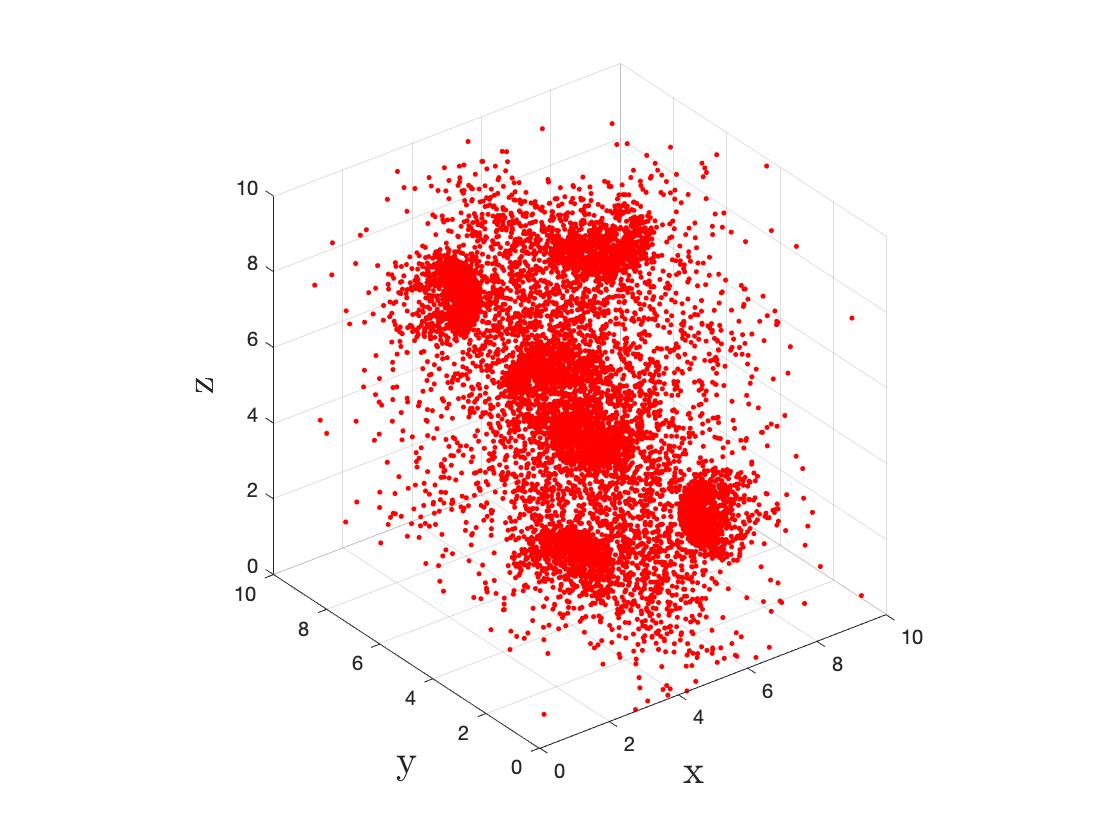}
		\caption{$t=10$ s}
		\label{subfig:hni2}
	\end{subfigure}
	~
	\begin{subfigure}[t]{0.3\textwidth}
		\centering
		\includegraphics[width=\textwidth]{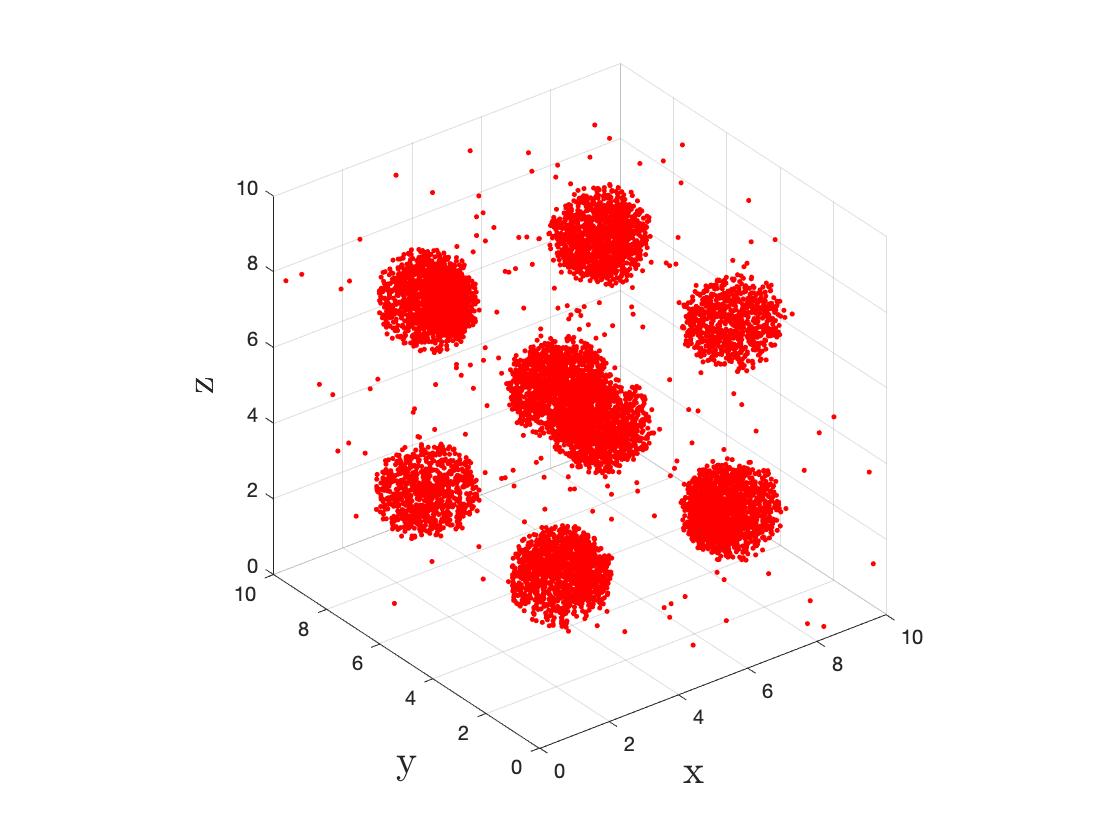}
		\caption{$t=100$ s}
			\label{subfig:hni3}
	\end{subfigure}
	\caption{\textbf{(Brockett integrator without agent interactions)} Stochastic coverage by $N_p= 10,000$ agents (in red) at three 
times $t$, following the linear diffusion model \eqref{eq:Mainsysan}.} 
\label{fig:BI1}
\end{figure*}

\begin{figure}
	\centering
		\includegraphics[scale=0.2]{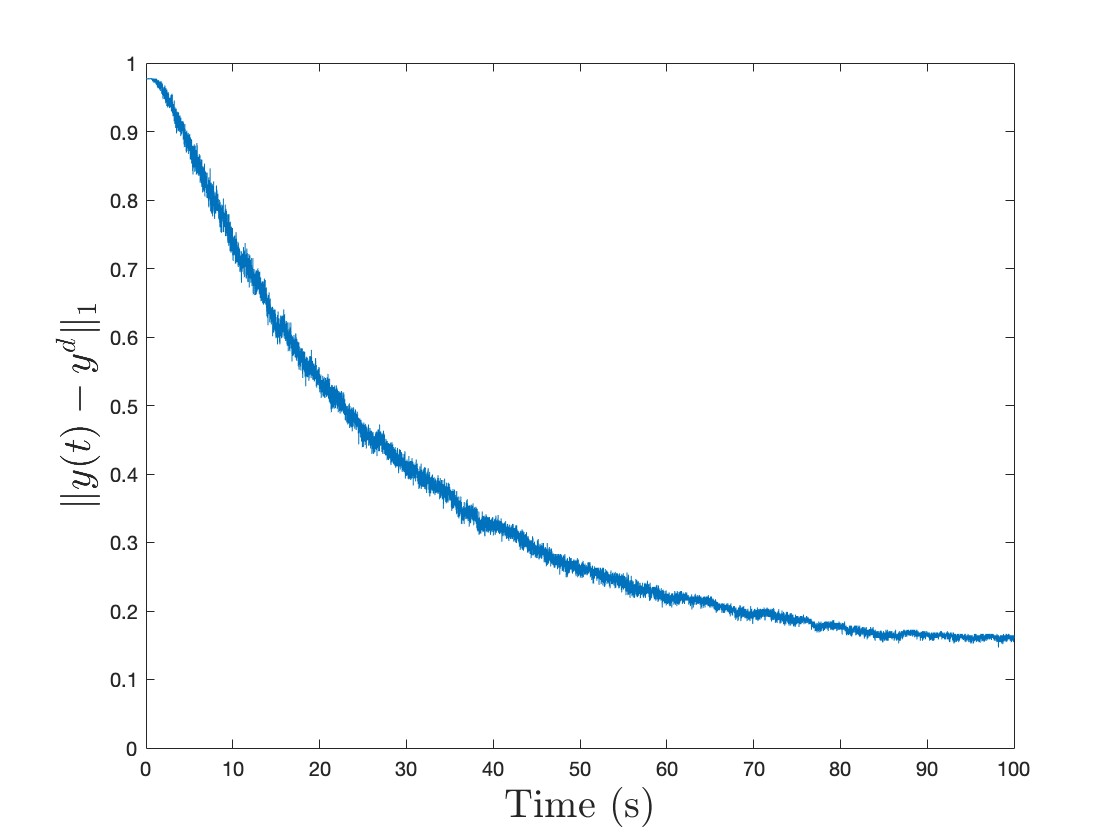}	
\caption{\kar{\textbf{(Brockett integrator without agent interactions)} Time evolution of the $L_1$ norm of the difference between the target distribution and the agent distribution that evolves according to the linear diffusion model \eqref{eq:Mainsysan}.}}
\label{fig:L1nonin}
\end{figure}

\begin{example}
\textbf{Nonholonomic system on SO(3)} 
\end{example}
In this example, the agents evolve according to a nonholonomic system on the special orthogonal group of rotation matrices, denoted by $SO(3) = \{ \mathbf{B} \in \mathbb{R}^{3 \times 3} ; \mathbf{B}^T\mathbf{B} = \mathbf{I},~\det{\mathbf{B} = 1} \}\}$. The tangent space of $SO(3)$ at the origin, the Lie algebra corresponding to this Lie group, is denoted by $so(3)$. A basis of $so(3)$ is given by the following matrices:
\begin{eqnarray}
\label{eq:genma}
\mathbf{B}_1 = 
\begin{bmatrix}
0 & -1 & 0 \\
1 &  0  & 0 \\
0 & 0  & 0 
\end{bmatrix},
~
\mathbf{B}_2 =
\begin{bmatrix}
0 & 0& 1 \\
0 &  0  & 0 \\
-1 & 0  & 0 
\end{bmatrix},
\nonumber \\
\mathbf{B}_3 =
\begin{bmatrix}
0 & 0& 0 \\
0 &  0  & -1 \\
0 & 1  & 0 
\end{bmatrix}.
\end{eqnarray}
Each of these matrices defines a left-invariant vector field $X_i$ given by 
	\begin{equation}
(X_i f)(\mathbf{x}) = \lim_{t \rightarrow 0}\frac{f(e^{t\mathbf{B}_i}\x) - f(\x)}{t}
\end{equation}
for all $\mathbf{x} \in SO(3)$ and all $f \in C^{\infty}(SO(3))$. We  assume that each agent can  control its motion along the vector fields $\{ X_1,X_2 \}$. It can be verified that 
\begin{align}
&{\rm span~}\lbrace X_1(\mathbf{x}),X_2(\mathbf{x}),[X_1,X_2](\mathbf{x}) \rbrace \\ \nonumber
&={\rm span~}\lbrace X_1(\mathbf{x}),X_2(\mathbf{x}),-X_3(\mathbf{x})\rbrace= T_\mathbf{x}SO(3)
\end{align}
The target density (with respect to the Haar measure) is given by
\begin{equation}
y^d(\mathbf{A})=c( A^2_{11} + A^2_{22}+ A^2_{33})
\end{equation} 
for all $\mathbf{A} \in SO(3)$, where $c>0$ is the normalization constant such that $y^d$ is a probability density.

Since $SO(3)$ is a $3$-dimensional manifold and cannot be embedded in $\mathbb{R}^3$, visualizing the agent positions and the target probability density is not straightforward. We adopt the approach suggested in \cite{lee2018bayesian} for such a visualization. Any element of $\mathbf{O}\in SO(3)$ acts on an element $\mathbf{e}$ of the $2$-dimensional sphere $S^2 = \{\mathbf{x} \in \mathbb{R}^3; ~\mathbf{x}^T\mathbf{x} = 1\}$ through matrix-vector multiplication, resulting in an element $\mathbf{Oe}\in S^2$. If $\mathbf{Z}(t)$ is the solution to the process \eqref{eq:SDEhol}, then $\mathbf{Z}(t)\mathbf{e}_i$  is a process on the sphere. Let $\mathbf{e}_1= [1~~0~~0]^T$, $\mathbf{e}_2= [0~~1~~0]^T$, and $\mathbf{e}_3= [0~~1~~0]^T$. Note that $\mathbf{Oe}_i$ is the $i^{th}$ column of $\mathbf{O}$. If the density of $\mathbf{Z}(t)$ converges to $y^d$, then the density of $\mathbf{Z}(t)\mathbf{e}_i$ must converge to the target density on the sphere $S^2$ given by $y^d_i(\mathbf{x}) = \int_{\mathbf{Oe}_i = \mathbf{x}}y^d(\mathbf{O})d\mathbf{O} $ for all $\mathbf{x} \in S^2$.
The action of the matrices $\mathbf{Z}_j(t)$ on the vectors $-\mathbf{e}_2$ and $\mathbf{e}_3$ is shown in Fig. \ref{fig:SO31} and Fig. \ref{fig:SO32}, respectively.  The target densities $y_d^1$ and $y_d^2$, shown in  Fig. \ref{fig:SO31} and Fig. \ref{fig:SO32},  are  depicted  on  the  surface  of  the  sphere  using  a color  density  plot.  Blue  regions  are  assigned  a  low  target density of agents, while yellow regions are assigned a high target density. The agent positions are superimposed on   the  density  plot  to  enable  comparison between  the  actual  and  target  densities.  As Fig. \ref{fig:SO31} and Fig. \ref{fig:SO32}  show, at  time $t= 100~s$,  the  distribution  of  the swarm over the sphere is close to the target density.

\begin{figure*}%
	\centering
	\subfloat[$t =0 ~s$]{%
		\includegraphics[width=0.32\textwidth]{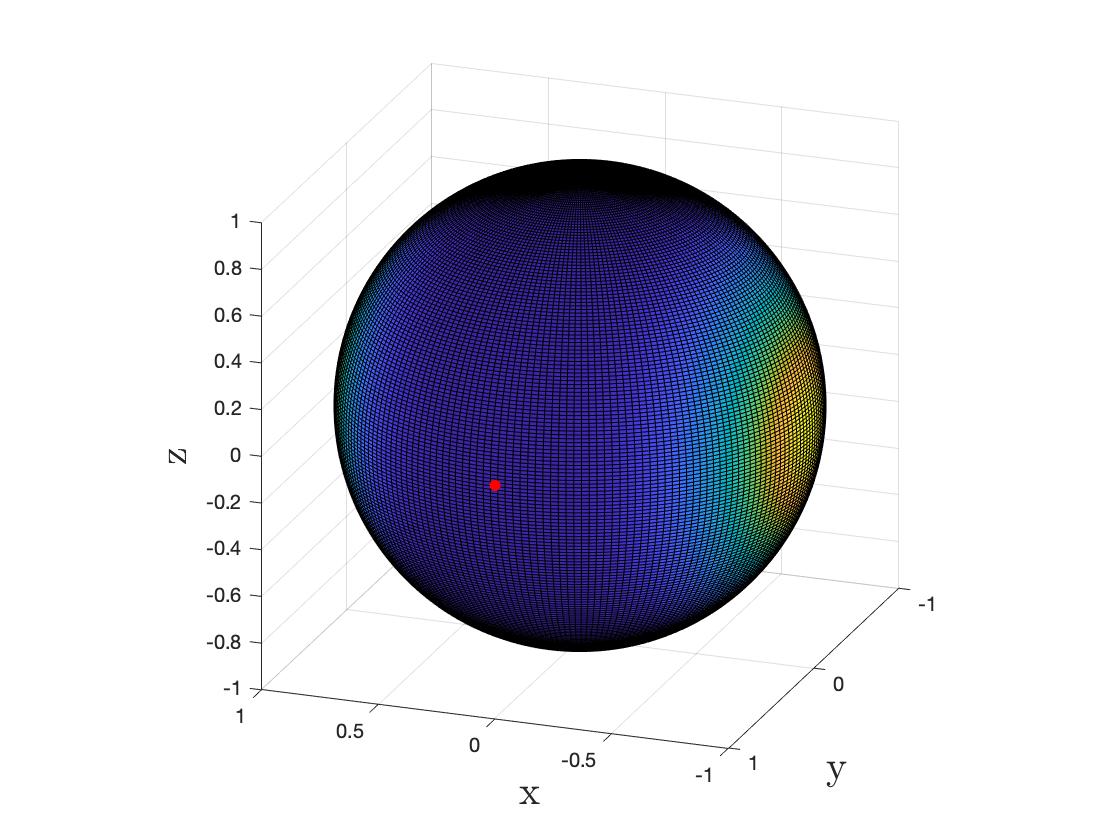}
		\label{subfig:SO311}%
	}%
	\hspace{-0.5em}%
	\subfloat[$t =10~s$]{%
		\includegraphics[width=0.32\textwidth]{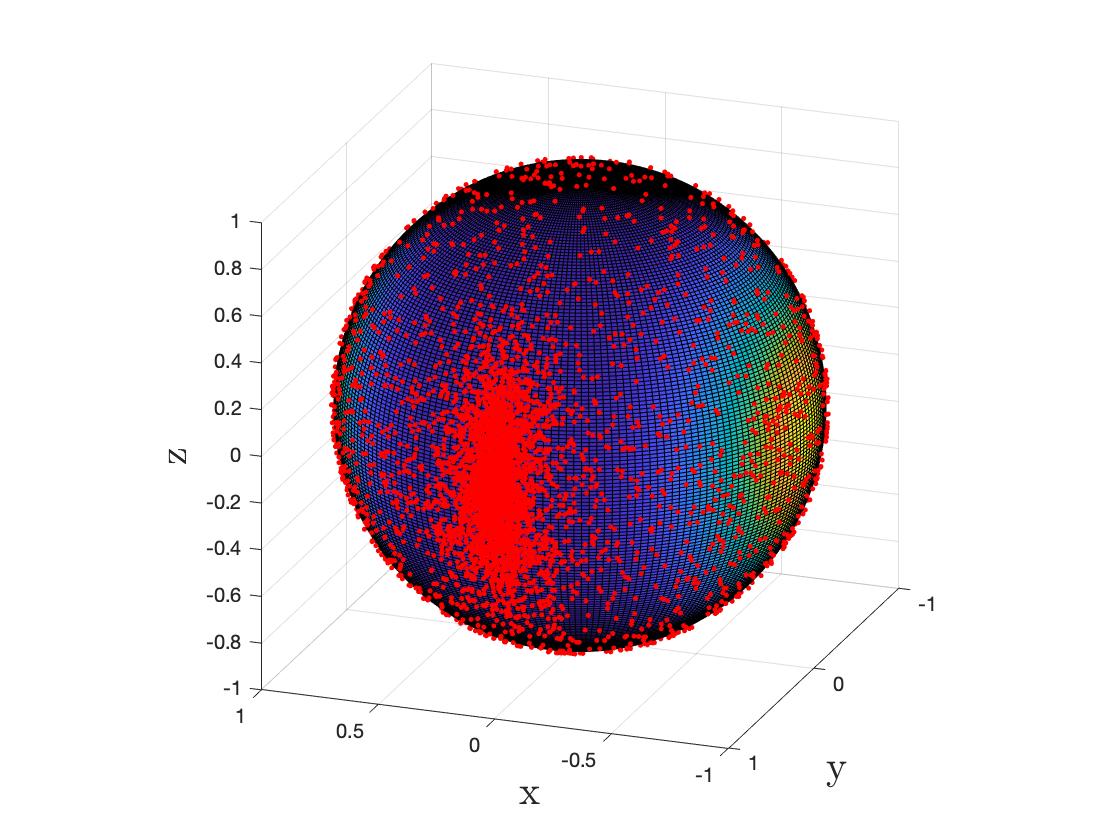}
		\label{subfig:SO312}%
	}%
	\centering
\subfloat[$t =100 ~s $]{%
	\includegraphics[width=0.32\textwidth]{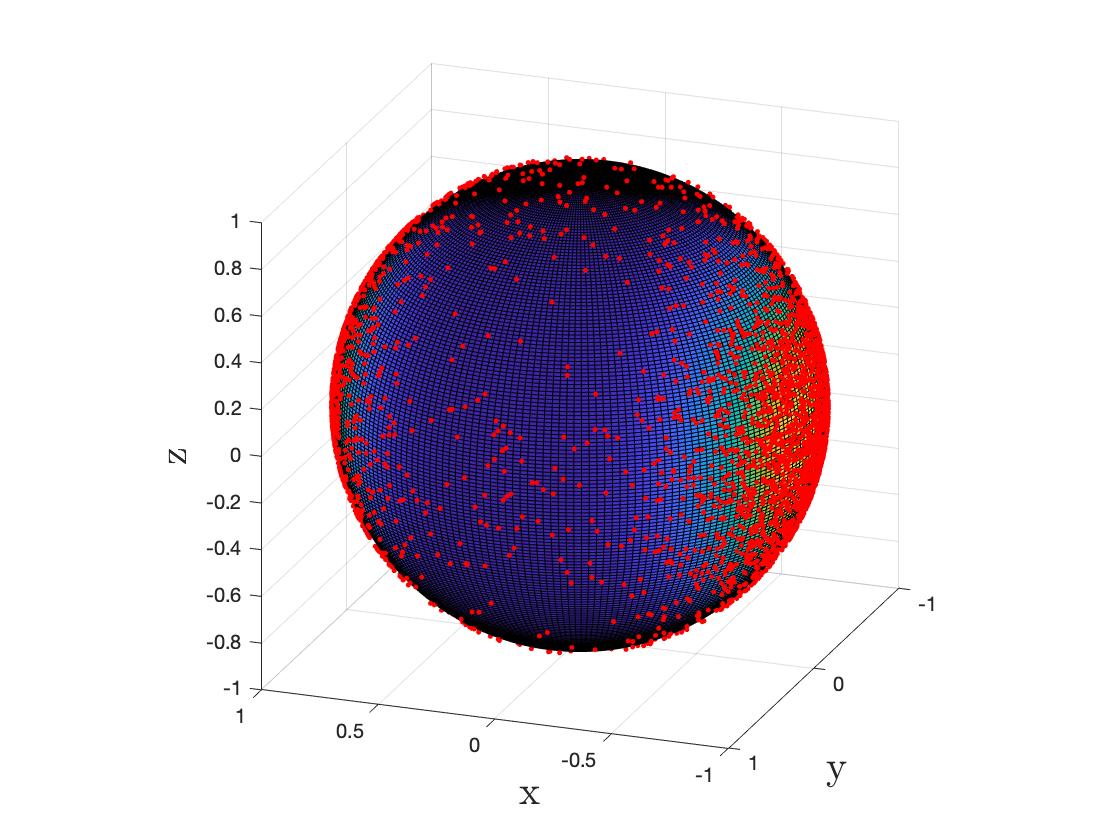}
	\label{subfig:SO313}%
}%
	\caption{\textbf{(Nonholonomic system on SO(3))} Stochastic coverage of $SO(3)$ by $N= 5,000$ agents (in red) at different times $t$, following the linear diffusion model \eqref{eq:Mainsysan}. This plot shows the time evolution of the action of the agents' matrices $\mathbf{Z}_j(t)$ on $-\mathbf{e}_2 = [0 ~~-1~~0]^T$ on the sphere $S^2$.}
			\label{fig:SO31}
\end{figure*}
\begin{figure*}%
	\centering
	\subfloat[$t =0 ~s$]{%
		\includegraphics[width=0.32\textwidth]{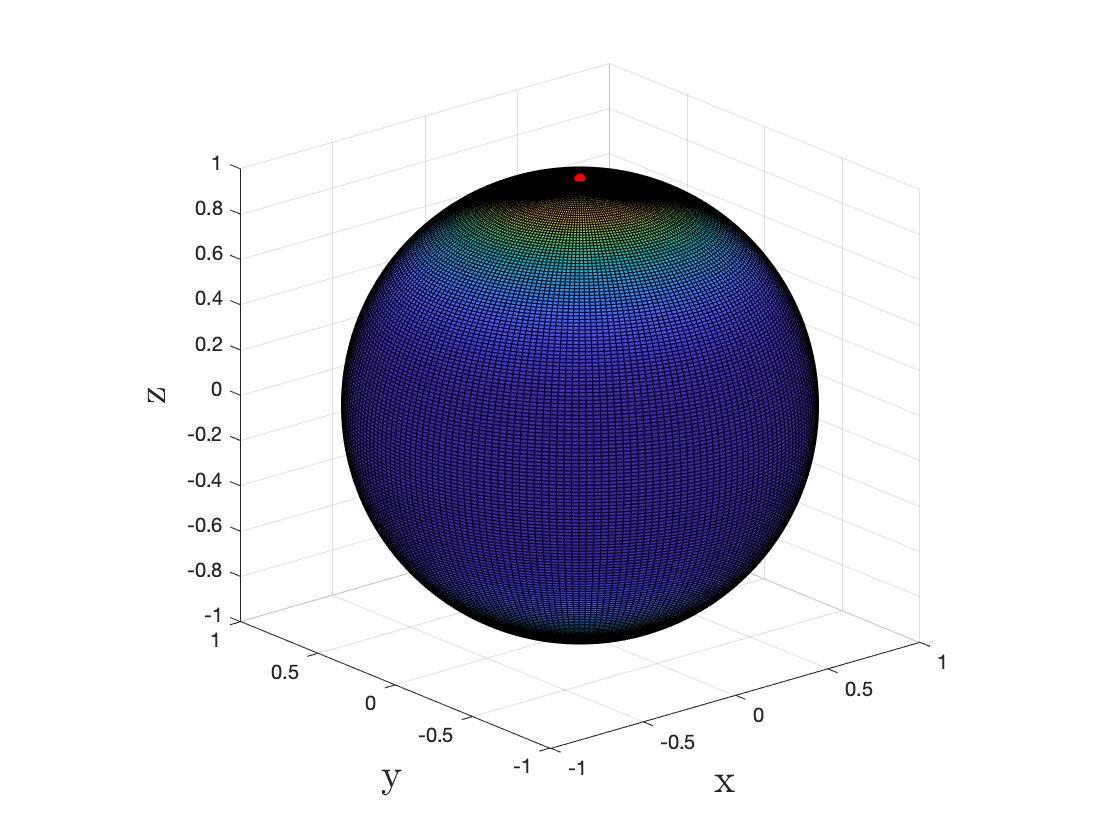}
		\label{subfig:SO321}%
	}%
	\hspace{-0.5em}%
	\subfloat[$t =10~s$]{%
		\includegraphics[width=0.32\textwidth]{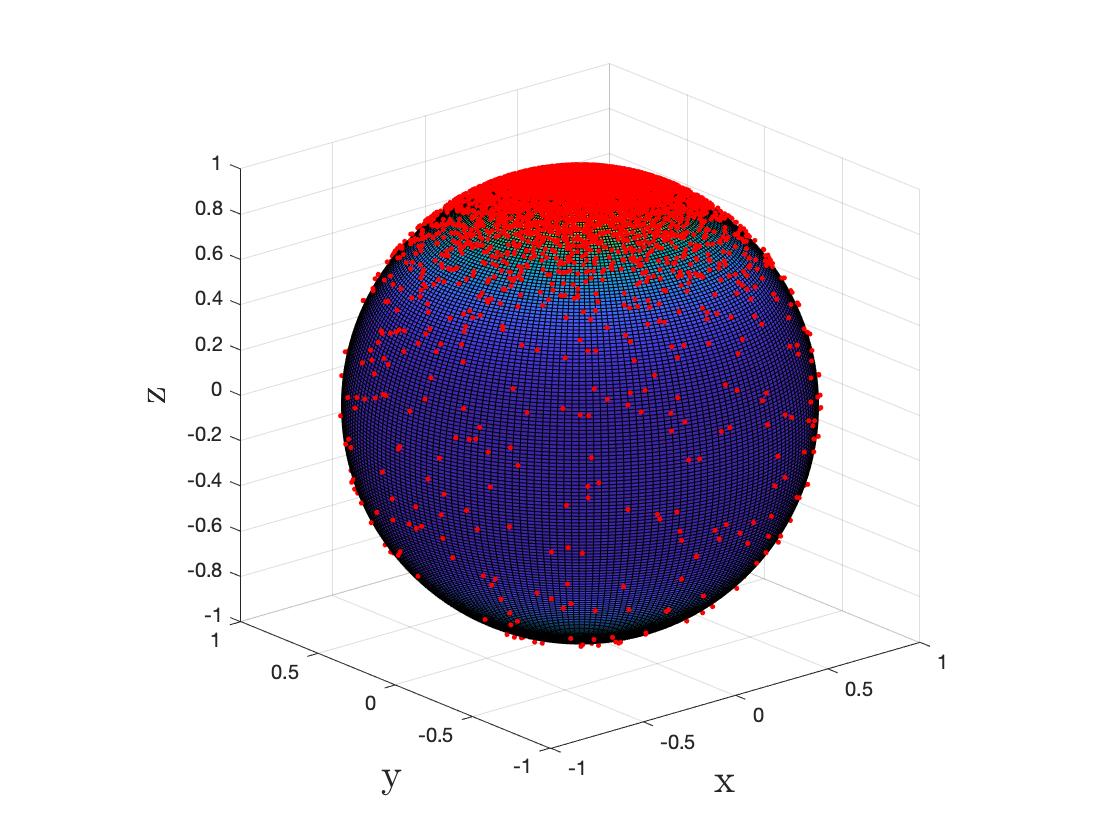}
		\label{subfig:SO322}%
	}%
	\centering
	\subfloat[$t =100 ~s $]{%
		\includegraphics[width=0.32\textwidth]{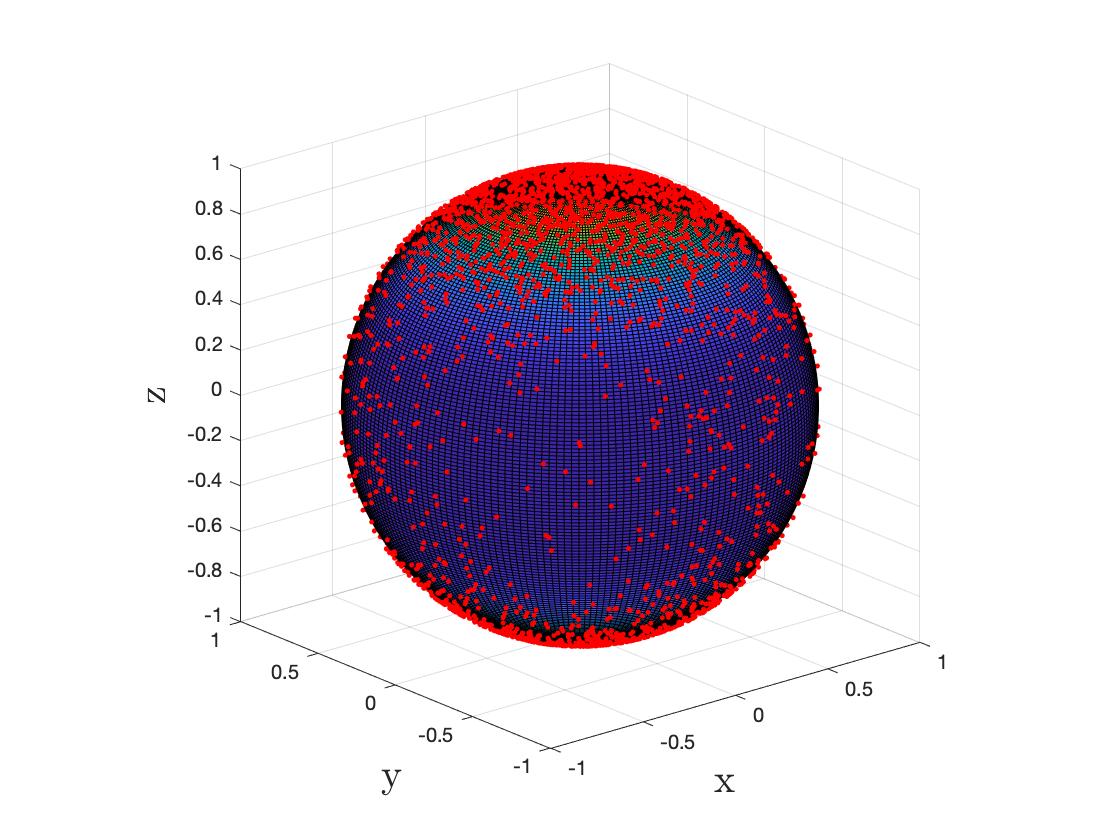}
		\label{subfig:SO323}%
	}%
	\caption{\textbf{(Nonholonomic system on SO(3))} Stochastic coverage of $SO(3)$ by $N= 5,000$ agents (in red) at different times $t$, following the linear diffusion model \eqref{eq:Mainsysan}. This plot shows the time evolution of the action of the agents' matrices $\mathbf{Z}_j(t)$ on $\mathbf{e}_3 = [0 ~~0~~1]^T$ on the sphere $S^2$.}
		\label{fig:SO32}
\end{figure*}


\subsection{Density Control with Agent Interactions}

In this subsection, we simulate the control approach 
in Section \ref{sec:Disco}. The functions $r_i$ in Eq. \eqref{eq:reacp} are chosen to be
\kar{
\begin{equation}
	\label{eq:reac}
r_1(\mathbf{x} )= \begin{cases} -k \mathbf{x}   ~~ {\rm if}~\mathbf{x} < 0  \\
0 ~~~~ {\rm if}~\mathbf{x} =0\end{cases} \end{equation}
\begin{equation}
	\label{eq:reac2}
r_2(\mathbf{x} )= \begin{cases} k \mathbf{x}   ~~ {\rm if}~\mathbf{x} >0  \\
0 ~~~~ {\rm if}~\mathbf{x} =0\end{cases}   \end{equation}
}
for all $\mathbf{x} \in \Omega$,
where $k$ is a positive scaling constant.
	Since we simulate a finite number of agents, instead of the density $y_1$, the agents use the empirical measure $\frac{1}{N_p}\sum_{i=1}^{N_p}\delta_{\mathbf{x}_i(t)}$ to compute their transition rates. However, the empirical measure is not absolutely continuous with respect to the Riemannian volume and does not have a density. Therefore, the agents use the regularized approximation of the measure $\frac{1}{N_p}\sum_{i=1}^{N_p}\delta_{\mathbf{x}_i(t)}$,  given by
	\begin{equation}
	\label{eq:kern}
	\tilde{\rho} (\mathbf{x},t) = c(\epsilon)\frac{1}{N_p}\sum_{i=1}^{N_p}K_{\epsilon}(\mathbf{x},\mathbf{x}_i(t)) 
	\end{equation}
	for all $\mathbf{x}\in M$, where the kernel function $K_{\epsilon}$ is chosen such that $\lim_{\epsilon \rightarrow 0}c(\epsilon)K_{\epsilon}(\cdot,\mathbf{y}) =\delta_{\mathbf{y}}$ for each $\mathbf{y} \in \Omega$, and the function $c(\epsilon)$ is a normalization parameter defined such that $c(\epsilon)\int_{M}K_{\epsilon}(\mathbf{x},\mathbf{y})d\mathbf{x} = 1$. For each of the examples below, we will specify the kernel function used. The positions of each agent are generated according to the SDE \eqref{eq:litSDE2}. The transition rates $q_i(\mathbf{x},t)$ of the agents are defined as
	\begin{equation}
	    q_i (\mathbf{x},t) = r_i \Big(  c(\epsilon)\frac{1}{N_p}\sum_{i=1}^{N_p}K_{\epsilon}(\mathbf{x},\mathbf{x}_i(t)) -y^d(\mathbf{x}) \Big ).
	\end{equation}

\begin{example}
	\textbf{Brockett integrator}
	\label{eg:egbini2}
\end{example}

 In this example, each agent moves according to Eq. \eqref{eq:litSDE2} with the control vector fields as defined in Example \ref{eg:egbini}.
 The kernel function is given by: 
	\begin{equation}
	\label{eq:kern1}
K_{\epsilon}(\mathbf{x},\mathbf{y})  =\begin{cases} 
	\exp{\frac{-1}{1-(|\mathbf{x}-\mathbf{y}| /\epsilon)^2}}~{\rm ~if~} |\mathbf{x}-\mathbf{y}|< \epsilon\\
                     0 ~~{\rm otherwise}
                    \end{cases}
\end{equation}
for all $\x ,\mathbf{y} \in \mathbb{R}^3$. The target density is set to $y^d =  c \sum_{i=1}^8\mathbf{1}_{B_{\x_i}}$, 
similar to the $y^d$ defined  
in Example \ref{eg:egbini}, where $c > 0$ is a normalization constant. 
Note that in this example, the probability density is allowed to take value equal to $0$ in certain regions of the the domain, unlike in Example \ref{eg:egbini}. The reaction constant in Eqs. \eqref{eq:reac}-\eqref{eq:reac2} is set to $k = 500$, and the parameter of the kernel in \eqref{eq:kern1} is defined as $\epsilon = 5$. 


\begin{figure*}
	\centering
	\begin{subfigure}[t]{0.3\textwidth}
		\centering
		\includegraphics[width= \textwidth]{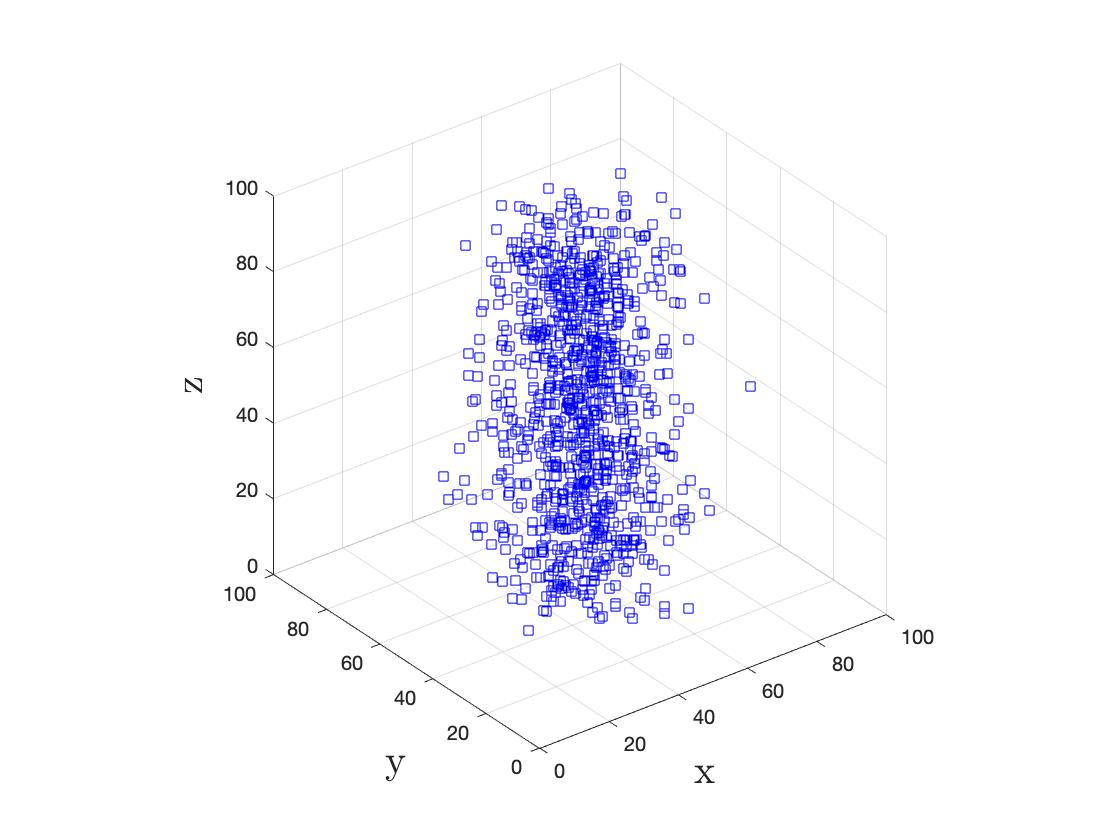}
		\caption{$t=0$ s}
		\label{subfig:Heis1}
	\end{subfigure}%
	~
	\begin{subfigure}[t]{0.3\textwidth}
		\centering
		\includegraphics[width= \textwidth]{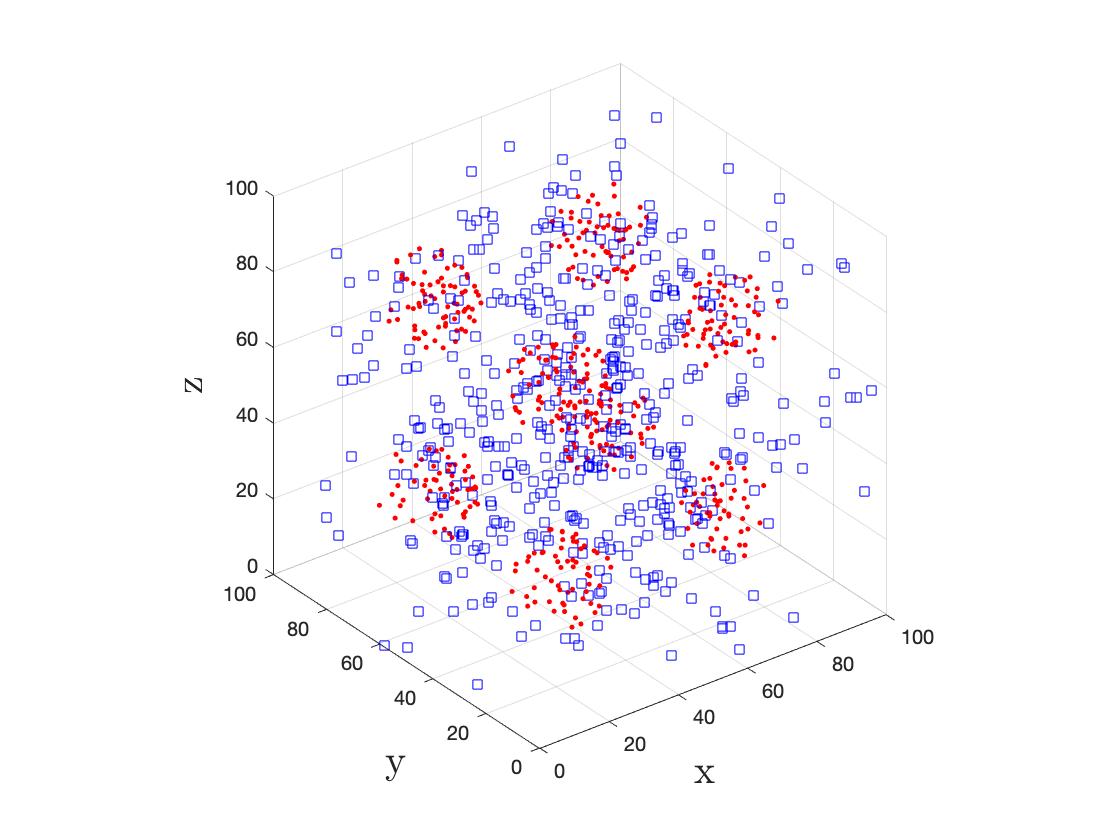}
		\caption{$t=10$ s}
		\label{subfig:Heis2}
	\end{subfigure}
	~
	\begin{subfigure}[t]{0.3\textwidth}
		\centering
		\includegraphics[width= \textwidth]{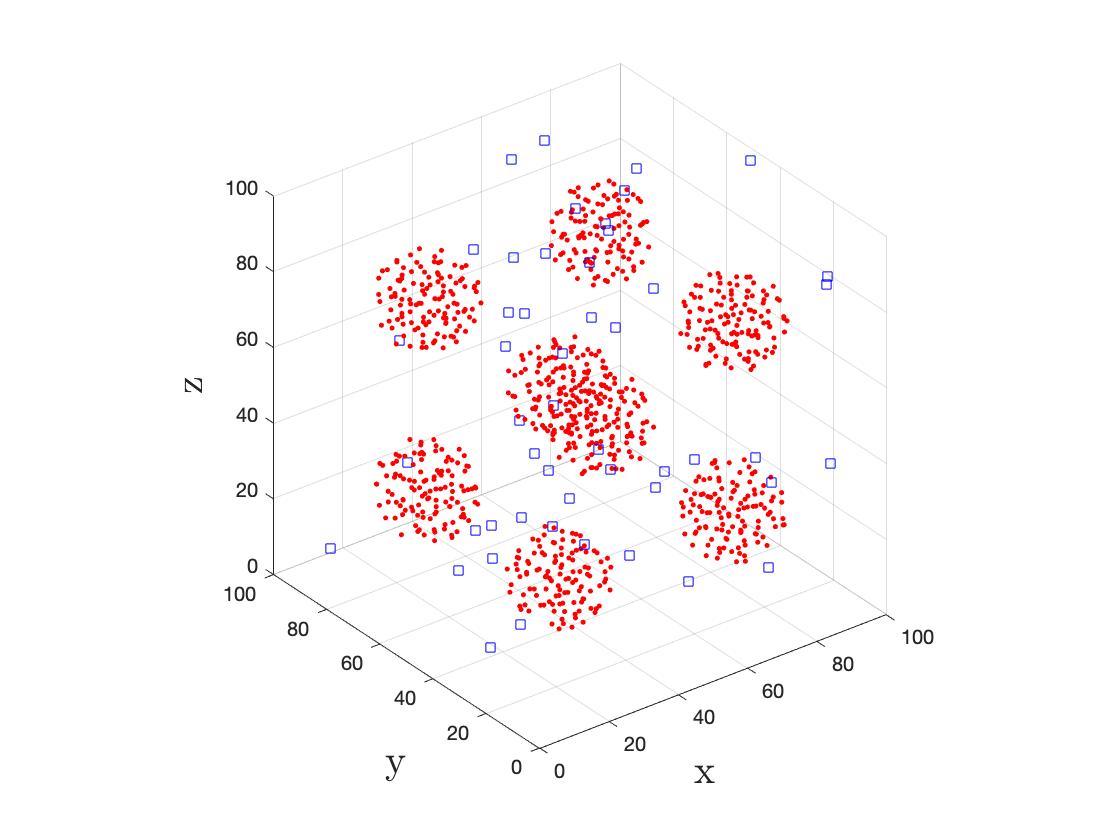}
		\caption{$t=100$ s}
			\label{subfig:Heis3}
	\end{subfigure}
	\caption{\textbf{(Brockett integrator with agent interactions)}	Stochastic coverage of $\mathbb{R}^3$ by $N= 1,000$ agents at three 
 times $t$, following the semilinear PDE model \eqref{eq:clpPDEdsicon}. Blue agents are in the motion state; 
 red agents are in the motionless state.} 
	\label{fig:brockettint}
\end{figure*}

\begin{figure}
	\centering
		\includegraphics[scale=0.2]{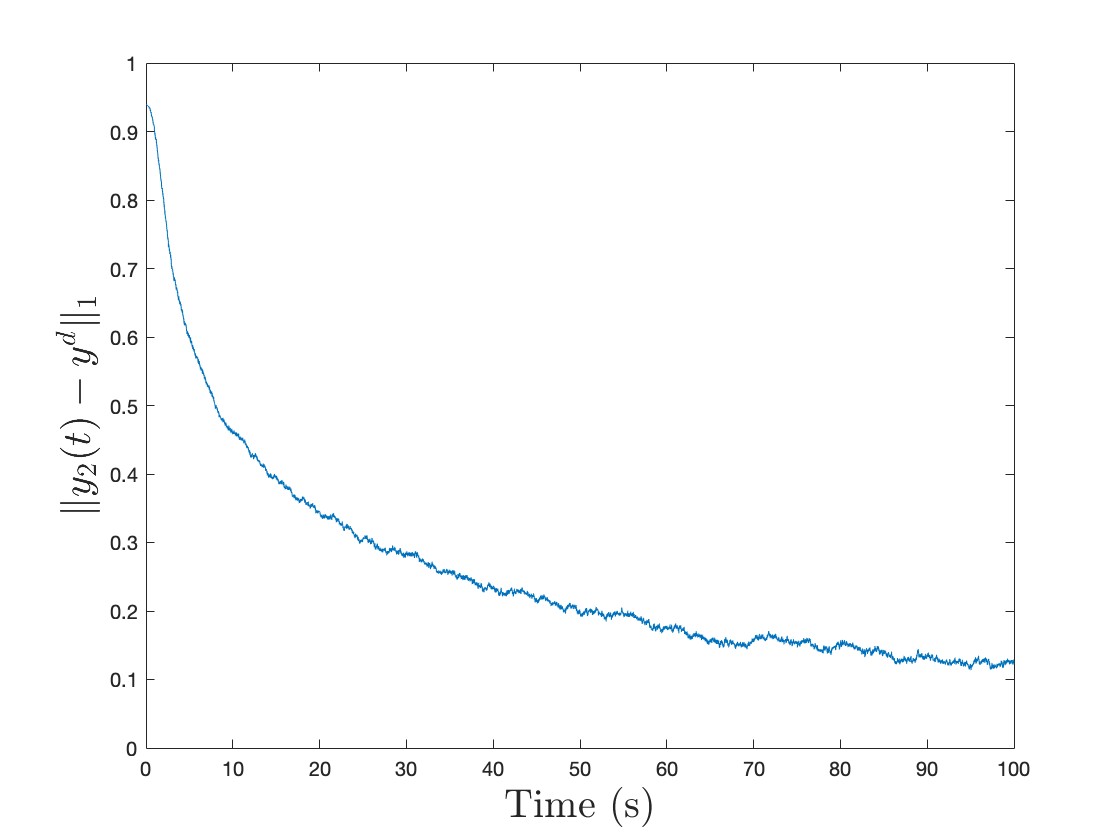}	
\caption{\kar{\textbf{(Brockett integrator with agent interactions)} Time evolution of the $L_1$ norm of the difference between the target distribution and the agent distribution \kar{in the motionless state} that evolves according to 
the semilinear PDE model \eqref{eq:clpPDEdsicon}.}} 
\label{fig:L1inin}
\end{figure}

The  positions  of $N_p=   1,000$ agents  are generated from a stochastic simulation of the SDE \eqref{eq:litSDE2} and plotted in  Fig. \ref{fig:brockettint} at three times $t$.  As  can be  seen  in this figure, 
at  time \kar{$t=100~s$}  the  swarm  is uniformly distributed over the sets $B_{\mathbf{x}_i}$ according to the target density. \kar{
\kar{Figure \ref{fig:L1inin} shows that} 
the $L_1$ norm of the difference between the current distribution \kar{of agents in the motionless state} and the target distribution decreases over time.} At \kar{$t=100~s$,} there are only $40$ agents in the state of motion. In contrast, when using the control approach without agent interactions (Section \ref{sec:subell}), all agents are constantly in motion. 
\kar{The $L_1$ norm at \kar{time $t = 100~s$}  is smaller for the case with \kar{agent} interactions \kar{(Fig. \ref{fig:L1inin})} than for the case without \kar{agent} interactions 
(Fig. \ref{fig:L1nonin}).}

While the interacting-agent control approach, unlike the control approach without interactions, enables agents to stop moving once the target density is reached (and therefore stop unnecessarily expending energy), 
the time until the interacting agents converge to the target density was found to be sensitive to the reaction constant $k$. Lower values of $k$, e.g. $k=10$, resulted in a slower rate of agent transitions to the motionless state. On the other hand,  if the value of $k$ was chosen too large, some of the agents prematurely transitioned to the motionless state in regions close to their initial positions.  
The performance of the interacting-agent control law was also affected by the parameter $\epsilon$. If $\epsilon$ was taken to be too small, for example $\epsilon = 0.1$, the agents did not converge to any distribution, but instead remained in a state of motion. This can be attributed to the fact that given a fixed value of $N_p$, the sum $c(\epsilon)\frac{1}{N_p}\sum_{i=1}^{N_p}K_{\epsilon}(\mathbf{x},\mathbf{x}_i(t))$ becomes a less accurate approximation of  the density $y(\mathbf{x},t)$ as the value of $\epsilon$ is decreased. On the other hand, if $\epsilon$ is taken to be too large, then the agent density converges to a regularized approximation  of the target density, rather than the target density itself. Due to space limitations, we do not include numerical results on the effects of these parameters here. 

\begin{example}
		\textbf{\kar{Underactuated} system on the sphere} 
\end{example}
		In this example, we consider a system on the $2$-dimensional sphere embedded in $\mathbb{R}^3$ given by $S^2 = \{\mathbf{x} \in \mathbb{R}^3;~ \mathbf{x}^T\mathbf{x} = 1\}$.
		\kar{We define	
  the following matrices $\mathbf{B}_i$, $i=1,2,3$:} 
   \begin{eqnarray}
\label{eq:genma}
\mathbf{B}_1 =\hspace{-1mm} 
\begin{bmatrix}
0 & -1 & 0 \\
1 &  0  & 0 \\
0 & 0  & 0 
\end{bmatrix}\hspace{-1mm},~
\mathbf{B}_2 =\hspace{-1mm}
\begin{bmatrix}
0 & 0& 1 \\
0 &  0  & 0 \\
-1 & 0  & 0 
\end{bmatrix}\hspace{-1mm},~
\mathbf{B}_3 =\hspace{-1mm}
\begin{bmatrix}
0 & 0& 0 \\
0 &  0  & -1 \\
0 & 1  & 0 
\end{bmatrix}\hspace{-1mm}. \nonumber 
\end{eqnarray}
Each of the matrices defines a vector field $\tilde{X}_i$ on $S^2$ given by 
	\begin{equation}
(\tilde{X}_i f)(\mathbf{x}) = \lim_{t \rightarrow 0}\frac{f(e^{t\mathbf{B}_i}\x) - f(\x)}{t}
\end{equation}
for all $\mathbf{x}\in S^2$ and all functions $f\in C^{\infty} (S^2)$. We assume that each agent can  control its motion along the vector fields ${\tilde{X}_1,\tilde{X}_2}$.  Note that in this case, the system is \kar{underactuated. This is because ${\rm span}\{\tilde{X}_1(\x), \tilde{X}_2(\x),[\tilde{X}_1(\x),\tilde{X}_2(\x)]\} =T_{\x} S^2$, where it can be verified that $[\tilde{X}_1,\tilde{X}_2](\x) = \tilde{X}_3(\x)$ for all $\x \in S^2$.} 
 
The kernel function is defined as
			\begin{equation}
		K_{\epsilon}(\mathbf{x},\mathbf{y}) =\begin{cases} 
		\exp{\frac{-1}{1-({\rm acos}(\mathbf{x}^T\mathbf{y})/\epsilon)^2}}~~{\rm if}~{\rm acos}(\mathbf{x}^T\mathbf{y})<\epsilon\\
			0 ~~{\rm otherwise}
			\end{cases}
			\end{equation}
	for all $\x ,\mathbf{y} \in S^2$.	The target density $y^d :S^2 \rightarrow \mathbb{R}_{\geq 0}$  (with respect to the Haar measure) is given by
		\begin{equation}
		y^d(\mathbf{x}) = 
		\begin{cases} 
		c ~~{\rm if} ~ x^2_i\geq 0.75~{\rm for} ~ i \in \{ 1,2,3\}\\
		0 ~~{\rm otherwise} 
		\end{cases}
		\end{equation}
		for all $\x ,\mathbf{y} \in S^2$, where $c$ is a normalization parameter chosen such that this function integrates to $1$.
		We set  $\epsilon=0.1$.

The  positions of $N_p=   1,000$ agents are generated from a stochastic simulation of the SDE \eqref{eq:litSDE2} and plotted in  Fig. \ref{fig:S2} at three times $t$. 
 The target density $y^d$ is depicted  on  the  surface  of  the  sphere  using  a color  density  plot. Blue  regions  are  assigned  a  low  target density of agents, while yellow regions are assigned a high target density. The agent positions are superimposed on   the  density  plot  to  enable  comparison between  the  actual  and  target densities.    Figure \ref{fig:S2} shows that  at  time $t= 100~s$,  the  distribution  of  the swarm over the sphere is close to the target density. As for the case of the Brockett integrator in Example \ref{eg:egbini2}, only a small fraction of the swarm ($62$ agents) is in the state of motion once the swarm has converged closely to the target density ($t = 100~s$). 

\begin{figure*}
	\centering
	\begin{subfigure}[t]{0.3\textwidth}
		\centering
		\includegraphics[width= \textwidth]{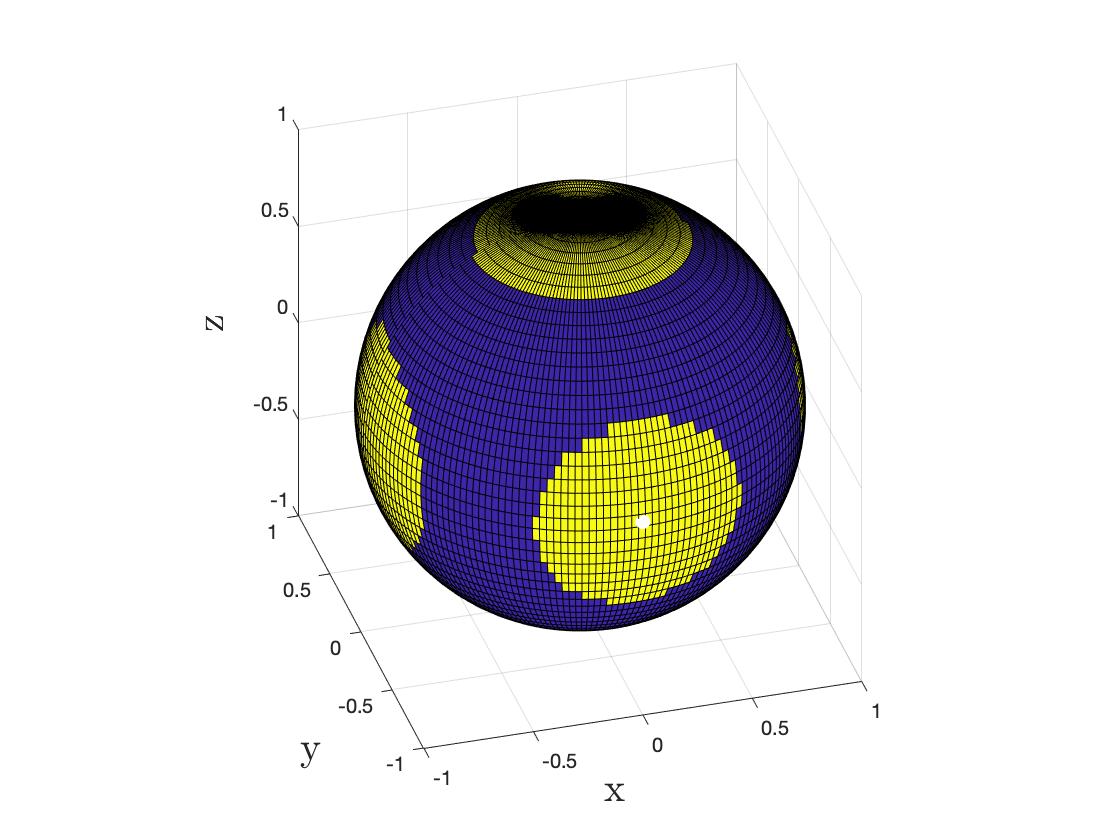}
		\caption{$t=0$ s}
		\label{subfig:Sphere1}
	\end{subfigure}%
	~
	\begin{subfigure}[t]{0.3\textwidth}
		\centering
		\includegraphics[width= \textwidth]{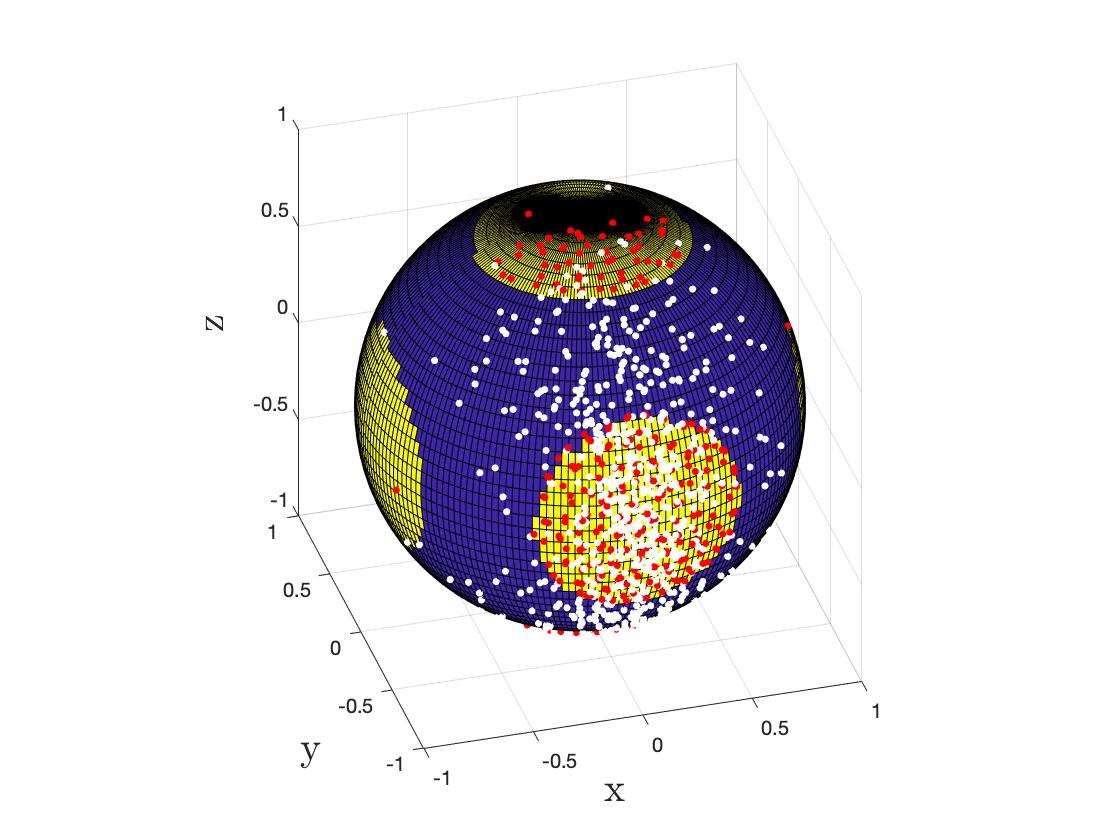}
		\caption{$t=10$ s}
			\label{subfig:Sphere3}
	\end{subfigure}
	~
	\begin{subfigure}[t]{0.3\textwidth}
		\centering
		\includegraphics[width= \textwidth]{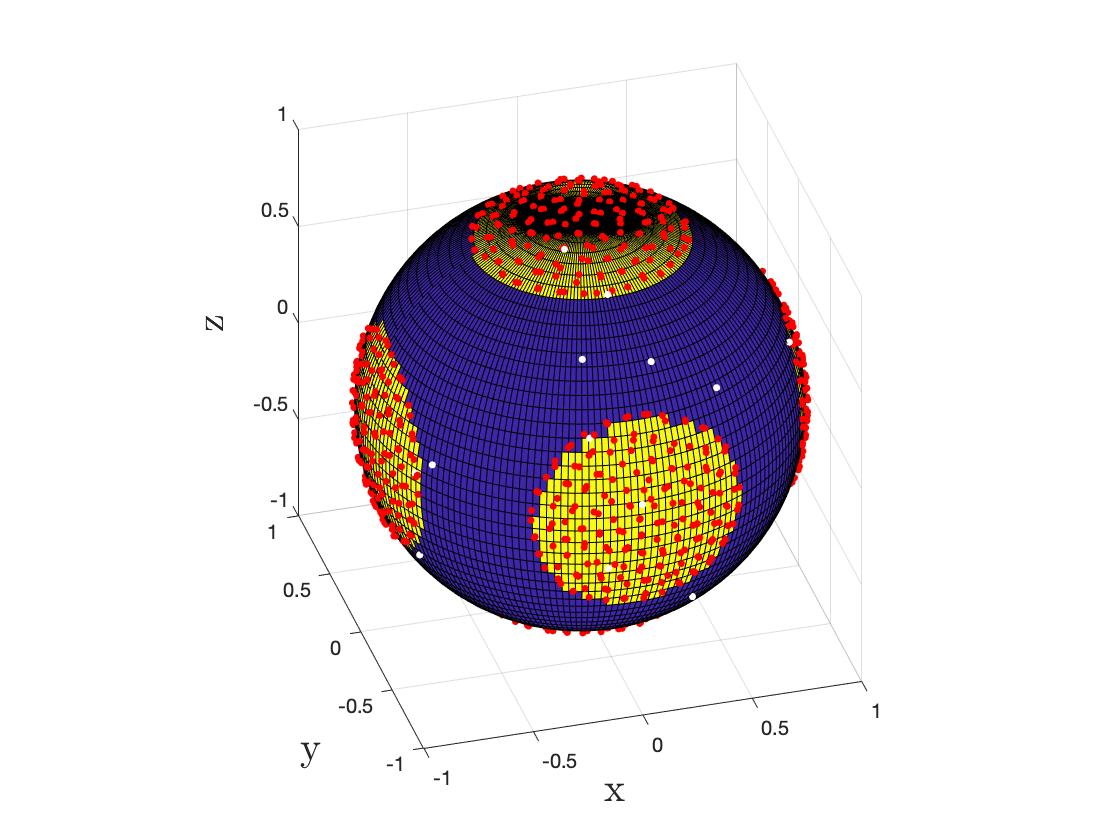}
		\caption{$t=100$ s}
				\label{subfig:Sphere3c}
	\end{subfigure}
	\caption{\textbf{(\kar{Underactuated} system on $S^2$ with agent interactions)}	Stochastic coverage of $S^2$ by $N= 1,000$ agents at three 
 times $t$, following the semilinear PDE model \eqref{eq:clpPDEdsicon}. White agents  are in the motion state; 
 red agents are in the motionless state.} 
\label{fig:S2}
\end{figure*}
	
	\section{CONCLUSION} \label{sec:conc}
	In this article, we have generalized our diffusion-based multi-agent coverage approach to the case where the agents have nonholonomic dynamics. We established exponential stability of the resulting Kolmogorov forward equation, whose generator is a hypoelliptic operator. In addition, we constructed a hybrid switching diffusion process of mean-field type such that the probability density of the random variable that represents the distribution of a swarm can be stabilized to a target density that is not necessarily positive everywhere on the domain. One possible direction for future work is to investigate 
 the tradeoffs between control laws with and without agent interaction. 
 \kar{Another is to incorporate pairwise
 interactions between agents that model collision avoidance maneuvers, which would require the inclusion of corresponding interaction terms in the PDE model.} 
One could also investigate the convergence of the $N$-agent system of hybrid switching diffusion processes to the solution of the semilinear PDE.

    \bibliographystyle{plain}
	\bibliography{cdcref_v2}
	\begin{IEEEbiography}[{\includegraphics[width=1in,height=1.25in,clip,keepaspectratio]{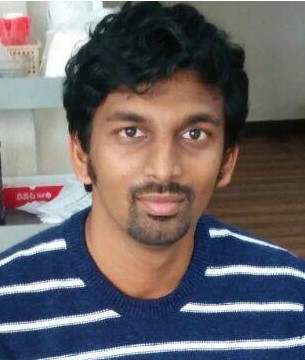}}]{Karthik Elamvazhuthi}
received the B.Tech. degree in mechatronics engineering from the Manipal Institute of Technology, Manipal, India in 2011. He received the M.S. and Ph.D. degrees in mechanical engineering from Arizona State University, Tempe, AZ, in 2014 and 2019, respectively.  He was a CAM Assistant Adjunct Professor at the Department of Mathematics, University of California, Los Angeles from 2019--2022. Currently, he is a postdoctoral scholar at the  Department of Mechanical Engineering, University of California, Riverside. His research interests include modeling and control of robotic swarms using methods from partial differential equations and stochastic processes.
\end{IEEEbiography}

\vspace{-13mm}
\begin{IEEEbiography}[{\includegraphics[width=1in,height=1.25in,clip,keepaspectratio]{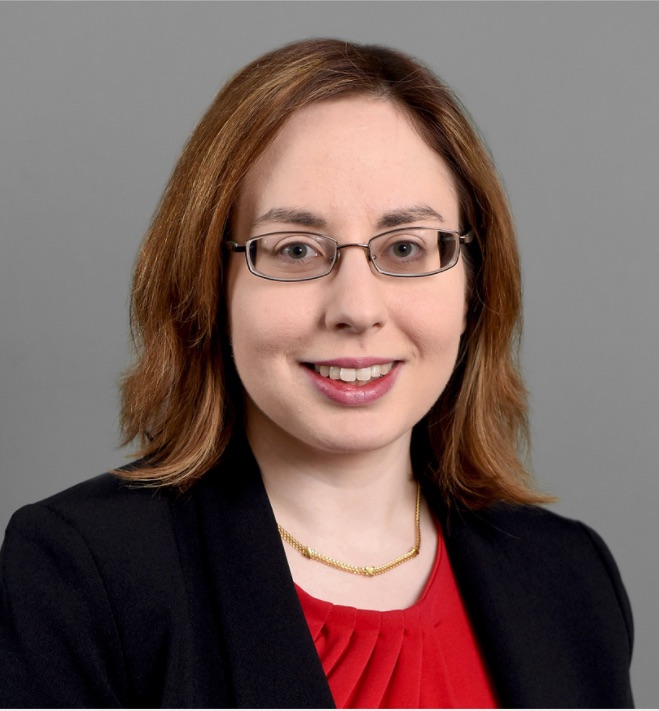}}]{Spring Berman}
(M'07) received the M.S.E. and
Ph.D. degrees in Mechanical Engineering and
Applied Mechanics from the University of Pennsylvania in 2008 and
2010, respectively.
From 2010 to 2012, she was a Postdoctoral Fellow in Computer Science at Harvard
University. She was then an Assistant Professor of Mechanical and Aerospace Engineering
at Arizona State University, where she has been an Associate Professor since 2018.
Her research focuses on the analysis of behaviors in biological and engineered collectives and the synthesis of control strategies for robotic swarms.
\end{IEEEbiography}
\end{document}